\numberwithin{equation}{section} 
\newtheorem{theorem}{Theorem}
\numberwithin{theorem}{section}
\newtheorem{lemma}{Lemma}
\numberwithin{lemma}{section}
\newtheorem{proposition}{Proposition}
\numberwithin{proposition}{section}
\newtheorem*{corollary}{Corollary}
\theoremstyle{definition}
\newtheorem{definition}{Definition}
\numberwithin{definition}{section}
\theoremstyle{remark}
\newcommand{\rd}{\mathbf{R}^d}
\newcommand{\rr}{\mathbf{R}}
\newcommand{\qp}{\mathbf{Q}_p}
\newcommand{\one}{\mathbf 1}
\newcommand{\fo}{{\mathcal F}}
\newcommand{\foi}{{\mathcal F}^{-1}}
\newcommand{\gl}{\lambda}
\newcommand{\bn}{B_n}
\newcommand{\bmn}{B_{-n}}
\newcommand{\qmn}{q^{-n}}
\newcommand{\dn}{\mathcal{D}_n}
\newcommand{\pna}{\ensuremath{P_n^\alpha}}
\newcommand{\beq}{\begin{equation}}
\newcommand{\eeq}{\end{equation}}
\newcommand{\beqstar}{\begin{equation*}}
\newcommand{\eeqstar}{\end{equation*}}	
\newcommand{\ptnt}{\ensuremath{(p_{t,n})_{t>0}}}
\newcommand{\psns}{\ensuremath{(p_{s,n})_{s>0}}}
\newcommand{\expta}{\ensuremath{e^{-t|\xi|^\alpha}}}
\newcommand{\ptn}{\ensuremath{p_{t,n}}}
\newcommand{\psn}{\ensuremath{p_{s,n}}}
\newcommand{\Pn}{\ensuremath{\mathbf{P}^n}}
\newcommand{\etpan}{\ensuremath{e^{-tP^\alpha_n}}}
\newcommand{\knt}{K^n_t}
\newcommand{\ethn}{e^{-tH_n}}
\DeclareMathOperator{\supp}{supp}
\DeclareMathOperator{\ave}{ave}
\begin{document}
	\title[Brownian Motion]{Brownian Motion and Finite Approximations of Quantum Systems over Local Fields}
	\author{Erik M. Bakken}\address{Department of Mathematical Sciences\\The Norwegian
		University of Science and Technology\\7491 Trondheim\\Norway}
	\email{erikmaki@math.ntnu.no}
	
	\author{Trond Digernes}
	\address{Department of Mathematical Sciences\\The Norwegian
		University of Science and Technology\\7491 Trondheim\\Norway}
	\email{digernes@math.ntnu.no}
	
	\author{David Weisbart}
	\address{Department of Mathematics\\University of California\\Riverside, CA 92521}
	\email{weisbart@math.ucr.edu}

	\thanks{ The first named author gratefully acknowledges support from ``Norges Tekniske H\o gskoles fond" and ``Forsknings- og undervisningsfondet i Trondheim" at NTNU, and from the Math Department at UCLA. The second named author had partial support from the Norwegian Research Council during parts of this research. Both the first and the second named author received support from the Norwegian University of Science and Technology (NTNU)} 
\begin{abstract}
	We give a stochastic proof of the finite approximability of a class of Schr{\"o}dinger operators over a local field, thereby completing a program of establishing in a non-Archimedean setting corresponding results and methods from the Archimedean (real) setting. A key ingredient of our proof is to show that Brownian motion over a local field can be obtained as a limit of random walks over finite grids. Also, we prove a Feynman-Kac formula for the finite systems, and show that the propagator at the finite level converges to the propagator at the infinite level.
\end{abstract}
\keywords{Finite approximations; quantum systems; local fields; Brownian motion; convergence of measures.}
\subjclass[2010]{81Q65, 60B10, 47G30, 41A99}
\maketitle
\tableofcontents


\section{\large \bf  Introduction}
This article grew out of a desire to explore the utility and effectiveness of stochastic methods in a non-Archimedean setting. In a recent article two of us gave a functional analytic proof of the finite approximability of the Schr\"odinger operator over a local field \cite{BD15}. In the present article we give a stochastic proof of the same. The inspiration comes from \cite{DVV94}, where both a functional analytic and stochastic proof was given for the corresponding theorem over $\rd$. In both cases the stochastic method gave a stronger convergence result for the eigenfunctions (at the expense of a mild growth condition on the potential).

The results of \cite{DVV94} were later partially extended to a setting of locally compact abelian groups in \cite{AGK00}. However, the proofs of \cite{AGK00} used non-standard analysis. We have found it worthwhile to present proofs which do not rely on non-standard methods.

Non-Archimedean stochastics has been extensively explored by several authors. Kochubei has devoted a whole book to the subject \cite{Koc01}, and the long list of references therein testifies to an active field of research. For articles on non-Archimedean random walks specifically, see, e.g., \cite{AK94,AK99} and \cite{CZ13}. Of particular interest to us is the probability density induced by the non-Archimedean ``Laplacian" over a local field. The existence of this density was obtained independently by several authors, among them Kochubei \cite{Koc91} and Varadarajan \cite{Var97} (see \cite[Ch. 4]{Koc01} and \cite[Ch. XVI]{VVZ94} for further references). In this article we show that an analogous density can be defined at the finite level, and that the associated objects at the finite level converge to the corresponding objects at the infinite level.

Our setting is as follows: $K$ is a local field with canonical absolute value $|\cdot|$, and $H=P^\alpha+V$ is a Schr\"odinger operator, densely defined and self-adjoint on a suitable domain in $L^2(K)$. $V$ is the potential given as $(Vf)(x)=v(x)f(x)$ with $v:K\to[0,\infty)$ a continuous function such that $v(x)\to\infty$ as $|x|\to\infty$. $P=\foi Q\fo$ where $(Qf)(x)=|x|f(x)$, $\fo$ is the Fourier transform, and $\alpha$ is a positive real number. It is customary to refer to $P^\alpha$ as the (negative of) the non-Archimedean Laplacian for any $\alpha>0$, although it is only $\alpha=2$ which gives a direct analog. Our task is to construct finite models $X_n$ for $K$ and corresponding Schr\"odinger operators $H_n=P_n^\alpha+V_n$ on $L^2(X_n)$ such that the eigenvalues and eigenfunctions for $H_n$ converge to the corresponding objects for $H$ (in a manner to be made precise below).

The structure of the paper is as follows: In Section~2 we collect the facts we need about local fields and the finite models. In Section~3 we construct probability densities for the finite models and prove some basic facts about them. In Section~4 we use the results from Section~3 to construct measures of the Wiener type over the finite models and prove that both the conditioned and the unconditioned versions converge to the corresponding measures over the local field. In Section~5 we prove a theorem of the Feynman-Kac type associated with the stochastics at the finite level. In Section~6 we use our results to give a stochastic proof of the finite approximability of the Schr\"odinger operator over a local field.

\section{\large \bf  Basics about Local Fields and Finite Models}\label{localfields}
We recall here, without proofs, some quick facts about local fields and their finite models. For details see \cite[Section 2]{BD15}
\subsection{Local Fields}
By a local field we mean a non-discrete, totally disconnected, locally compact field. It comes equipped with a canonical absolute value which is induced by the Haar measure, and which we denote by $|\cdot|$.
There are two main types of local fields:\\
\emph{Characteristic zero.}
The basic example of a local field of characteristic zero is the $p$-adic field $\qp$ ($p$ a prime number). Every local field of characterisitic zero is a finite extension of $\qp$ for some $p$.\\
\emph{Positive characteristic.}
Every local field of positive characteristic $p$ is isomorphic to the field $\mathbf F_q((t))$ of Laurent series over a finite field $\mathbf F_q$, where $q=p^f$ for some positive integer $f\geq1$. 

Let $K$ be a local field with canonical absolute value $|\cdot|$. We use the following standard notation: 
\[
O = \{x \in K : |x|\leq 1\},\quad P = \{x \in K : |x| < 1\},\quad U = O\setminus P.
\]
$O$ is a compact subring of $K$, called the \emph{ring of integers}. It is a discrete valuation ring, i.e., a principal ideal domain with a unique  maximal ideal. $P$ is the unique non-zero maximal ideal of $O$, called the \emph{prime ideal}, and any element $\beta\in P$ such that $P=\beta O$ is called a \emph{uniformizer} (or a \emph{prime element}) of $K$. For $\qp$ one can choose $\beta=p$, and for $\mathbf F_q((t))$ one can take $\beta=t$.\\
The set $U$ coincides with the \emph{group of units} of $O$. The quotient ring $O/P$ is a finite field. If $q=p^f$ is the number of elements in $O/P$ ($p$: a prime number, $f$: a natural number) and $\beta$ is a uniformizer, then $|\beta|=1/q$, and the range of values of $|\cdot|$ is $\{q^N: N\in\mathbf Z\}$. Further, if $S$ is a complete set of representatives for the residue classes in $O/P$, every non-zero element $x\in K$ can be written uniquely in the form:
\[
x=\beta^{-m}(x_0+x_1\beta+x_2\beta^2+\cdots),
\]
where $m\in\mathbf Z$, $x_j\in S$, $x_0\not\in P$. With $x$ written in this form, we have $|x|=q^m$.

\subsubsection{Characters and Fourier Transform}\label{ft}
We fix a Haar measure $\mu$ on $K$, normalized such that $\mu(O)=1$, and define the Fourier transform $\fo$ on $K$ by 
\[(\fo f)(\xi)=\int_K f(x)\chi(-x\xi)\,dx\,,\]
where $\chi$ is a rank zero\footnote{The rank of a character 
	$\chi$ is defined as the largest integer $r$ such that $\chi\vert_{B_r}\equiv 1$. See \cite{BD15} for explicit construction of such characters in the various cases.} character on $K$, and $dx$ refers to the Haar measure just introduced. 
Any Fourier transform based on a rank zero character is an $L^2$-isometry with respect to the normalized Haar measure (since $\fo\one_O=\one_O$ for any such Fourier transform $\fo$; here and elsewhere $\one$ denotes characteristic function). Thus $\foi=\fo^*$ is given by
\[(\foi f)(x)=(\fo^* f)(x)=\int_K f(y)\chi(xy)\,dy.\]
For the rest of this article $\chi$ will denote a fixed character of rank zero on a local field $K$, and $\fo$ will denote the corresponding Fourier transform.

\subsection{Finite Models}\label{finapprox}
Our object of  study is a version of the Schr\"odinger operator, defined for $\qp$ in the book of Vladimirov, Volovich, Zelenov \cite{VVZ94}, and generalized to an arbitrary local field $K$ by Kochubei in \cite{Koc01}:
\[H=P^\alpha+V\,,\]
regarded as an operator in $L^2(K)$. 
Here $\alpha>0$ \footnote{For a direct analog of the Laplacian one should set $\alpha=2$. However, as is customary in the non-Archimedean setting, one works with an arbitrary $\alpha>0$, since the qualitative behavior of the operator $H$ does not change with $\alpha>0$.}, $P=\foi Q\fo$ where $(Qf)(x)=|x| f(x)$ is the position operator\footnote{Our operator $P$ corresponds to the operator $D$ in \cite{VVZ94} and \cite{Koc01}.}, and $\fo$ is the Fourier transform on $L^2(K)$. $V$ (the potential) is multiplication by a  function: $(Vf)(x)=v(x)f(x)$. We assume $v$ to be non-negative and continuous and that $v(x)\to\infty$ as $|x|\to\infty$. 

The operator $H$ has been thoroughly analyzed (see \cite{VVZ94} for $K=\qp$ and \cite{Koc01} for general $K$): It is self-adjoint on the domain $\{f\in L^2(K):P^\alpha f+Vf\in L^2(K)\}$, has discrete spectrum, and all eigenvalues have finite multiplicity. Our next task is to set up a finite model for this operator.

Keep the above notation, i.e.: $K$ is a local field, $q=p^f$ is the number of elements in the finite field $O/P$, $\beta$ is a uniformizer, and $S$ is a complete set of representatives for $O/P$. For each integer $n$ set $\bn=\beta^{-n}O=\text{ball of radius $q^n$}$. Then $\bn$ is an open, additive subgroup of $K$. For $n>0$ we set $G_n=\bn/\bmn$. Then $G_n$ is a finite group with $q^{2n}$ elements. Since the subgroup $\bmn$ will appear quite frequently, we will often denote it by $H_n$, to emphasize its role as a subgroup. So $H_n=\bmn=\beta^nO=\text{ball of radius $q^{-n}$}$, and $G_n=H_{-n}/H_n$. Each element of $G_n$ has a unique representative of the form $a_{-n}\beta^{-n}+a_{-n+1}\beta^{-n+1}+\dots+a_{-1}\beta^{-1}+ a_0+a_1\beta+\dots+ a_{n-2}\beta^{n-2}+a_{n-1}\beta^{n-1}$, $a_i\in S $. We denote this set by $X_n$, and call it \emph{the canonical set of representatives} for $G_n$; we also give it the group structure coming from its natural identification with $G_n$.

Let again $\mu$ denote the normalized Haar measure on $K$ (cfr.\ \ref{ft}). Since $H_n$ is an open subgroup of $K$, we obtain a Haar measure $\mu_n$ on  $G_n=H_{-n}/H_n$  
by setting $\mu_n(x+H_n)=\mu(x+H_n)=\mu(H_n)=q^{-n}$, for $x+H_n\in G_n$.

So each ``point" $x+H_n$ of $G_n$ has mass $q^{-n}$, and the total mass of $G_n$ is $q^{2n}\cdot q^{-n}=q^n$. For $X_n\simeq G_n$ this means that each $x\in X_n$ has mass $q^{-n}$, and the total mass of $X_n$ is $q^n$.

With this choice of Haar measure on $G_n$ the linear map which sends the characteristic function of the point $x+H_n$ in $G_n$ to the characteristic function of the subset $x+H_n$ of $K$, is an isometric imbedding of $L^2(G_n)$ into $L^2(K)$. We regard $L^2(G_n)$ as a subspace of $L^2(K)$ via this imbedding, and operators on $L^2(G_n)$ are extended to all of $L^2(K)$ by setting them equal to 0 on the orthogonal complement of $L^2(G_n)$ in $L^2(K)$.

We introduce the following subspaces of $L^2(K)$, along with their orthogonal projections:
\begin{itemize}
	\item $\mathcal C_n=\{f\in L^2(K)|\supp(f)\subset B_{n}\}.$ The corresponding orthogonal projection is denoted by $C_n$ and is given by: $C_nf=\one_{B_n}f$.
	\item $\mathcal S_n=\{f\in L^2(K)|\text{$f$ is locally constant of index $q^{-n}$}\}.$ The corresponding orthogonal projection is denoted by $S_n$ and is given by:\\ $(S_nf)(x)=q^n\int_{H_n}f(x+y)\,dy
	=\frac{1}{\mu(H_n)}\int_{H_n}f(x+y)\,dy=\ave(f,n,x)$, where we have introduced the notation $\ave(f,n,x)$ for the average value of $f$ over $x+H_n$.
	\item $\mathcal D_n=\mathcal C_n\cap\mathcal S_{n}.$ 
	The corresponding orthogonal projection is denoted by $D_n$.
\end{itemize}
Note that $L^2(G_n)$ is mapped onto $\mathcal D_n$ via the isometric imbedding mentioned above. Thus $L^2(G_n)$ can be identified with the set of functions on $K$ which have support in $B_n$ and which are invariant under translation by elements of $H_n\,(=B_{-n})$.

Of course, by using the identification $x\in X_n \longleftrightarrow x+H_n\in G_n$, all of the above statements remain valid when $G_n$ is replaced by $X_n$

We now collect the basic facts and conventions for the finite level operators (for details, see \cite{BD15}):
\begin{align*}
	&D_n=C_nS_n=S_nC_n\,.\\
	&\fo\mathcal C_n=\mathcal S_{n},\quad \fo\mathcal S_n=\mathcal C_{n},\text{ and hence }\fo\mathcal D_n=\mathcal D_n\,.\\
	&\fo C_n=S_n\fo,\quad \fo S_n=C_n\fo,\quad \fo D_n=D_n\fo\,.\\
	&\text{Finite Fourier transform $\fo_n$: }\\
	&\qquad(\mathcal{F}_nf)(x)=\qmn\sum_{y\in X_n}f(y)\chi(-xy),\quad x\in X_n,\quad f\in L^2( X_n)\,.\\
	&\fo\vert_{\dn}=\fo_n, \text{ i.e., } \fo_n=\fo D_n=D_n\fo\,.
\end{align*}
\subsubsection{Dynamical Operators at the Finite Level}\label{dynop}
For the finite versions of the dynamical operators we could, as in \cite{BD15}, use their compressions by $D_n$, i.e., $V'_n=D_nVD_n$, $Q'_n=D_nQD_n$, $P'_n=D_nPD_n=\foi_n Q'_n\fo_n$, and $H'_n=D_nHD_n=D_nP^\alpha D_n+V'_n$. However, since our dynamical operators are defined by continuous functions, it will be more convenient to descend to the finite level via the following operator
\beq\label{en}
E_nf=\sum_{y\in X_n}f(y)\one_{y+H_n},\quad f\in C(K).
\eeq
This is a linear idempotent with range $\mathcal D_n\simeq L^2(X_n)$. It is continuous with respect to the topology of uniform convergence on compacta on $C(K)$ (but discontinuous w.r.t.\ the $L_2$-norm on $C(K)\cap L^2(K)$).
Note that $\lim_{n\to\infty}(D_nf)(x)=\lim_{n\to\infty}(E_nf)(x)$ if $f$ is continuous. The finite version of a function $f$ on $K$ can be thought of either as an element of $\mathcal D_n$ according to \eqref{en}, or as a function on the grid $X_n$, where it is simply given by its restriction $f\vert_{X_n}$. We will switch between these two points of view depending on what seems more convenient in a given situation. When working on $X_n$ we will often make no notational distinction between a function on $K$ and its restriction to $X_n$.
For a function of two variables $f\in C(K\times K)$ we similarly have
\beq\label{enen} 
(E_n\otimes E_n)f=\sum_{x,y\in X_n}f(x,y)\one_{(x+H_n)\times(y+H_n)},\quad f\in C(K\times K),
\eeq
which can be thought of as the restriction of $f$ to $X_n\times X_n$.

For the finite versions $Q_n,P_n,H_n$ of the operators $Q,P,H$, we take
\begin{align}\label{finops}\begin{split} 
(Q_nf)(x)&=|x|f(x),\quad f\in L^2(X_n),\quad x\in X_n\\
P_n&=\foi_n Q_n\fo_n\\
(V_nf)(x)&=v_n(x)f(x),\quad v_n=v|_{X_n},\quad f\in L^2(X_n)\\
H_n&=P_n^\alpha + V_n,\quad \alpha>0
\end{split}
\end{align}
Note that the finite operators $Q_n,P_n,H_n$ can also be viewed as operators on $L^2(K)$ via the identification of $L^2(X_n)$ with $\mathcal D_n=D_nL^2(K)$.

\section{\large \bf  Stochastics at the Finite Level}\label{sec:finitelevel}
We start by recalling the connection between Brownian motion and the heat equation in the conventional setting over $\rr$. Here Brownian motion is described by a family of Wiener measures $(W_x)_{x\in\rr}$, which in turn are generated by the probability densities\footnote{We are using self-dual Haar measure $dz/\sqrt{2\pi}$ on $\rr$.} 
$p_t(z)=\frac{1}{\sqrt{2t}}e^{-z^2/4t}$, $z\in\rr$, $t>0$. The relation 
\[
\int_{C([0,\infty):\rr)}f(\omega(t))dW_x(\omega)=\frac{1}{\sqrt{2\pi}}\int_\rr f(y)p_t(x-y)dy
\]
holds for all ``observables" $f$ belonging to a suitable class of functions on $\rr$. The function $u(x,t)=p_t(x)$ is a fundamental solution of the heat equation
\begin{align}\label{heateq}
	\frac{\partial u}{\partial t}(x,t)&=\Delta u(x,t)\\
	\intertext{which by Fourier transform becomes}
	\label{fheateq}
	\frac{\partial\hat u}{\partial t}(\xi,t)&=-\xi^2 \hat u(\xi,t)\\
	\intertext{and so}
	\label{solfheateq}
	\hat p_t(\xi)&=\hat{u}(\xi,t)=e^{-t\xi^2},
\end{align}
taking into account that $p_t(x)$ is a fundamental solution. The $(p_t)_{t>0}$ form a semi-group under convolution, and thus give rise to a semi-group of operators $(T_t)_{t>0}$ by $T_tf=p_t*f$. The infinitesimal generator of $(T_t)_{t>0}$ is the Laplacian $\Delta$ (on a suitable domain), so we can also write $e^{t\Delta}f=p_t*f$.

Over a local field $K$ one still lets $t$ be a positive real parameter, but the role of the Laplacian $\Delta$ is played by the operator $-P^\alpha$ (remember that $\Delta=-P^2$ over $\rr$), and so the heat equation \eqref{heateq} becomes
\begin{align}\label{locheateq}
	\frac{\partial u}{\partial t}(x,t)&=-(P^\alpha u)(x,t),\quad \text{i.e., }\frac{\partial u}{\partial t}(x,t)=-(\foi Q^\alpha\fo u)(x,t)\,,\\
	\intertext{thus}
	\label{locfheateq}
	\frac{\partial \hat u}{\partial t}(\xi,t)&=-|\xi|^\alpha \hat u(\xi,t)\,,\\
	\intertext{giving}
	\label{locsolfheateq}
	\hat{u}(\xi,t)&=e^{-t|\xi|^\alpha}
\end{align}
by a similar normalization as above. In analogy with the real case one now defines 
\beq \label{locdensity}
p_t(x)=(\foi e^{-t|\cdot|^\alpha})(x)=\int_K e^{-t|\xi|^\alpha}\chi(x\xi)\,d\xi.
\eeq 
The $(p_t)_{t>0}$ again form a semi-group under convolution (since clearly $(\hat p_t)_{t>0}$ form a semi-group under multiplication), and $\int_K p_t(x)dx=1$ for all $t>0$ (since $\hat p_t(0)=1$ for all $t>0$). Thus the only thing missing for the $(p_t)_{t>0}$ to generate a Wiener measure as above, is the positivity of the $(p_t)_{t>0}$. And this has been proved by several authors in various settings (see \cite[Ch.\ 4]{Koc01} and references therein, and \cite{Var97}).

For our finite model we pursue the above analogy and define
\beq
\ptn(x)=(\foi_ne^{-t|.|^\alpha})(x),\quad x\in X_n
\eeq
in analogy with \ref{locdensity}. Here we regard
$e^{-t|.|^\alpha}$ as a function on $X_n$ as explained above (cfr.~\ref{dynop}).
We still have
\beq\label{etpanf}
\etpan f=\ptn*f
\eeq
since
\begin{align*}
	(\etpan f)(x)&=(e^{-t\foi_n Q_n^\alpha\fo_n}f)(x)=(\foi_n e^{-tQ_n^\alpha}\fo_n f)(x)\\
	&=(\foi_n(e^{-t|\cdot|^\alpha}\fo_n f))(x)=(\foi_n(e^{-t|\cdot|^\alpha})*f)(x)\\
	&=(\ptn*f)(x),
\end{align*}

where the convolution $*$ now is over $X_n$:
\[
(f*g)(x)=\int_{X_n}f(y)g(x-y)d\mu_n(y)=q^{-n}\sum_{y\in X_n}f(y)g(x-y).
\]
The one-parameter family \ptnt\ is a semi-group under convolution (since clearly $(\hat{p}_{t,n})_{t>0}$ is a multiplicative semi-group), and $\int_{X_n}\ptn(x)dx=1$ for all $n$ and for all $t>0$ (since $\hat p_{t,n}(0)=1$). It remains to show that the \ptn\ are positive.
\begin{lemma}\label{positivity}
	We have $\ptn(x)>0$ for all $x\in X_n$, all $n$ and all $t>0$, hence \ptnt\ defines a probability distribution over $X_n$.
\end{lemma}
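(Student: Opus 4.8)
The plan is to treat everything as finite matrices indexed by the points of $X_n$ and show that the semigroup operator $\etpan$ has all of its matrix entries strictly positive. Since $\etpan f=\ptn*f$ and $(f*g)(x)=\qmn\sum_{y\in X_n}f(y)g(x-y)$, the matrix of $\etpan$ in the basis of point masses on $X_n$ is $(\etpan)_{x,y}=\qmn\ptn(x-y)$, so strict positivity of every entry is exactly the assertion of the lemma. The first step is to write $P_n^\alpha$ as a convolution operator: expanding $P_n^\alpha=\foi_n Q_n^\alpha\fo_n$ gives
\[
(P_n^\alpha f)(x)=\sum_{y\in X_n}G_n(x-y)f(y),\qquad G_n(z)=q^{-2n}\sum_{\xi\in X_n}|\xi|^\alpha\chi(\xi z)=\qmn\bigl(\foi_n(|\cdot|^\alpha)\bigr)(z),
\]
so that $P_n^\alpha$ has matrix $(P_n^\alpha)_{x,y}=G_n(x-y)$.

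The heart of the proof is to show that $G_n(z)<0$ for every $z\in X_n\setminus\{0\}$, and I would do this by computing $G_n$ explicitly. On $X_n$ the radial function $|\cdot|^\alpha$ is a finite linear combination of ball indicators,
\[
q^{n\alpha}\one_{\bn}-q^{(1-n)\alpha}\one_{\bmn}+\sum_{k=1-n}^{n-1}\bigl(q^{k\alpha}-q^{(k+1)\alpha}\bigr)\one_{B_k},
\]
an Abel-summation identity one checks directly on each sphere $\{\xi\in X_n:|\xi|=q^m\}$. Every ball $B_k$ with $-n\le k\le n$ lies in $\dn$, so $\foi_n\one_{B_k}=\foi\one_{B_k}=q^k\one_{B_{-k}}$ by the standard local-field computation (with the degenerate cases $\foi_n\one_{\bn}=q^n\one_{\{0\}}$ and $\foi_n\one_{\bmn}=\qmn\one_{X_n}$). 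Substituting and multiplying by $\qmn$ gives, for $z\neq0$,
\[
G_n(z)=-q^{(1-n)\alpha-2n}+\qmn(1-q^\alpha)\sum_{k=1-n}^{n-1}q^{k(\alpha+1)}\one_{B_{-k}}(z).
\]
Since $\alpha>0$ forces $1-q^\alpha<0$, every term on the right is $\le0$ while the first is strictly negative, so $G_n(z)<0$.

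The rest is soft. Because $|0|^\alpha=0$, the operator $P_n^\alpha$ kills the constant function, hence $\sum_{z\in X_n}G_n(z)=0$ and therefore $(P_n^\alpha f)(x)=\sum_{y\neq x}\bigl(-G_n(x-y)\bigr)\bigl(f(x)-f(y)\bigr)$. Thus $M:=-P_n^\alpha$ is a matrix with strictly positive off-diagonal entries $-G_n(x-y)$ and constant negative diagonal $-G_n(0)$ --- the generator of a continuous-time Markov chain on $X_n$ in which every transition rate is positive. For $\lambda$ large enough the matrix $N:=M+\lambda I$ has all entries positive, so
\[
\etpan=e^{tM}=e^{-\lambda t}\sum_{k\ge0}\frac{t^k}{k!}N^k\ge e^{-\lambda t}(I+tN)
\]
entrywise, and $I+tN$ has all entries strictly positive for every $t>0$. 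Hence $\qmn\ptn(x-y)>0$ for all $x,y\in X_n$, i.e.\ $\ptn(z)>0$ for all $z\in X_n$; together with $\int_{X_n}\ptn(x)\,dx=1$, already noted, this says $\ptn$ is a probability distribution on $X_n$.

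The step I expect to be the real obstacle is the explicit evaluation of $G_n$ --- setting up the telescoping identity for $|\cdot|^\alpha$ on $X_n$ and keeping the finitely many boundary terms straight --- after which the positivity is a routine Metzler-matrix argument. An alternative that sidesteps the computation is to observe that $-P_n^\alpha$ is the generator of a Dirichlet form on $L^2(X_n)$ with a positive jump kernel, mirroring the integral representation of $P^\alpha$ on $L^2(K)$; but the direct computation seems shorter and entirely self-contained.
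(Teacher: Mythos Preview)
Your argument is correct and takes a genuinely different route from the paper. The paper works directly with the semigroup: it expands $p_{t,n}(x)=\int_{B_n}e^{-t|\xi|^\alpha}\chi(x\xi)\,d\xi$ by slicing $B_n$ into spheres and Abel-summing to get
\[
p_{t,n}(x)=q^{-n}\bigl(1-e^{-tq^{\alpha(-n+1)}}\bigr)+e^{-tq^{\alpha n}}\!\int_{B_n}\chi(x\xi)\,d\xi+\sum_{i=-n+1}^{n-1}\bigl(e^{-tq^{\alpha i}}-e^{-tq^{\alpha(i+1)}}\bigr)\!\int_{B_i}\chi(x\xi)\,d\xi,
\]
and then observes that each $\int_{B_i}\chi(x\xi)\,d\xi\ge0$ while the first term is strictly positive. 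You instead work at the level of the \emph{generator}: the same Abel-summation trick, applied to $|\xi|^\alpha$ rather than to $e^{-t|\xi|^\alpha}$, shows that $-P_n^\alpha$ is the $Q$-matrix of an irreducible continuous-time Markov chain on $X_n$, and positivity of $e^{-tP_n^\alpha}$ follows from the Metzler-matrix exponential. Your approach is arguably more transparent probabilistically and avoids ever touching $p_{t,n}$ explicitly. The paper's approach, however, buys something you do not: the explicit formula above for $p_{t,n}$ is recorded as equation~(4.1) and is used repeatedly afterwards (uniform boundedness in Lemma~4.1, uniform convergence in Lemma~4.2, and the moment estimate in Proposition~4.1). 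So while your proof of the lemma stands on its own, the paper's computation is doing double duty.
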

\begin{proof}
	Remember that functions in $L^2(X_n)$ can be thought of as functions on $K$ which are supported in $B_n$ and which are locally constant of index $\qmn$. We use that picture here. For example, the function $\xi\to \expta$ is interpreted as the function $\sum_{\xi\in X_n}\expta\one_{\xi+H_n}$ (cfr.~\ref{dynop}).\\ Below we also use the notation $S_i=\{x\in K:|x|=p^i\}=B_i\setminus B_{i-1}$.
	\vspace{-\baselineskip}
	\begin{align*}
		\ptn(x) &= (\foi\hat p_{t,n})(x)=\int_{B_n}e^{-t|\xi|^\alpha}\chi(x\xi)d\xi \\
		&= \int_{B_{-n}} \, d\xi + \sum_{-n+1\leq i \leq n} e^{-tq^{\alpha i}}\int_{S_i}  \chi(x\xi) \, d\xi \\
		&= q^{-n} + \sum_{-n+1\leq i \leq n} e^{-tq^{\alpha i}}\left(\int_{B_i}  \chi(x\xi) \, d\xi - \int_{B_{i-1}}  \chi(x\xi) \, d\xi \right)\\
		&= q^{-n} + \sum_{-n+1\leq i \leq n} e^{-tq^{\alpha i}}\int_{B_i}  \chi(x\xi) \, d\xi - \sum_{-n\leq i \leq n-1} e^{-tq^{\alpha(i+1)}}\int_{B_i}  \chi(x\xi) \, d\xi \\
		&= q^{-n} - e^{-tq^{\alpha(-n+1)}} \int_{B_{-n}}  \chi(x\xi) \, d\xi + e^{-tq^{\alpha n}} \int_{B_n}  \chi(x\xi) \, d\xi \\
		&+ \sum_{-n+1\leq i \leq n-1} (e^{-tq^{\alpha i}} - e^{-tq^{\alpha(i+1)}})\int_{B_i}  \chi(x\xi) \, d\xi \\
		&= q^{-n}(1 - e^{-tq^{\alpha(-n+1)}}) + e^{-tq^{\alpha n}} \int_{B_n}  \chi(x\xi) \, d\xi \\
		&+ \sum_{-n+1\leq i \leq n-1} (e^{-tq^{\alpha i}} - e^{-tq^{\alpha (i+1)}})\int_{B_i}  \chi(x\xi) \, d\xi\,.
	\end{align*}
	The integrals $\int_{B_i}  \chi(x\xi)\,d\xi$ are always non-negative (see \cite[p.\ 42]{VVZ94} for the case $K=\qp$; the same proof works for a general $K$), hence each term is non-negative, and the first is positive, so \ptn\ is positive on $X_n$ for all $t > 0$.
\end{proof}

\section{\large \bf  Convergence of Measures}
From now on we'll be working on a fixed time interval which we will denote by $[0,t]$; a generic time point in $[0,t]$ will be denoted by $s$. We start by recalling the above formulas for the densities (with the time parameter $t$ replaced by $s$):
\begin{align}
	\psn(x)&=\int_{B_n}e^{-s|\xi|^\alpha}\chi(x\xi)\,d\xi \notag\\ \label{ptnsum}
	&= q^{-n}(1 - e^{-sq^{\alpha(-n+1)}}) + e^{-sq^{\alpha n}} \int_{B_n}  \chi(x\xi) \, d\xi \\ 
	&+ \sum_{-n+1\leq i \leq n-1} (e^{-sq^{\alpha i}} - e^{-sq^{\alpha (i+1)}})\int_{B_i}  \chi(x\xi) \, d\xi\notag
\end{align}
\begin{align}
	p_s(x)&=\int_K e^{-s|\xi|^\alpha}\chi(x\xi)\,d\xi \notag\\ \label{ptsum}
	&=\sum_{i \in \mathbf{Z}} (e^{-sq^{\alpha i}} - e^{-sq^{\alpha(i+1)}})\int_{B_i}  \chi(x\xi) \, d\xi\,.
\end{align}
We now introduce the space $D[0,t]$ of Skorokhod functions. These are the functions  defined on the interval $[0,t]$ with values in $K$ which satisfy the following two criteria:
\begin{enumerate}
	\item For each $s\in(0,t)$, $f(s \pm 0)$ exist; $f(0 + 0)$ and $f(t-0)$ exist.
	\item $f(s + 0) = f(s)$ for all $s \in[0, t)$, and $f(t) = f(t - 0)$.
\end{enumerate}
We will use the densities $\psn$ to construct, for each $n$ and for each $a\in X_n$, a probability measure $\mathbf{P}_a^n$ on the space $D[0,t]$, and subsequently show that these measures converge weakly to the measure $\mathbf{P}_a$ on $D[0,t]$ which is constructed from the densities $p_s$. The measure $\Pn_a$ will give full measure to the paths which take values in the grid $X_n$. To achieve all of this we need a few lemmas.  
\begin{lemma}\label{ptunifbdd}
	The \psns\ are uniformly bounded, that is, for each $s\in (0,t]$ there is a constant $B_s$ such that
	\begin{equation}
	||\psn||_{\infty} < B_s \notag
	\end{equation}
	for all $n$.
\end{lemma}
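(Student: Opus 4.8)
The plan is to show that the supremum defining $\|\psn\|_{\infty}$ is attained at the origin, and that its value there is dominated by the corresponding quantity for the limiting density $p_s$, which is easily seen to be finite and independent of $n$.

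First recall from Lemma~\ref{positivity} that $\psn$ is a strictly positive real-valued function on the finite grid $X_n$, so $\|\psn\|_{\infty}=\max_{x\in X_n}\psn(x)$. Since $0\in X_n$ and $\chi(0)=1$, we have $\psn(0)=\int_{\bn}\expsa\,d\xi$, and for every $x\in X_n$, using positivity of $\psn(x)$ and $|\chi|\equiv1$,
\[
0<\psn(x)=\left|\int_{\bn}\expsa\,\chi(x\xi)\,d\xi\right|\le\int_{\bn}\expsa\,d\xi=\psn(0).
\]
Hence $\|\psn\|_{\infty}=\psn(0)=\int_{\bn}\expsa\,d\xi$, and since $\bn\subseteq K$ with positive integrand this is bounded, for every $n$, by $\int_{K}\expsa\,d\xi=p_s(0)$.

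It remains only to check that $p_s(0)=\int_{K}\expsa\,d\xi<\infty$. Decomposing $K$ into the spheres $\{\xi:|\xi|=q^{j}\}$, of measure $q^{j}-q^{j-1}=(1-q^{-1})q^{j}$, gives $p_s(0)=(1-q^{-1})\sum_{j\in\mathbf Z}q^{j}e^{-sq^{\alpha j}}$; the general term decays geometrically as $j\to-\infty$ and super-exponentially as $j\to+\infty$, so the series converges. One then takes $B_s$ to be any constant with $B_s\ge p_s(0)$, e.g.\ $B_s=p_s(0)+1$; this bound is independent of $n$, which proves the lemma.

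There is no genuine obstacle here: the one substantive ingredient is the finiteness of $\int_K\expsa\,d\xi$, which is exactly the estimate that makes $p_s$ an honest probability density, and the structural point is merely that a non-negative function with a non-negative integrand in its Fourier inversion formula attains its maximum at $0$. If one prefers to argue from the explicit expansions, observe in \eqref{ptnsum} that every summand is non-negative (because $\int_{B_i}\chi(x\xi)\,d\xi\ge0$ and $e^{-sq^{\alpha i}}\ge e^{-sq^{\alpha(i+1)}}$), that the middle sum over $-n+1\le i\le n-1$ is dominated termwise by the series \eqref{ptsum} for $p_s(x)$, and that the two boundary terms are bounded respectively by $q^{-n}\le1$ and by $\sup_{m\ge1}q^{m}e^{-sq^{\alpha m}}<\infty$; summing these yields the same $n$-independent bound.
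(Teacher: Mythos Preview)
Your main argument is correct and is genuinely simpler than the paper's own proof. The paper works directly with the expansion \eqref{ptnsum}: it observes that the two boundary terms tend to $0$ uniformly in $x$ as $n\to\infty$, and bounds the middle sum $\sum_{-n+1\le i\le n-1}(e^{-sq^{\alpha i}}-e^{-sq^{\alpha(i+1)}})\int_{B_i}\chi(x\xi)\,d\xi$ by the full series \eqref{ptsum}, then invokes a uniform-in-$x$ estimate from \cite[Lemma~2, Sec.~4]{Var97} to control that series. Your route instead exploits Lemma~\ref{positivity} together with $|\chi|\equiv1$ to see at once that $\|\psn\|_\infty=\psn(0)=\int_{\bn}\expsa\,d\xi\le\int_K\expsa\,d\xi=p_s(0)$, and then checks directly that $p_s(0)<\infty$; this avoids the citation to \cite{Var97} altogether and makes the $n$-independence of the bound completely transparent. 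The paper's approach, on the other hand, does not need the positivity lemma as input and is closer in spirit to the termwise estimates used later (Lemma~\ref{uniformConvDensity} and Proposition~\ref{Centsovprop}), so it fits more naturally with the surrounding arguments. Your alternative sketch at the end is essentially the paper's method, modulo the fact that bounding the middle sum by $p_s(x)$ still leaves you needing a uniform-in-$x$ bound on $p_s$, which is precisely what \cite{Var97} supplies (or what your first argument gives via $p_s(x)\le p_s(0)$).
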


\begin{proof}
	By \eqref{ptnsum} we have
	\begin{align*}
		\psn(x) &= q^{-n}(1 - e^{-sq^{\alpha(-n+1)}}) + e^{-sq^{\alpha n}} \int_{B_n}  \chi(x\xi) \, d\xi \\
		&+ \sum_{-n+1\leq i \leq n-1} (e^{-sq^{\alpha i}} - e^{-sq^{\alpha (i+1)}})\int_{B_i}  \chi(x\xi) \, d\xi.
	\end{align*}
	The first and second term go to $0$ uniformly when  $n \rightarrow \infty$ since 
	$| \int_{B_n}  \chi(x\xi) \, d\xi| \leq q^n$. The third term is bounded by  $\sum_{i \in \mathbf{Z}} (e^{-sq^{\alpha i}} - e^{-sq^{\alpha (i+1)}})\int_{B_i}  \chi(x\xi) \, d\xi$, and the latter is uniformly bounded according to \cite[Lemma 2, Sec.\ 4, proof]{Var97}.
\end{proof}
\begin{lemma}\label{uniformConvDensity}
	\psn(x)\ converges uniformly to $p_s(x)$ on compact sets.
\end{lemma}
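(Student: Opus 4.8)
The plan is to work directly from the two explicit expansions \eqref{ptnsum} and \eqref{ptsum} just recorded and to compare them summand by summand. Their middle parts coincide over the common range $-n+1\le i\le n-1$, so on subtracting one is left with only the two tails of \eqref{ptsum} together with the two ``boundary'' terms of \eqref{ptnsum}:
\begin{align*}
p_s(x)-\psn(x)&=\sum_{i\le -n}\bigl(e^{-sq^{\alpha i}}-e^{-sq^{\alpha(i+1)}}\bigr)\int_{B_i}\chi(x\xi)\,d\xi\\
&\quad +\sum_{i\ge n}\bigl(e^{-sq^{\alpha i}}-e^{-sq^{\alpha(i+1)}}\bigr)\int_{B_i}\chi(x\xi)\,d\xi\\
&\quad -q^{-n}\bigl(1-e^{-sq^{\alpha(-n+1)}}\bigr)-e^{-sq^{\alpha n}}\int_{B_n}\chi(x\xi)\,d\xi.
\end{align*}
I would then bound each of these four terms by a quantity that does not depend on $x$.

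Two inputs do all the work: first, $0\le\int_{B_i}\chi(x\xi)\,d\xi\le\mu(B_i)=q^i$ for every $i$ and every $x$ (non-negativity being the fact already invoked in the proof of Lemma~\ref{positivity}); second, the elementary inequality $0\le e^{-a}-e^{-b}\le b-a$ for $0\le a\le b$, which gives $0\le e^{-sq^{\alpha i}}-e^{-sq^{\alpha(i+1)}}\le s(q^\alpha-1)q^{\alpha i}$. With these, the first tail is at most $s(q^\alpha-1)\sum_{i\le -n}q^{(\alpha+1)i}=s(q^\alpha-1)\,q^{-(\alpha+1)n}/\bigl(1-q^{-(\alpha+1)}\bigr)$, which tends to $0$; the second tail is at most $\sum_{i\ge n}q^i e^{-sq^{\alpha i}}$, the tail of a convergent series (its summand decays super-exponentially in $i$, the same convergence used for $p_s$ itself and in the proof of Lemma~\ref{ptunifbdd}), hence also tends to $0$; the third term is at most $q^{-n}$; and the fourth is at most $q^n e^{-sq^{\alpha n}}$. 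Since all four bounds are independent of $x$ and tend to $0$ as $n\to\infty$, we obtain $\psn\to p_s$ uniformly in $x$, and in particular uniformly on compact sets.

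I do not expect a genuine obstacle: the argument is a term-by-term majorization, and the only point deserving a remark is the convergence of $\sum_i q^i e^{-sq^{\alpha i}}$, which is immediate from the super-exponential decay of the summand. For the record, the same conclusion can be reached more quickly from the Fourier side: up to the sampling convention used for $e^{-s|\cdot|^\alpha}$ on the central cell $B_{-n}$ (a discrepancy of at most $s\,q^{-(\alpha+1)n}$), one has $\psn=\foi\bigl(e^{-s|\cdot|^\alpha}\one_{B_n}\bigr)$ and $p_s=\foi\bigl(e^{-s|\cdot|^\alpha}\bigr)$ with $e^{-s|\cdot|^\alpha}\in L^1(K)$, so that $\|\psn-p_s\|_\infty\le\int_{|\xi|>q^n}e^{-s|\xi|^\alpha}\,d\xi+s\,q^{-(\alpha+1)n}\to 0$, using $\|\foi g\|_\infty\le\|g\|_1$.
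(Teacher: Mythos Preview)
Your argument is essentially the paper's: subtract the expansions \eqref{ptnsum} and \eqref{ptsum} and bound the leftover pieces. The paper handles the two tails together via Varadarajan's integral estimate $\sum (e^{-sq^{\alpha i}}-e^{-sq^{\alpha(i+1)}})q^i\le s\int e^{-sy}y^{1/\alpha}\,dy$, whereas you split them and use the more elementary bounds $e^{-a}-e^{-b}\le b-a$ (low tail) and $q^ie^{-sq^{\alpha i}}$ (high tail); both work.

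One small correction: your claim that the bounds give uniformity on all of $K$ overreaches, because the expansion \eqref{ptnsum} is derived under the assumption $x\in B_n$ (the step $\int_{B_{-n}}\chi(x\xi)\,d\xi=q^{-n}$ in the proof of Lemma~\ref{positivity} uses $|x|\le q^n$; outside $B_n$ one has $\psn(x)=0$ as an element of $\mathcal D_n$). This is precisely why the paper begins by fixing a compact $E$ and choosing $n_0$ with $E\subset B_n$ for $n\ge n_0$. You should insert the same remark; it costs nothing and your conclusion ``in particular uniformly on compact sets'' then follows exactly as you wrote. (Your Fourier-side alternative, by contrast, is immune to this issue and does yield $\|\psn-p_s\|_\infty\to 0$ on all of $K$; it is a genuinely shorter route that the paper does not take.)
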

\begin{proof}
	Let $E$ be a compact subset of $K$ and choose $n_0$ so that $E\subset B_n$ for $n\geq n_0$. Then for $x\in E$ and $n\geq n_0$ we have:
	\begin{align*}
		|p_s(x)-\psn(x)| &\leq q^{-n}(1 - e^{-sq^{\alpha(-n+1)}}) + e^{-sq^{\alpha n}} \int_{B_n}  \chi(x\xi) \, d\xi \\
		&+ \sum_{\substack{i \leq -n\\ i \geq n}} (e^{-sq^{\alpha i}} - e^{-sq^{\alpha(i+1)}})\int_{B_i}  \chi(x\xi) \, d\xi\,.
	\end{align*}
	The first terms goes to 0 as $n \rightarrow \infty$, and so does the second since $|\int_{B_n}  \chi(x\xi) \, d\xi|\leq q^n$. For the third term we again take advantage of an estimate from \cite[Lemma 2, Sec.\ 4, proof]{Var97}, this time writing it out more explicitly:
	\begin{multline*}
		\sum_{\substack{i \leq -n\\ i \geq n}} (e^{-sq^{\alpha i}} - e^{-sq^{\alpha(i+1)}})\int_{B_i}  \chi(x\xi) \, d\xi 
		\leq \sum_{\substack{i \leq -n\\ i \geq n}} (e^{-sq^{\alpha i}} - e^{-sq^{\alpha (i+1)}})q^i\\
		\leq \sum_{\substack{i \leq -n\\ i \geq n}} s\int_{q^{i\alpha}}^{q^{\alpha(i+1)}} e^{-sy}y^{1/\alpha} \, dy
		= s\int_{[0,q^{(-n+1)\alpha}] \cup [q^{n\alpha},\infty)} e^{-sy}y^{1/\alpha} \, dy\,.
	\end{multline*}
	The last term goes to 0 (being the tail of a convergent integral), so $\psn$ converges pointwise to $p_s$. Since the estimates are independent of $x$, we have uniform convergence on $E$.	
	
\end{proof}	

We now start the construction of the measures $\Pn_a$. Pick a point $a\in X_n$, fix $N$ time points $0\leq t_1<t_2<\dots<t_N\leq t$, and for each $i=1,\dots N$, pick a Borel subset $J_i$ of $K$. We define a measure $\Pn_a$ on the cylinder sets $\{\omega:[0,t]\to K:\omega(t_i)\in J_i\}$ by 
\begin{multline}\label{cyl}
	\mathbf{P}_a^n(\omega(t_i) \in J_i)\\= \sum_{b_i \in J_i \cap X_n,\, 1 \leq i \leq N} p_{t_1,n}(b_1-a)\cdots p_{t_N-t_{N-1},n}(b_N-b_{N-1}) q^{-nN}.
\end{multline} 
By Kolmogorov's Extension Theorem \cite[Thm.\ 2.1.5]{Oks98}, $\mathbf{P}_a^n$ has a unique extension to a probability measure on $\Omega[0,t]$, the space of all functions $\omega : [0,t] \rightarrow K$, equipped with the $\sigma$-algebra generated by the cylinder sets. To get a probability measure on $D[0,t]$, equipped with the Borel sets coming from the Skorokhod topology, we need to check the \v Centsov criterion, which says: If there are constants $c,d,e,C>0$ such that 
\beq
\label{cen} E_{\Pn_a}(|Y_{t_1} - Y_{t_2}|^c|Y_{t_2} - Y_{t_3}|^d) \leq C|t_1 - t_3|^{1+e}
\eeq
for all $0\leq t_1<t_2<t_3\leq t$, then there is a unique measure on $D[0,t]$ which satisfies the condition \eqref{cyl}. Here $E_{\Pn_a}$ denotes the expectation w.r.t.\ the measure $\Pn_a$, and $Y_s$ denotes the random variable $Y_s(\omega)=\omega(s)$, $\omega\in\Omega[0,t]$, $s\in[0,t]$. The random variables $Y_s$ define a process with independent increments with respect to each of the measures $\Pn_a$.

\begin{proposition}\label{Centsovprop}
	Let $k$ be a real number with $0<k<\alpha$, and pick time points $0\leq t_1<t_2<t_3\leq t$. Then there is a constant $D_k>0$ such that
	\beq\label{kpowerestimate}
	E_{\Pn_a}(|Y_{t_1} - Y_{t_2}|^k|Y_{t_2} - Y_{t_3}|^k) \leq D_k|t_1 - t_3|^{2k/\alpha}.
	\eeq
	If also $k>\alpha/2$, then \v Centsov's condition \eqref{cen} is satisfied.
\end{proposition}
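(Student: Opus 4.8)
The plan is to use that, under $\Pn_a$, the process $(Y_s)$ has independent increments, so that the two factors in \eqref{kpowerestimate} decouple and the whole statement reduces to a single uniform estimate for the $k$-th absolute moment of one increment. First I would record that $Y_{t_1}-Y_{t_2}$ and $Y_{t_2}-Y_{t_3}$ are (up to sign) increments over the disjoint intervals $[t_1,t_2]$ and $[t_2,t_3]$, hence independent with respect to $\Pn_a$, so
\[
E_{\Pn_a}\!\left(|Y_{t_1}-Y_{t_2}|^k|Y_{t_2}-Y_{t_3}|^k\right)=E_{\Pn_a}\!\left(|Y_{t_1}-Y_{t_2}|^k\right)E_{\Pn_a}\!\left(|Y_{t_2}-Y_{t_3}|^k\right).
\]
From \eqref{cyl}, the increment of $Y$ over an interval of length $s$ has law $\psn\,d\mu_n$, independently of $a$; hence each factor equals $m_{k,n}(s):=\int_{X_n}|x|^k\psn(x)\,d\mu_n(x)$, at $s=t_2-t_1$ and $s=t_3-t_2$ respectively. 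Granting the uniform moment bound $m_{k,n}(s)\le C_k\,s^{k/\alpha}$ for all $n$ and all $s\in(0,t]$ (when $0<k<\alpha$), the proposition follows immediately: the product is then at most $C_k^2\bigl((t_2-t_1)(t_3-t_2)\bigr)^{k/\alpha}$, and since $(t_2-t_1)(t_3-t_2)\le\tfrac14|t_1-t_3|^2$ we obtain \eqref{kpowerestimate} with $D_k=4^{-k/\alpha}C_k^2$; if moreover $k>\alpha/2$ then $e:=2k/\alpha-1>0$, so \eqref{kpowerestimate} is exactly the \v Centsov condition \eqref{cen} with $c=d=k$ and $C=D_k$.

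Everything therefore rests on the moment bound. Here I would first note, from \eqref{ptnsum} together with the standard identity $\int_{B_i}\chi(x\xi)\,d\xi=q^i\one_{B_{-i}}(x)$, that $\psn$ is $H_n$-invariant and depends only on $|x|$ on $B_n$; since the representative $0\in X_n$ has $|0|=0$, it follows that $m_{k,n}(s)=\int_{B_n\setminus H_n}|x|^k\psn(x)\,dx$. Comparing \eqref{ptnsum} with \eqref{ptsum} (and using that, for $x\in B_n\setminus H_n$, the contribution of $\{|\xi|>q^n\}$ to $p_s(x)$ vanishes while $\chi(x\xi)=1$ on $H_n$) yields the clean identity
\[
\psn(x)=p_s(x)+\delta_n(s)\qquad(x\in B_n\setminus H_n),\qquad \delta_n(s):=q^{-n}-(1-q^{-1})\sum_{i\le -n}e^{-sq^{\alpha i}}q^i,
\]
with $0\le\delta_n(s)\le q^{-n}$. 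Consequently
\[
m_{k,n}(s)\le\int_K|x|^k p_s(x)\,dx+\delta_n(s)\int_{B_n}|x|^k\,dx=m_k(s)+a_k\,\delta_n(s)\,q^{(k+1)n},
\]
where $m_k(s):=\int_K|x|^k p_s(x)\,dx$ and $a_k:=\int_{B_0}|x|^k\,dx=(1-q^{-1})/(1-q^{-(k+1)})>0$.

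It remains to bound each of these two terms by a constant times $s^{k/\alpha}$. For $m_k(s)$ I would use the exact scaling $p_{q^\alpha s}(x)=q^{-1}p_s(\beta x)$ (a change of variables, or a re-indexing of the series for $p_s$), which gives $m_k(q^\alpha s)=q^k m_k(s)$; combined with the (routine) continuity of $s\mapsto m_k(s)$ this makes $m_k(s)/s^{k/\alpha}$ continuous and periodic in $\log s$, hence bounded on $(0,\infty)$ — and it is precisely here that $0<k<\alpha$ enters, to guarantee finiteness. (Alternatively, quote the estimates on $p_s$ in \cite[Lemma 2, Sec.\ 4, proof]{Var97}.) For the second term, put $\sigma:=s\,q^{-\alpha n}$; then $q^n\delta_n(s)=(1-q^{-1})\sum_{j\le 0}(1-e^{-\sigma q^{\alpha j}})q^j$, and using $1-e^{-u}\le\min(u,1)$ and summing two geometric series gives $q^n\delta_n(s)\le C\min(\sigma,1)\le C\sigma^{k/\alpha}$; hence $a_k\delta_n(s)q^{(k+1)n}=a_k q^{kn}\cdot q^n\delta_n(s)\le C a_k\,q^{kn}\sigma^{k/\alpha}=C a_k\,s^{k/\alpha}$. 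Together these give $m_{k,n}(s)\le C_k\,s^{k/\alpha}$ uniformly in $n$, which completes the argument.

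I expect this last estimate to be the main obstacle. The factor $\int_{B_n}|x|^k\,dx\asymp q^{(k+1)n}$ blows up as $n\to\infty$, so the decomposition $\psn=p_s+\delta_n(s)$ is useful only if one extracts the sharp decay of $\delta_n(s)$ in the regime $s\ll q^{\alpha n}$; the crude bound $\delta_n(s)\le q^{-n}$ is not enough, and one must keep track of the scaling variable $\sigma=s q^{-\alpha n}$. A related subtlety, easy to miss, is that the origin of the grid must genuinely be excluded: had one instead bounded $m_{k,n}(s)$ by $\int_{B_n}|x|^k\psn(x)\,dx$, the $q^{-n}$ mass carried by $0\in X_n$, where $\psn(0)$ is of order $q^n$ for small $s$, would already spoil the $s^{k/\alpha}$ bound.
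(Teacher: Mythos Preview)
Your proof is correct and follows the same overall strategy as the paper---reduce \eqref{kpowerestimate} to a single uniform moment bound $m_{k,n}(s)\le C_k\,s^{k/\alpha}$---but differs in two places worth noting. First, you factor the product by independence of increments, whereas the paper rewrites $E(|Y_{t_2}-Y_{t_1}|^k|Y_{t_3}-Y_{t_2}|^k)$ as $E(|Y_{t_2-t_1}|^k|Y_{t_3-t_2}|^k)$ and then applies Cauchy--Schwarz, invoking the moment bound at exponent $2k$; your direct factorization is cleaner and sidesteps the constraint $2k<\alpha$ that the Cauchy--Schwarz route would impose (which sits awkwardly against the \v Centsov requirement $k>\alpha/2$). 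Second, for the moment bound itself the paper works directly from the series \eqref{ptnsum}, bounding the boundary term by $s q^\alpha q^{-n(\alpha-k)}\le s q^\alpha\le B_k s^{k/\alpha}$ and controlling the main sum via Varadarajan's inequality $\sum_i(e^{-sq^{\alpha i}}-e^{-sq^{\alpha(i+1)}})q^{-ik}\le A_k s^{k/\alpha}$; your decomposition $\psn=p_s+\delta_n(s)$ together with the scaling identity $m_k(q^\alpha s)=q^k m_k(s)$ achieves the same end by a different, more self-contained route. Both derivations of the moment bound are valid for $0<k<\alpha$ and yield constants uniform in $n$ and $s\in(0,t]$.
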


\begin{proof}
	Using the point $a=0$ in $X_n$, we have
	
	\begin{multline*}
		E_{\mathbf{P}_0^n}(|Y_s|^k) = \int_{\Omega[0,t]} |Y_s(\omega)|^k \,d\mathbf{P}_0^n(\omega) =  \int_{K} |x|^k \, d\mathbf{P}_0^n \circ Y_s^{-1}(x)\\
		= \sum_{x \in X_n} \int_{ \{x\}} |x|^k \, d\mathbf{P}_0^n \circ Y_s^{-1}(x) + \int_{K \setminus X_n} |x|^k \, d\mathbf{P}_0^n \circ  Y_s^{-1}(x) \\
		= \sum_{x\in X_n} |x|^k \psn(x) q^{-n} = \sum_{x \in X_n, x\neq 0} |x|^k \psn(x) q^{-n}.
	\end{multline*}
	Using the expression \eqref{ptnsum} for $\psn$, we get
	\begin{align*}
		&E_{\mathbf{P}_0^n}(|Y_s|^k)\\
		&=q^{-n} \sum_{x \in X_n, x\neq 0} |x|^k \bigg{(} q^{-n}(1 - e^{-sq^{\alpha (-n+1)}}) + e^{-sq^{\alpha n}} \int_{B_n}  \chi(x\xi ) \, d\xi \\
		&+ \sum_{-n+1\leq i \leq n-1} (e^{-sq^{\alpha i}} - e^{-sq^{\alpha (i+1)}})\int_{B_i}  \chi(x\xi) \, d\xi \bigg{)} \\
		&= q^{-n} \sum_{x \in X_n, x\neq 0} |x|^k \bigg{(} q^{-n}(1 - e^{-sq^{\alpha (-n+1)}}) 
		\\&+ \sum_{-n+1\leq i \leq n-1} (e^{-sq^{\alpha i}} - e^{-sq^{\alpha (i+1)}})\int_{B_i}  \chi(x\xi) \, d\xi \bigg{)} \\
		&= q^{-n} (1 - e^{-sq^{\alpha (-n+1)}})\int_{B_n \setminus B_{-n}}|x|^k \, dx \\&+ \int_{B_n \setminus B_{-n}} |x|^k\sum_{-n+1\leq i \leq n-1} (e^{-sq^{\alpha i}} - e^{-sq^{\alpha (i+1)}})\int_{B_i}  \chi(x\xi) \, d\xi dx \\
		&\leq q^{-n}(1 - e^{-sq^{\alpha (-n+1)}}) q^n q^{nk} \\
		&+ \sum_{-n+1\leq i \leq n-1} (e^{-sq^{\alpha i}} - e^{-sq^{\alpha (i+1)}}) \int_{B_{-i}} |x|^k \int_{B_i}  \chi(x\xi) \, d\xi dx \\
		&\leq q^{-n}(1 - e^{-sq^{\alpha (-n+1)}}) q^n q^{nk} 
		+ \sum_{-\infty< i < \infty} (e^{-sq^{\alpha i}} - e^{-sq^{\alpha (i+1)}}) q^{-ik}q^{-i}q^{i} \\
		&= q^{-n}(1 - e^{-sq^{\alpha (-n+1)}}) q^n q^{nk} 
		+ \sum_{-\infty< i < \infty} (e^{-sq^{\alpha i}} - e^{-sq^{\alpha (i+1)}}) q^{-ik}\,.\\
		\intertext{At this point we again invoke an inequality by Varadarajan \cite{Var97}[Lemma 2, Sec.\ 4, proof], which in our setting translates to $\sum_{-\infty< i < \infty} (e^{-sq^{\alpha i}} - e^{-sq^{\alpha (i+1)}}) q^{-ik}\leq A_k s^{k/\alpha}$ for some constant $A_k$ which is independent of $n,s$. The chain of inequalities then continues as (with $B_k,C_k$ some other constants which are independent of $n,s$)}
		&\leq (1 - e^{-sq^{\alpha (-n+1)}}) q^{nk} + A_k s^{k/\alpha}\leq  s q^{-n\alpha}q^\alpha q^{nk}+ A_k s^{k/\alpha}\\&=	s q^\alpha q^{-n(\alpha-k)}+ A_k s^{k/\alpha}\leq s q^\alpha + A_k s^{k/\alpha}\leq B_k  s^{k/\alpha} + A_k s^{k/\alpha}\leq C_k s^{k/\alpha}\,,
	\end{align*}
	where we have used that $\alpha-k>0$, and that over the finite interval $[0,t]$, we can make $ s q^\alpha\leq B_k s ^{k/\alpha}$ for a suitable $B_k$. To sum it up, we have shown that
	\beq\label{momentineq}
	E_{\mathbf{P}_0^n}(|Y_s|^k)\leq C_k s^{k/\alpha}
	\eeq
	for some constant $C_k$ which is independent of $n,s$. Using that the process $Y_t$ has stationary increments and that $Y_0=0$ with $\mathbf P^n_0$-probability $1$, we get
	\begin{multline*}
		E_{\mathbf{P}_0^n}(|Y_{t_2}-Y_{t_1}|^k|Y_{t_3}-Y_{t_2}|^k)=
		E_{\mathbf{P}_0^n}(|Y_{t_2-t_1}-Y_{0}|^k|Y_{t_3-t_2}-Y_{0}|^k)\\
		=E_{\mathbf{P}_0^n}(|Y_{t_2-t_1}|^k|Y_{t_3-t_2}|^k)
		\leq
		(E_{\mathbf{P}_0^n}(|Y_{t_2-t_1}|^{2k}))^{1/2}
		(E_{\mathbf{P}_0^n}(|Y_{t_3-t_2}|^{2k}))^{1/2}\\
		\stackrel{\eqref{momentineq}}{\leq}C_{2k}(t_2-t_1)^{k/\alpha}(t_3-t_2)^{k/\alpha}<C_{2k}(t_3-t_1)^{2k/\alpha}\,.
	\end{multline*}
	
	Noticing that
	\[
	E_{\mathbf{P}_a^n}(|Y_{t_2}-Y_{t_1}|^k|Y_{t_3}-Y_{t_2}|^k)=E_{\mathbf{P}_0^n}(|Y_{t_2}-Y_{t_1}|^k|Y_{t_3}-Y_{t_2}|^k)
	\]
	for any $a\in X_n$ (since only differences between the $Y_{t_i}$ occur), we finally get
	\begin{equation}
	\label{CentsovInequ}
	E_{\mathbf{P}_a^n}(|Y_{t_2}-Y_{t_1}|^k|Y_{t_3}-Y_{t_2}|^k)\leq C_{2k} (t_3-t_1)^{2k/\alpha},
	\end{equation}
	for $k<\alpha$. So with  $D_k=C_{2k}$, \eqref{kpowerestimate} holds. If also $k >\alpha/2$, the \v Centsov criterion holds.
\end{proof} 	

\subsection{Convergence of Unconditioned Measures}
The concept of weak convergence of probability measures will play an important role in this article.  
\begin{definition}[Weak Convergence]\label{defweakconv}
	Let $(\mathbf P_n)$ and $\mathbf P$ be probability measures on a metric space $M$. We say that the sequence $(\mathbf P_n)$ converges weakly to $\mathbf P$ -- written $\mathbf P_n\Rightarrow \mathbf P$ -- if $\mathbf P_n(f) \to \mathbf P(f)$ for all bounded, continuous real functions $f$ on $M$.
\end{definition}
For several equivalent definitions, see \cite[Thm.\ 2.1]{Bil99} (``Portmanteau Theorem").

Let $a_n\in X_n$, $a\in K$ be such that $a_n\to a$ as $n\to\infty$. We wish to prove that $\mathbf{P}_{a_n}^n \Rightarrow \mathbf{P}_a$ as $n \rightarrow \infty$. To do this we will use the following theorem from \cite{Var94} (see also \cite[Theorem 13.1]{Bil99}):
\begin{theorem}[Theorem 2, Ch.\ 11, in \cite{Var94}]
	Suppose that $\mathbf{P}_m, \mathbf{P}$ are probability measures on $D[0,t]$ such that
	\begin{itemize}
		\item\mbox{}\vspace{-\baselineskip}
		\beq \mathbf{P}_m^{t_1,...,t_N} \Rightarrow \mathbf{P}^{t_1,...,t_N}\text{ for all $t_1,...,t_N$ in $[0,t]$.}\label{findimconv}
		\eeq
		\item  There are constants $c,d,e,C > 0$ such that for all $n$ and\\ $0 \leq t_1 < t_2 < t_3 \leq t,$
		\beq
		\label{tight2}E_{\mathbf{P}_m}(|Y_{t_2}-Y_{t_1}|^c|Y_{t_3}-Y_{t_2}|^d) \leq C (t_3-t_1)^{1+e}.
		\eeq
	\end{itemize}
	Then $\mathbf{P}_m \Rightarrow \mathbf{P}$.
\end{theorem}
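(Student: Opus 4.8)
The statement is the classical two-step criterion for weak convergence in a Skorokhod space: relative compactness (tightness) plus identification of subsequential limits through the finite-dimensional distributions. Since $K$ is locally compact, $\sigma$-compact and complete metric (in fact ultrametric), it is Polish, and therefore so is $D[0,t]$ equipped with the Skorokhod topology; hence Prokhorov's theorem applies, and the proof will run exactly as for the $\rr$-valued analogue in \cite[Ch.~13]{Bil99}. I would first reduce to showing (a) $\{\mathbf{P}_m\}$ is tight, and (b) every weak subsequential limit equals $\mathbf{P}$; these two together give $\mathbf{P}_m\Rightarrow\mathbf{P}$.

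\textbf{Step 1: Tightness.} By the standard characterization of relative compactness in $D[0,t]$ one must control, uniformly in $m$, two quantities: the location of $Y_s$ at each fixed time, and the Skorokhod modulus of continuity $w'_\delta(\omega)=\inf\max_i\sup_{u,v\in[s_{i-1},s_i)}|\omega(u)-\omega(v)|$, the infimum being over partitions of $[0,t]$ into blocks of length $\geq\delta$. For the former, the one-dimensional marginals $\mathbf{P}_m\circ Y_s^{-1}$ converge weakly by \eqref{findimconv}, so for each $s$ they form a tight family of Borel measures on $K$ (here $\sigma$-compactness of $K$ is used). For the latter, the moment hypothesis \eqref{tight2} is exactly the input to the chaining/maximal-inequality arguments of \cite[\S13]{Bil99}: a bound of the form $E_{\mathbf{P}_m}(|Y_{t_2}-Y_{t_1}|^c|Y_{t_3}-Y_{t_2}|^d)\leq C|t_3-t_1|^{1+e}$ yields an estimate $\mathbf{P}_m(w'_\delta\geq\eta)\leq K(\eta)\,\delta^{e}$ uniformly in $m$, for each $\eta>0$. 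Combining the two controls produces, for every $\varepsilon>0$, a compact set $\mathcal K_\varepsilon\subset D[0,t]$ with $\mathbf{P}_m(\mathcal K_\varepsilon)>1-\varepsilon$ for all $m$; by Prokhorov, $\{\mathbf{P}_m\}$ is relatively compact.

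\textbf{Step 2: Identification of the limit.} Let $\mathbf{P}_{m_k}\Rightarrow\mathbf{Q}$ along some subsequence. The set $T_{\mathbf{Q}}$ of times $s$ for which the coordinate map $\pi_s\colon D[0,t]\to K$ is not $\mathbf{Q}$-a.s.\ continuous is at most countable and does not contain $t$ (the endpoint evaluations are always continuous on $D[0,t]$). For $t_1,\dots,t_N\notin T_{\mathbf{Q}}$ the projection $\pi_{t_1,\dots,t_N}$ is continuous off a $\mathbf{Q}$-null set, so the mapping theorem gives $\mathbf{P}_{m_k}^{t_1,\dots,t_N}\Rightarrow\mathbf{Q}^{t_1,\dots,t_N}$, while \eqref{findimconv} gives convergence of the same marginals to $\mathbf{P}^{t_1,\dots,t_N}$; hence $\mathbf{Q}$ and $\mathbf{P}$ have the same finite-dimensional distributions for all time points in the dense set $[0,t]\setminus T_{\mathbf{Q}}$ (which contains $t$). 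Since the Borel $\sigma$-algebra of the Skorokhod topology is generated by the coordinate maps $\{\pi_s\}$, and already by $\{\pi_s:s\in[0,t]\setminus T_{\mathbf{Q}}\}$ via right-continuity of paths, a monotone-class argument forces $\mathbf{Q}=\mathbf{P}$. As every subsequential limit is $\mathbf{P}$, the whole sequence converges weakly to $\mathbf{P}$.

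\textbf{Main obstacle.} The only genuinely technical point is the passage in Step~1 from the two-increment moment bound \eqref{tight2} to a uniform bound on the Skorokhod modulus $w'_\delta$; this is precisely the \v Centsov–Billingsley maximal inequality, and it is worth noting that the non-Archimedean nature of $K$ plays no role in it, since only the metric (and, for tightness of the marginals, the $\sigma$-compactness) of $K$ is used. Everything else is soft functional analysis and the mapping theorem.
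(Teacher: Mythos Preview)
The paper does not give its own proof of this theorem: it is quoted verbatim as ``Theorem 2, Ch.\ 11'' from Varadarajan's notes \cite{Var94} (with a parallel reference to \cite[Theorem 13.1]{Bil99}) and used as a black box. So there is no in-paper argument to compare against; your sketch is exactly the standard Prokhorov-plus-identification proof that lies behind the cited references. The two steps are correctly identified, and you are right that the only nontrivial ingredient is the \v Centsov--Billingsley maximal inequality converting \eqref{tight2} into a uniform bound on the Skorokhod modulus $w'_\delta$, with the ultrametric on $K$ playing no special role. One small sharpening: in Step~1 it suffices to check tightness of the marginal at a single time (say $Y_0$) rather than at every $s$, since the modulus bound then propagates compact containment to all times; and in Step~2 both endpoints $0$ and $t$ are automatically continuity points of the coordinate evaluations, as you note for $t$.
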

By equation (\ref{CentsovInequ}) the condition \eqref{tight2} is satisfied if $c=d=k, 1+e=2k/\alpha$ and $\alpha/2<k<\alpha$. To prove \eqref{findimconv} we can use the following theorem.
\begin{theorem}[Thm.\ 2.2 in \cite{Bil99}]
	\label{piSystemConv}
	Let $\mathbf{P}$, $(\mathbf{P}_m)_{m=1}^\infty$, be probability measures on $D[0,t]$, and suppose that
	\begin{itemize}
		\item $\mathcal{A}_P$ is a $\pi$-system\footnote{A class of subsets is a $\pi$-system
			if it is closed under the formation of finite intersections.}
		\item Every open set is a countable union of elements in $\mathcal{A}_P$.
	\end{itemize}
	If $\mathbf{P}_m (A) \rightarrow \mathbf{P}(A)$ for all $A \in \mathcal{A}_P$, then $\mathbf{P}_m \Rightarrow \mathbf{P}$.
\end{theorem}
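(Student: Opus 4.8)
\emph{Proof sketch.} This is Theorem~2.2 of \cite{Bil99}, so the plan is simply to reproduce the standard short argument. The first move is to invoke the open-set form of the Portmanteau Theorem \cite[Thm.\ 2.1]{Bil99}: on a metric space --- and $D[0,t]$ with the Skorokhod metric is one --- the convergence $\mathbf{P}_m\Rightarrow\mathbf{P}$ is equivalent to $\liminf_{m\to\infty}\mathbf{P}_m(G)\geq\mathbf{P}(G)$ for every open set $G$. Hence it suffices to prove that inequality for an arbitrary fixed open $G$.

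To do that I would use the second hypothesis to write $G=\bigcup_{i\geq1}A_i$ with each $A_i\in\mathcal{A}_P$, and for each finite $r$ set $G_r=\bigcup_{i=1}^{r}A_i$. Expanding $\mathbf{P}_m(G_r)$ by inclusion--exclusion produces a finite signed sum of terms $\mathbf{P}_m\big(\bigcap_{i\in I}A_i\big)$ over nonempty $I\subseteq\{1,\dots,r\}$. The key observation is that each such finite intersection again lies in $\mathcal{A}_P$, because $\mathcal{A}_P$ is a $\pi$-system; hence the first hypothesis gives $\mathbf{P}_m\big(\bigcap_{i\in I}A_i\big)\to\mathbf{P}\big(\bigcap_{i\in I}A_i\big)$, and since the sum is finite, $\mathbf{P}_m(G_r)\to\mathbf{P}(G_r)$ as $m\to\infty$. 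Because $G\supseteq G_r$, this yields $\liminf_{m\to\infty}\mathbf{P}_m(G)\geq\mathbf{P}(G_r)$ for every $r$, and then letting $r\to\infty$ with $G_r\uparrow G$ and using continuity of $\mathbf{P}$ from below gives $\liminf_{m\to\infty}\mathbf{P}_m(G)\geq\mathbf{P}(G)$, which finishes the argument.

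I do not anticipate a real obstacle here; the proof is entirely elementary. The single genuine idea is that inclusion--exclusion turns a finite \emph{union} of $\mathcal{A}_P$-sets into finitely many finite \emph{intersections}, which --- crucially --- remain inside the $\pi$-system and are therefore covered by the set-wise convergence hypothesis. The one point to stay alert to is that the hypothesis supplies convergence only on $\mathcal{A}_P$, not on the $\sigma$-algebra it generates, so the proof must be routed through open sets and the Portmanteau inequality rather than attempting to upgrade directly to convergence on all Borel sets.
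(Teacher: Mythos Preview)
The paper does not give its own proof of this statement; it is simply quoted from Billingsley \cite[Thm.\ 2.2]{Bil99} and used as a black box. Your sketch is correct and is precisely the argument Billingsley gives: Portmanteau reduces the question to the open-set inequality, inclusion--exclusion on finite unions keeps everything inside the $\pi$-system, and continuity from below finishes it.
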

In $K$ the set of all balls is a basis for the topology. In $K^N$, the set of all products of balls, $A_1 \times \cdots \times A_N$, is a basis for the topology. This set is also closed under finite intersections, so we can use Theorem \ref{piSystemConv} to prove convergence of the finite dimensional distributions. So fix a set $A_1 \times \cdots \times A_N$. Let $a_n\in X_n \rightarrow a\in K$ as $n\rightarrow \infty$. We wish to prove that $\mathbf{P}_{a_n}^{n,t_1,...,t_N}(A_1 \times \cdots \times A_N) \rightarrow \mathbf{P}_a^{t_1,...,t_N}(A_1 \times \cdots \times A_N)$ as $n \rightarrow \infty$. We have
\begin{multline*}
	\mathbf{P}_{a_n}^{n,t_1,...,t_N}(A_1 \times \cdots \times A_N)\\= \sum_{b_i \in A_i, 1 \leq i \leq N} p_{t_1,n}(b_1-a_n)\cdots p_{t_N-t_{N-1},n}(b_N-b_{N-1}) q^{-nN}.
\end{multline*}
Let $n$ be large enough so that the balls $A_1,...,A_N$ all have radius larger than $q^{-n}$. Then
\begin{multline*}
	\mathbf{P}_{a_n}^{n,t_1,...,t_N}(A_1 \times \cdots \times A_N)\\
	= \sum_{b_i \in A_i, 1 \leq i \leq N} p_{t_1,n}(b_1-a_n)\cdots p_{t_N-t_{N-1},n}(b_N-b_{N-1}) q^{-nN}\\
	= \sum_{b_i \in A_i, 1 \leq i \leq N-1}\int_{A_N} p_{t_1,n}(b_1-a_n)\cdots p_{t_N-t_{N-1},n}(x_N-b_{N-1}) q^{-n(N-1)} \, dx_N \\
	=  \int_{A_1}\cdots \int_{A_N} p_{t_1,n}(x_1-a_n)\cdots p_{t_N-t_{N-1},n}(x_N-x_{N-1}) \, dx_N \cdots dx_1\,. \\
\end{multline*}
We also have that
\begin{multline*}
	\mathbf{P}_a^{t_1,...,t_N}(A_1 \times \cdots \times A_N)\\ =  \int_{A_1}\cdots \int_{A_N} p_{t_1}(x_1-a)\cdots p_{t_N-t_{N-1}}(x_N-x_{N-1}) \, dx_N \cdots dx_1\,.
\end{multline*}
When $a_n \rightarrow a$,
\begin{align*}
	&\int_{A_1}\cdots \int_{A_N} p_{t_1,n}(x_1-a_n)\cdots p_{t_N-t_{N-1},n}(x_N-x_{N-1}) \, dx_N \cdots dx_1 \\
	& \rightarrow \int_{A_1}\cdots \int_{A_N} p_{t_1}(x_1-a)\cdots p_{t_N-t_{N-1}}(x_N-x_{N-1}) \, dx_N \cdots dx_1
\end{align*}
by Lemma \ref{ptunifbdd}, and since the probability densities converge uniformly on compact sets. Thus $\mathbf{P}_{a_n}^{n,t_1,...,t_N} \Rightarrow \mathbf{P}_a^{t_1,...,t_N}$ and hence $\mathbf{P}_{a_n}^n \Rightarrow \mathbf{P}_a$. We have proved:
\begin{theorem}[Weak Convergence of Unconditioned Measures]
	Let $a_n\in X_n$, $a\in K$ be such that $a_n\to a$ as $n\to\infty$. Then 
	\[\mathbf{P}_{a_n}^n \Rightarrow \mathbf{P}_a\text{ as }n \rightarrow \infty\,,\] 
	where, we recall, $\Rightarrow$ denotes weak convergence of measures.
\end{theorem}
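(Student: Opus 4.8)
The plan is to apply the criterion of \cite[Theorem 2, Ch.\ 11]{Var94} quoted above to the sequence $(\mathbf P_{a_n}^n)$ and the limit $\mathbf P_a$, which reduces the theorem to two points: the uniform \v Centsov-type moment bound \eqref{tight2}, and weak convergence of the finite-dimensional distributions \eqref{findimconv}. The first point is already in hand. Choosing any $k$ with $\alpha/2<k<\alpha$, the estimate \eqref{CentsovInequ} of Proposition~\ref{Centsovprop} reads
\[
E_{\mathbf P_{a_n}^n}\bigl(|Y_{t_2}-Y_{t_1}|^k|Y_{t_3}-Y_{t_2}|^k\bigr)\le C_{2k}\,(t_3-t_1)^{2k/\alpha}\qquad(0\le t_1<t_2<t_3\le t),
\]
and the constant $C_{2k}$ produced there is independent of $n$ and of the base point, so \eqref{tight2} holds with $c=d=k$, $C=C_{2k}$ and $e=2k/\alpha-1>0$.

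For the finite-dimensional distributions I would use Theorem~\ref{piSystemConv} with the $\pi$-system of products of balls $A_1\times\cdots\times A_N$ in $K^N$: this class is closed under intersection because in an ultrametric space two balls are either nested or disjoint, and since $K$ is second countable every open set is a countable union of such products. It thus suffices, for one fixed product of balls and fixed times $0\le t_1<\cdots<t_N\le t$, to show $\mathbf P_{a_n}^{n,t_1,\dots,t_N}(A_1\times\cdots\times A_N)\to\mathbf P_a^{t_1,\dots,t_N}(A_1\times\cdots\times A_N)$. The crux is to rewrite the discrete cylinder probability \eqref{cyl} as an integral over $K^N$: once $n$ is large enough that each $A_i$ has radius exceeding $q^{-n}$, each $A_i$ is a disjoint union of $H_n$-cosets ($H_n=B_{-n}$), and since $p_{s,n}\in\mathcal D_n$ is locally constant of index $q^{-n}$, the summand in \eqref{cyl} is, as a function of each $b_j$ with the remaining variables held on the grid, constant on $H_n$-cosets. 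Integrating the coordinates out one at a time (from $b_N$ down to $b_1$, checking that each partial integration preserves invariance under $H_n$-translation) then turns \eqref{cyl} into
\[
\int_{A_1}\!\!\cdots\!\int_{A_N}p_{t_1,n}(x_1-a_n)\,p_{t_2-t_1,n}(x_2-x_1)\cdots p_{t_N-t_{N-1},n}(x_N-x_{N-1})\,dx_N\cdots dx_1 ,
\]
while the corresponding formula with $p_{\cdot,n}\to p_\cdot$ and $a_n\to a$ is precisely $\mathbf P_a^{t_1,\dots,t_N}(A_1\times\cdots\times A_N)$.

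It then remains to pass to the limit $n\to\infty$ in this iterated integral. The domain $A_1\times\cdots\times A_N$ is a fixed compact set of finite measure; by Lemma~\ref{ptunifbdd} the integrands are bounded on it uniformly in $n$; and by Lemma~\ref{uniformConvDensity}, together with $a_n\to a$, each factor converges to the corresponding factor built from the $p_\cdot$. Dominated convergence then yields \eqref{findimconv}, and combining this with the moment bound and the cited theorem of Varadarajan gives $\mathbf P_{a_n}^n\Rightarrow\mathbf P_a$. I expect the one genuinely delicate step to be the identification of \eqref{cyl} with the iterated integral: it relies on taking $n$ large relative to the balls $A_i$, on $p_{s,n}$ being locally constant of index $q^{-n}$ and $a_n$ lying on the grid, and on verifying that each successive partial integration preserves local constancy of that index; everything after that is a routine application of Lemmas~\ref{ptunifbdd} and \ref{uniformConvDensity}.
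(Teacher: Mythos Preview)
Your proposal is correct and follows essentially the same route as the paper: invoke the Varadarajan/Billingsley criterion, get tightness from the uniform \v Centsov bound of Proposition~\ref{Centsovprop} with $\alpha/2<k<\alpha$, and verify convergence of finite-dimensional distributions via Theorem~\ref{piSystemConv} on the $\pi$-system of products of balls by rewriting the grid sum \eqref{cyl} as an iterated integral (for $n$ large enough that each $A_i$ has radius $>q^{-n}$) and then applying Lemmas~\ref{ptunifbdd} and~\ref{uniformConvDensity}. Your account of the ``delicate step'' (local constancy of $p_{s,n}$ of index $q^{-n}$ and $a_n\in X_n$) is in fact slightly more explicit than the paper's, which simply performs the sum-to-integral replacement coordinate by coordinate without further comment.
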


\subsection{Convergence of Conditioned Measures}
Let $a,b\in X_n$. The conditioned measure $\mathbf{P}^n_{a,b,t}$ of a Borel set $A\subset D[0,t]$ is defined by\footnote{Here and in the following we use the probabilist's notation for sets: $(\omega(t) = b)$ is a shortcut notation for the set $\{\omega:\omega(t) = b\}$. More generally, for time points $0\leq t_1<\cdots< t_N\leq t$ and Borel sets $J_i,i=1,\dots N$, the notation $(\omega(t_i)\in J_i)$ means $\{\omega:\omega(t_i)\in J_i,i=1\dots N\}$. }
\begin{equation}
\mathbf{P}^n_{a,b,t}(A) = \frac{\mathbf{P}^n_{a} (A \cap (\omega(t) = b))}{\mathbf{P}^n_{a}(\omega(t) = b)}.
\end{equation}
In this subsection we wish to prove the following theorem:
\begin{theorem}[Weak Convergence of Conditioned Measures]\label{uniformConvergenceMeasures}
	\label{WeakConvergenceMeasures}
	If $a_n\in X_n \rightarrow a\in K$ and $b_n\in X_n \rightarrow b\in K$, then $\mathbf{P}^n_{a_n,b_n,t} \Rightarrow \mathbf{P}_{a,b,t}$. The convergence is uniform when $(a,b)$ varies in compact subsets of $K\times K$.
\end{theorem}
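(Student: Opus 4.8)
The plan is to run, for the conditioned measures, the same two‑step scheme as for the unconditioned ones above: first prove convergence of the finite‑dimensional distributions of $\mathbf{P}^n_{a_n,b_n,t}$ to those of $\mathbf{P}_{a,b,t}$, then prove tightness of $\{\mathbf{P}^n_{a_n,b_n,t}\}_n$ on $D[0,t]$, and finally combine the two by the usual Prokhorov argument (every subsequential weak limit agrees with $\mathbf{P}_{a,b,t}$ in finite‑dimensional distribution at the co‑countable set of common continuity times, hence equals it). Throughout I would keep track of the fact that all constants depend only on a fixed compact set $C\subset K\times K$ containing $(a,b)$; this is exactly what delivers the uniformity, and an equivalent formulation of the conclusion is that $(a,b)\mapsto\mathbf{P}_{a,b,t}$ is weakly continuous and $\mathbf{P}^n_{a_n,b_n,t}\Rightarrow\mathbf{P}_{a,b,t}$ whenever $a_n\in X_n\to a$ and $b_n\in X_n\to b$.

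For the finite‑dimensional distributions I would fix time points $0\le t_1<\dots<t_N\le t$ and balls $J_1,\dots,J_N$ in $K$. For $n$ large enough that every $J_i$ has radius $\ge q^{-n}$, unwinding the definition of the conditioned measure together with \eqref{cyl} turns the relevant sum into an integral:
\begin{multline*}
\mathbf{P}^n_{a_n,b_n,t}(\omega(t_i)\in J_i)\\
=\frac{1}{p_{t,n}(b_n-a_n)}\int_{J_1}\cdots\int_{J_N} p_{t_1,n}(x_1-a_n)\cdots p_{t_N-t_{N-1},n}(x_N-x_{N-1})\,p_{t-t_N,n}(b_n-x_N)\,dx_N\cdots dx_1,
\end{multline*}
with the obvious modifications if $t_1=0$ or $t_N=t$. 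Every argument occurring in the integrand stays in a fixed compact set, so by Lemma~\ref{uniformConvDensity} the integrand converges uniformly and by Lemma~\ref{ptunifbdd} it is uniformly bounded on the compact polydisc $J_1\times\dots\times J_N$; hence the numerator converges to the corresponding integral of products of the $p_{\cdot}$, i.e.\ to the numerator in the analogous formula for $\mathbf{P}_{a,b,t}$. The denominator $p_{t,n}(b_n-a_n)$ is strictly positive for every $n$ (Lemma~\ref{positivity}, as $b_n-a_n\in B_n$) and tends to $p_t(b-a)>0$ by Lemma~\ref{uniformConvDensity} and the known strict positivity of the infinite‑level density, so $c_0:=\inf_n p_{t,n}(b_n-a_n)>0$. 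Therefore $\mathbf{P}^n_{a_n,b_n,t}(\omega(t_i)\in J_i)\to\mathbf{P}_{a,b,t}(\omega(t_i)\in J_i)$; since finite products of balls form a $\pi$‑system and every open subset of $K^N$ is a countable union of such sets, Theorem~\ref{piSystemConv} upgrades this to $\mathbf{P}^{n,t_1,\dots,t_N}_{a_n,b_n,t}\Rightarrow\mathbf{P}^{t_1,\dots,t_N}_{a,b,t}$. Because the convergence and boundedness supplied by Lemmas~\ref{ptunifbdd} and~\ref{uniformConvDensity} are uniform and $p_t$ is continuous and strictly positive, all of this is uniform for $(a,b)\in C$.

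Tightness is the heart of the matter and the step I expect to cause the most trouble. One cannot simply rerun the \v{C}entsov estimate \eqref{CentsovInequ} for the bridge: near the pinned endpoint the densities degenerate, $\|p_{s,n}\|_\infty=p_{s,n}(0)\le p_s(0)\to\infty$ as $s\to0$ uniformly in $n$, so the contribution of the last time interval $[t_N,t]$ is out of control as $t_N\uparrow t$; and trying to patch this with a moment inequality runs into the fact that, for the exponents $k$ one needs, the $2k$‑th moments are not bounded uniformly in $n$. My plan is to avoid any new moment estimate and argue locally in time. Fix $0<\delta<t/2$. The conditioned measure is the Doob $h$‑transform of $\mathbf{P}^n_{a_n}$ with space‑time harmonic function $h_s(x)=p_{t-s,n}(b_n-x)$ (the standard description of a bridge as a conditioned Markov process), so on the Borel sets of $D[0,t-\delta]$ its Radon--Nikodym derivative with respect to $\mathbf{P}^n_{a_n}$ equals $p_{\delta,n}(b_n-\omega(t-\delta))/p_{t,n}(b_n-a_n)\le B_\delta/c_0$, where $B_\delta:=p_\delta(0)$ dominates $\|p_{s,n}\|_\infty$ for all $n$ and all $s\ge\delta$ (Lemma~\ref{ptunifbdd}). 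Hence $\mathbf{P}^n_{a_n,b_n,t}|_{D[0,t-\delta]}\le (B_\delta/c_0)\,\mathbf{P}^n_{a_n}|_{D[0,t-\delta]}$. The family $\{\mathbf{P}^n_{a_n}\}$ is tight on $D[0,t]$ (by \eqref{CentsovInequ}, or simply because it converges weakly by the theorem on unconditioned measures), hence tight on $D[0,t-\delta]$, and domination by a tight family up to the fixed factor $B_\delta/c_0$ forces $\{\mathbf{P}^n_{a_n,b_n,t}|_{D[0,t-\delta]}\}$ to be tight. Since $p_{s,n}$ is even (an inverse Fourier transform of the even function $e^{-s|\cdot|^\alpha}$), the bridge from $a_n$ to $b_n$ and the bridge from $b_n$ to $a_n$ differ only by the time change $s\mapsto t-s$; applying the preceding paragraph to the latter on $[0,t-\delta]$ (note $\inf_n p_{t,n}(a_n-b_n)=\inf_n p_{t,n}(b_n-a_n)=c_0$) shows $\{\mathbf{P}^n_{a_n,b_n,t}|_{D[\delta,t]}\}$ is tight on $D[\delta,t]$. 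As $[0,t-\delta]$ and $[\delta,t]$ overlap, tightness on the two subintervals, together with $\omega(0)=a_n\to a$, yields tightness of $\{\mathbf{P}^n_{a_n,b_n,t}\}$ on $D[0,t]$ by the usual patching argument for the Skorokhod topology; the constants $B_\delta$ and $c_0$ depend only on $\delta$ and $C$, so this tightness is uniform for $(a,b)\in C$.

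Combining the two parts finishes the proof, and retaining the uniform constants throughout gives the uniform convergence as $(a,b)$ ranges over compact subsets of $K\times K$. To summarize the structure: the routine ingredient is the finite‑dimensional convergence, which follows from Lemmas~\ref{positivity}, \ref{ptunifbdd}, \ref{uniformConvDensity} and Theorem~\ref{piSystemConv} exactly as in the unconditioned case; the genuinely delicate point is tightness near the pinned time $t$, which I would handle by splitting $[0,t]$ into the overlapping pieces $[0,t-\delta]$ and $[\delta,t]$, dominating the bridge by the free walk on the first piece (a Doob $h$‑transform estimate), using the symmetry of the densities to reverse time on the second, and then patching — thereby circumventing the need for a \v{C}entsov‑type estimate for the bridge itself.
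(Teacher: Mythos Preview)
Your approach is essentially the same as the paper's. Both arguments first establish convergence of finite-dimensional distributions exactly as you describe, and then prove tightness by splitting $[0,t]$ into two overlapping subintervals (the paper uses $[0,3t/4]$ and $[t/4,t]$, you use $[0,t-\delta]$ and $[\delta,t]$), dominating the bridge by the free walk on the first piece, and handling the second piece by time reversal. Your ``Doob $h$-transform'' bound $p_{\delta,n}(b_n-\omega(t-\delta))/p_{t,n}(b_n-a_n)\le B_\delta/c_0$ is exactly what the paper obtains by an explicit independent-increments computation, yielding $\mathbf{P}^n_{a_n,b_n,t}(A_1)\le(\gamma'/\gamma)\,\mathbf{P}^n_{a_n}(A_1)$; and your ``patching'' is equivalent to the paper's observation that for $\delta$ smaller than the overlap one has $m(\omega:\delta)=\max\{m_1(\omega:\delta),m_2(\omega:\delta)\}$. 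For uniformity the paper argues that continuous convergence of $(a,b)\mapsto\int g\,d\mathbf{P}^n_{a,b,t}$ to $(a,b)\mapsto\int g\,d\mathbf{P}_{a,b,t}$ implies uniform convergence on compacta, which is the same content as your tracking of constants.

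One technical point you gloss over and should make precise: the time change $s\mapsto t-s$ does not map $D[0,t]$ to itself, since it turns c\`adl\`ag paths into c\`agl\`ad paths. The paper defines the reflection by $\omega^*(s)=\omega((t-s)-0)$ and then checks, using stationarity of increments and the fact that $Y_{s'}\to Y_s$ in $\mathbf{P}^n_{a_n,b_n,t}$-measure as $s'\uparrow s$, that this left-limit modification coincides a.s.\ with naive time reversal, whence $(\mathbf{P}^n_{a_n,b_n,t})^*=\mathbf{P}^n_{b_n,a_n,t}$. Your statement that the two bridges ``differ only by the time change $s\mapsto t-s$'' needs this correction to be literally true on $D[0,t]$.
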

The proof of this theorem will occupy the remainder of this subsection. 	We first prove the statement about weak convergence. To do this we first prove it for the corresponding finite dimensional distributions.
\begin{proposition}
	Let $a_n,b_n,a,b$ be as in the theorem, and pick time points $0<t_1<\dots<t_N<t$ in $[0,t]$. Then
	\[
	\mathbf{P}^{n,t_1,\dots,t_N}_{a_n,b_n,t}\Rightarrow  \mathbf{P}^{t_1,\dots,t_N}_{a,b,t}\,.
	\]
\end{proposition}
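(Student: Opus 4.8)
The plan is to reduce weak convergence of the conditioned finite-dimensional distributions to the already-established convergence of the unconditioned densities, using the explicit product-of-densities formula for both the conditioned finite measures and the conditioned limit measure. Fix time points $0<t_1<\dots<t_N<t$ and basic open sets $A_1\times\dots\times A_N$ with each $A_i$ a ball in $K$; since such products form a $\pi$-system generating the topology on $K^N$, by Theorem \ref{piSystemConv} it suffices to prove $\mathbf{P}^{n,t_1,\dots,t_N}_{a_n,b_n,t}(A_1\times\dots\times A_N)\to\mathbf{P}^{t_1,\dots,t_N}_{a,b,t}(A_1\times\dots\times A_N)$. First I would write out the numerator and denominator of the conditioned measure explicitly. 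For $n$ large enough that every $A_i$ has radius exceeding $q^{-n}$, the same manipulation as in the unconditioned case turns the sum over grid points $b_i\in A_i\cap X_n$ into an integral, giving
\begin{multline*}
\mathbf{P}^{n,t_1,\dots,t_N}_{a_n,b_n,t}(A_1\times\dots\times A_N)\\
=\frac{\displaystyle\int_{A_1}\!\!\cdots\!\!\int_{A_N} p_{t_1,n}(x_1-a_n)\cdots p_{t_N-t_{N-1},n}(x_N-x_{N-1})\,p_{t-t_N,n}(b_n-x_N)\,dx_N\cdots dx_1}{p_{t,n}(b_n-a_n)}\,,
\end{multline*}
and the corresponding formula with $n$ removed and $a_n,b_n$ replaced by $a,b$ for $\mathbf{P}^{t_1,\dots,t_N}_{a,b,t}$.

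Next I would treat numerator and denominator separately. For the denominator, Lemma \ref{uniformConvDensity} gives $p_{t,n}(b_n-a_n)\to p_t(b-a)$ (using $b_n-a_n\to b-a$ and uniform convergence on the compact set containing all these points), and the limit $p_t(b-a)$ is strictly positive — this is where I would cite the positivity of $p_t$ recalled after \eqref{locdensity}, so the denominator stays bounded away from $0$ for large $n$. For the numerator, the integrand is a product of $N+1$ density factors; each factor $p_{t_{j}-t_{j-1},n}$ converges uniformly on compacta to $p_{t_j-t_{j-1}}$ by Lemma \ref{uniformConvDensity}, and all factors are uniformly bounded (in $n$, and in the integration variables ranging over the compact set $\overline{A_1}\times\dots\times\overline{A_N}$) by Lemma \ref{ptunifbdd}. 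Hence the product of the finitely many factors converges uniformly on the compact domain of integration, and since the domain has finite measure, dominated convergence (or just uniform convergence on a finite-measure set) lets me pass the limit inside the iterated integral. Dividing, the ratio converges to the corresponding ratio defining $\mathbf{P}^{t_1,\dots,t_N}_{a,b,t}(A_1\times\dots\times A_N)$.

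The uniformity in $(a,b)$ over compact subsets of $K\times K$ comes for free from the same argument: all the convergences invoked (Lemma \ref{uniformConvDensity} on compacta, the lower bound on $p_t(b-a)$, the uniform bound from Lemma \ref{ptunifbdd}) are uniform once $a,b$ are restricted to a fixed compact set, because then $a_n,b_n$ and all the relevant differences lie in a fixed compact set independent of the particular $(a,b)$. The main obstacle I anticipate is purely bookkeeping: carefully justifying the passage from the discrete sum defining $\mathbf{P}^n_{a,b,t}$ to the integral — in particular checking that conditioning on the endpoint $\omega(t)=b_n$ contributes exactly the extra factor $p_{t-t_N,n}(b_n-x_N)$ and the normalization $p_{t,n}(b_n-a_n)$, via the Chapman–Kolmogorov (semigroup) property of $(p_{s,n})_{s>0}$ under convolution — and ensuring the "$n$ large enough" threshold can be chosen uniformly over the compact set of $(a,b)$'s. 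Once these indexing issues are handled, the analytic core is just uniform convergence of a bounded product over a finite-measure set, which is routine.
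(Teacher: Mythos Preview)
Your proposal is correct and follows essentially the same route as the paper: write the conditioned finite-dimensional distribution as a ratio with numerator an integral over $A_1\times\cdots\times A_N$ of a product of densities (including the extra bridge factor $p_{t-t_N,n}(b_n-x_N)$) and denominator $p_{t,n}(b_n-a_n)$, then invoke Lemma~\ref{uniformConvDensity} and Lemma~\ref{ptunifbdd} to pass to the limit, and conclude via Theorem~\ref{piSystemConv}. Your discussion of uniformity in $(a,b)$ over compacta is an extra (and correct) observation beyond what the paper proves in this particular proposition---the paper handles that uniformity separately, after establishing weak convergence.
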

\begin{proof}
	Let $J_i$, $i=1,\dots,N$, be balls in $K$. Then by definition
	\begin{equation*}
		\mathbf{P}^n_{a_n,b_n,t}(\omega(t_i) \in J_i) = \frac{\mathbf{P}^n_{a_n} ((\omega(t_i) \in J_i) \cap (\omega(t) = b_n))}{\mathbf{P}^n_{a_n}(\omega(t) = b_n)}.
	\end{equation*}
	Here the denominator is equal to $p_{t,n}(b_n-a_n)q^{-n}$.  For the numerator we have
	\begin{multline*}
		\mathbf{P}^n_{a_n} ((\omega(t_i) \in J_i) \cap (\omega(t) = b_n))\\ = \int_{J_1}\cdots \int_{J_N} p_{t_1,n}(x_1-a_n)\cdots p_{t-t_N,n}(b_n-x_N)q^{-n} \, dx_N \cdots dx_1\,,
	\end{multline*}
	so
	\begin{align*}
		&\mathbf{P}^n_{a_n,b_n,t}(\omega(t_i) \in J_i)\\
		&=\frac{ \int_{J_1}\cdots \int_{J_N} p_{t_1,n}(x_1-a_n)\cdots p_{t-t_N,n}(b_n-x_N) \, dx_N \cdots dx_1}{p_{t,n}(b_n-a_n)}\\
		&\to \frac{ \int_{J_1}\cdots \int_{J_N} p_{t_1}(x_1-a)\cdots p_{t-t_N}(b-x_N) \, dx_N \cdots dx_1}{p_t(b-a)}\\
		&=\mathbf{P}_{a,b,t}(\omega(t_i) \in J_i)\,,
	\end{align*}
	where we have used Lemma~\ref{uniformConvDensity}.
	From Theorem~\ref{piSystemConv} it now follows that $\mathbf{P}^{n,t_1,,,t_N}_{a_n,b_n,t} \Rightarrow \mathbf{P}^{t_1,,,t_N}_{a,b,t}$.
\end{proof}
To finish the proof that $\mathbf{P}_{a_n,b_n,t}^n \Rightarrow \mathbf{P}_{a,b,t}$, we invoke a result from Billingsley \cite{Bil99}. To state it we need a concept which for Skorokhod functions plays the role of the modulus of continuity:
\begin{equation}
m(\omega:\delta) = \sup_{\substack{s_1 < s < s_2\\0<s_2-s_1 < \delta}} \min \{ |\omega(s_2)-\omega(s)|, |\omega(s)-\omega(s_1)|\}.
\end{equation}
\begin{theorem}[Thm.\ 13.1 in \cite{Bil99}]
	\label{TightnessCond}
	Let $\mathbf{P}$, $(\mathbf{P}_k)_{k=1}^\infty$, be probability measures on $D[0,t]$. If $\mathbf{P}_k^{t_1,,,t_N} \Rightarrow \mathbf{P}^{t_1,,,t_N}$ as $k\to\infty$ for all finite sets of time points $t_1,,,t_N$ and if for every $\eta > 0$
	\begin{equation*}
		\lim_{\delta \rightarrow 0} \mathbf{P}_k(\{\omega : m(\omega:\delta) > \eta \}) = 0
	\end{equation*}
	uniformly in $k$, then $\mathbf{P}_k \Rightarrow \mathbf{P}$ as $k\to\infty$.
\end{theorem}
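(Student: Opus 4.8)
The plan is to obtain weak convergence from Prohorov's theorem together with a uniqueness-of-limit-points argument. Since the state space $K$ is a complete separable metric space, $D[0,t]$ with the Skorokhod topology is Polish, so any tight family of probability measures on it is relatively compact in the weak topology. I would therefore first show that the two hypotheses force $(\mathbf{P}_k)$ to be tight, and then show that $\mathbf{P}$ is its only weak limit point; combining these gives $\mathbf{P}_k\Rightarrow\mathbf{P}$.

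For tightness I would use the Arzel\`a--Ascoli-type description of relatively compact subsets of $D[0,t]$ and the tightness criterion it yields (\cite[Thm.~13.2]{Bil99}): $(\mathbf{P}_k)$ is tight provided (a) $\lim_{M\to\infty}\sup_k\mathbf{P}_k\!\left(\sup_{s\in[0,t]}|\omega(s)|\ge M\right)=0$, (b) for every $\eta>0$, $\lim_{\delta\to0}\sup_k\mathbf{P}_k(m(\omega:\delta)>\eta)=0$, together with analogous control of the oscillation of $\omega$ on $[0,\delta)$ and on $[t-\delta,t]$. Condition (b) is exactly the second hypothesis. For (a) I would discretize: given $\eta>0$, pick $\delta$ from (b) and a partition $0=s_0<\dots<s_m=t$ of mesh $<\delta$; from the definition of $m$ one checks that on the event $\{m(\omega:\delta)\le\eta\}$ one has $\sup_{s}|\omega(s)|\le\max_{0\le i\le m}|\omega(s_i)|+\eta$. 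The finite-dimensional convergence applied at each single time $s_i$ shows $\{\mathbf{P}_k\circ Y_{s_i}^{-1}\}_k$ is weakly convergent, hence tight, which bounds $\max_i|\omega(s_i)|$ uniformly in $k$ with high probability and yields (a). The two endpoint-oscillation terms are controlled the same way, using that $\omega(0)=\omega(0+)$ and $\omega(t)=\omega(t-)$ are automatic in $D[0,t]$ and that $m(\omega:\delta)$ controls the excursions in the boundary windows. So $(\mathbf{P}_k)$ is tight, hence relatively compact.

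To identify the limit, let $\mathbf{Q}$ be any subsequential weak limit, $\mathbf{P}_{k_j}\Rightarrow\mathbf{Q}$. All but countably many $s\in[0,t]$ are continuity times of $\mathbf{Q}$ (and $0,t$ are automatically such, since $\omega(0)=\omega(0+)$, $\omega(t)=\omega(t-)$); at any finite collection $t_1<\dots<t_N$ of continuity times the coordinate projection $D[0,t]\to K^N$ is continuous on a set of full $\mathbf{Q}$-measure, so by the continuous mapping theorem $\mathbf{P}_{k_j}^{t_1,\dots,t_N}\Rightarrow\mathbf{Q}^{t_1,\dots,t_N}$. By hypothesis the same sequence converges to $\mathbf{P}^{t_1,\dots,t_N}$, so $\mathbf{Q}$ and $\mathbf{P}$ have the same finite-dimensional distributions along a set of times that is dense in $[0,t]$ and contains $t$ --- a determining class for Borel measures on $D[0,t]$ --- whence $\mathbf{Q}=\mathbf{P}$. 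Every limit point of the tight sequence $(\mathbf{P}_k)$ being $\mathbf{P}$, we conclude $\mathbf{P}_k\Rightarrow\mathbf{P}$.

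The main obstacle is the tightness step, specifically the fact that it suffices to control the ``two-sided minimum'' modulus $m(\omega:\delta)$ rather than the sharper jump-respecting modulus $w'$ that appears in the bare compactness criterion for $D[0,t]$. One must combine the $m$-bound with the finite-dimensional convergence --- which pins the path down at the discretization points --- in order to recover uniform control of $w'$, and separately argue that the oscillations in the two boundary windows $[0,\delta)$ and $[t-\delta,t]$ are negligible. This is precisely where the hypothesis of finite-dimensional convergence, rather than merely tightness of the one-dimensional marginals, is genuinely used, and it is the part of the argument that requires care.
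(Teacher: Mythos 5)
Your overall strategy is the standard one: the paper itself offers no argument for this statement (it is quoted from \cite{Bil99}), and the textbook proof is exactly your skeleton --- establish tightness, invoke Prohorov, and identify the unique limit point through its finite-dimensional distributions at continuity times (a co-countable set containing $0$ and $t$, hence a determining class). Your identification step is correct, and so is the bound $\sup_{s}|\omega(s)|\le\max_i|\omega(s_i)|+\eta$ on the event $\{m(\omega:\delta)\le\eta\}$, which together with tightness of the one-dimensional marginals at the finitely many $s_i$ handles the sup-norm part of the tightness criterion.

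The genuine gap is in the remaining part of the tightness step, exactly where you write that the argument ``requires care'' and then pass over it. The criterion of \cite[Thm.~13.2]{Bil99} is phrased in terms of the Skorokhod modulus $w'(\omega:\delta)$ (infimum over partitions of mesh $>\delta$ of the maximal oscillation within partition intervals), not in terms of $m$, and your assertion that ``$m(\omega:\delta)$ controls the excursions in the boundary windows'' is false: a path with a single jump at some $s\in(0,\delta)$ has $m\equiv 0$, so $m$ alone says nothing about the oscillation on $[0,\delta)$ or $[t-\delta,t]$. A complete proof needs (i) the comparison of moduli bounding $w'$ by $m$ at a comparable scale plus the oscillations over the two boundary windows (cf.\ \cite[Sec.~12]{Bil99}), and (ii) for $0<s<\delta$ the elementary estimate $|\omega(s)-\omega(0)|\le m(\omega:2\delta)+|\omega(\delta)-\omega(0)|$ (take $s_1=0$, $s_2=\delta$ in the definition of $m$), and symmetrically $|\omega(s)-\omega(t)|\le m(\omega:2\delta)+|\omega(t)-\omega(t-\delta)|$ near $t$; this reduces the boundary oscillations to two-point increments whose smallness, uniformly in $k$, must then be extracted from the finite-dimensional convergence to the fixed measure $\mathbf P$ together with right-continuity of $\mathbf P$-paths at $0$ and the convention $\omega(t)=\omega(t-0)$ built into $D[0,t]$ (small $k$ being handled individually). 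That this input cannot be avoided is shown by $\mathbf P_k=$ the point mass at $\mathbf 1_{[t-1/k,\,t]}$ in the variant of $D[0,t]$ that permits a jump at $t$: there $m\equiv 0$ for every path involved and the finite-dimensional laws converge at every finite set of times (to those of the point mass at $\mathbf 1_{\{t\}}$), yet $\mathbf P_k$ does not converge weakly; the quoted statement survives only because the paper's $D[0,t]$ imposes $\omega(t)=\omega(t-0)$, which makes the limiting finite-dimensional laws come from no measure on $D[0,t]$ in that example. So the route is right, but the boundary-window argument as you state it is incorrect, and the upgrade from the $m$-condition to genuine $w'$-tightness --- the actual content of the theorem beyond your identification argument --- is missing.
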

What is left is to prove is that for every $\eta > 0$,
\begin{equation*}
	\lim_{\delta \rightarrow 0} \mathbf{P}^n_{a_n,b_n,t}(\{\omega : m(\omega:\delta) > \eta \}) = 0
\end{equation*}
uniformly in $n$. To do this we will follow \cite{DVV94}. The idea is to bound the conditioned measures by the unconditioned measures, and use that the latter are tight.

Define for each $\delta, \eta > 0$
\begin{equation}
A(\delta, \eta) = \{\omega : m(\omega:\delta) > \eta\}.
\end{equation}
Also define $m_1$ and $m_2$ to be the analogues of $m$ on the time intervals $[0,3t/4]$ and $[t/4,t]$, respectively.

If $\delta < t/2$, then $s_1$ and $s_2$ are in the same time interval, so
\begin{equation*}
	m(\omega: \delta) = \max\{m_1(\omega: \delta),m_2(\omega: \delta)\}.
\end{equation*}
With $A_j(\delta,\eta)= \{\omega : m_j(\omega:\delta) > \eta\}$ for $j=1,2$, we have
\begin{equation*}
	A(\delta,\eta) = A_1(\delta,\eta) \cup A_2(\delta,\eta).
\end{equation*}
Then it is enough to prove that for every $\eta > 0$,
\begin{equation*}
	\lim_{\delta \rightarrow 0} \mathbf{P}^n_{a_n,b_n,t}(A_j(\delta,\eta)) = 0,
\end{equation*}
uniformly in $n$ for $j=1,2$. We will first prove it for $j=1$ and prove the case $j=2$ by time reflection.

By definition
\begin{equation}
\label{CondA1}
\mathbf{P}^n_{a_n,b_n,t}(A_1) = \frac{\mathbf{P}^n_{a_n} (A_1 \cap (\omega(t) = b_n))}{\mathbf{P}^n_{a_n}(\omega(t) = b_n)}.
\end{equation}
The denominator is equal to $p_{t,n}(b_n-a_n)q^{-n}$. For the numerator we have, for large enough $n$,
\begin{multline*}
	\mathbf{P}^n_{a_n} (A_1 \cap (\omega(t) = b_n)) = \sum_{x \in X_n} \mathbf{P}^n_{a_n} (A_1 \cap (\omega(3t/4) = x) \cap (\omega(t) = b_n))\\
	= \sum_{x \in X_n} \mathbf{P}^n_{a_n} (A_1 \cap (\omega(3t/4) = x) \cap (\omega(t) - \omega(3t/4) = b_n -x))\\
	= \sum_{x \in X_n} \mathbf{P}^n_{a_n} (A_1 \cap (\omega(3t/4) = x)) \mathbf{P}^n_{a_n} (\omega(t) - \omega(3t/4) = b_n -x)
\end{multline*}
by independent increments.
Furthermore, we have the equality
\[
\mathbf{P}^n_{a_n} (\omega(t)-\omega(3t/4)
= b_n-x)=\mathbf{P}_0^n(\omega(t/4) = b_n-x),\]
which follows from the following calculation 
\begin{multline*}
	\mathbf{P}^n_{a_n} (\omega(t) - \omega(3t/4) = b_n -x)\\ = \sum_{y \in X_n} \mathbf{P}^n_{a_n} ((\omega(3t/4) = y)\cap (\omega(t) = y+b_n-x))\\
	= \sum_{y \in X_n} p_{3t/4,n}(y-a_n)p_{t/4,n}(b_n-x)q^{-2n}=p_{t/4,n}(b_n-x)q^{-n}
	\sum_{y \in X_n} p_{3t/4,n}(y-a_n)q^{-n}\\
	=p_{t/4,n}(b_n-x)q^{-n}=\mathbf{P}_0^n(\omega(t/4) = b_n-x).
\end{multline*}
So
\begin{align}\label{pt4nineq}
	\begin{split}
		&\mathbf{P}^n_{a_n} (A_1 \cap (\omega(t) = b_n)) \\&= \sum_{x \in X_n} \mathbf{P}^n_{a_n} (A_1 \cap (\omega(3t/4) = x)) \mathbf{P}_0^n(\omega(t/4) = b_n-x)\\
		&\leq \sum_{x \in X_n} \mathbf{P}^n_{a_n} (A_1 \cap (\omega(3t/4) = x)) \sup_{z \in X_n} p_{t/4,n}(z) q^{-n}
		\\&= \mathbf{P}^n_{a_n} (A_1) \sup_{z \in X_n} p_{t/4,n}(z) q^{-n}.
	\end{split}
\end{align}
Putting this back into equation (\ref{CondA1}) we get
\begin{equation*}
	\mathbf{P}^n_{a_n,b_n,t}(A_1) \leq \frac{\mathbf{P}^n_{a_n} (A_1) \sup_{z \in X_n} p_{t/4,n}(z)}{p_{t,n}(b_n-a_n)}.
\end{equation*}
The denominator	$p_{t,n}(b_n-a_n)$ converges to $p_t(b-a)>0$, so there exists a $\gamma >0$ such that $p_{t,n}(b_n-a_n) \geq \gamma$ for large $n$. Also, there is a $\gamma'>0$ such that $\sup_{z \in X_n} p_{t/4,n}(z) \leq \gamma'$, so
\begin{equation*}
	\mathbf{P}^n_{a_n,b_n,t}(A_1) \leq \frac{\gamma'}{\gamma}\mathbf{P}^n_{a_n} (A_1).
\end{equation*}
The measures $\mathbf{P}^n_{a_n}$ are tight, so -- by \cite[Thm.\ 13.2]{Bil99} and the discussion following it -- we have, for every $\eta > 0$,
\begin{equation*}
	\lim_{\delta \rightarrow 0} \mathbf{P}^n_{a_n,b_n,t}(A_1(\delta,\eta)) \leq \frac{\gamma'}{\gamma} \lim_{\delta \rightarrow 0} \mathbf{P}_{a_n}^n(A_1(\delta,\eta)) =0
\end{equation*}
uniformly in $n$. This proves the statement for $j=1$.\\ 	
To deal with the case $j=2$, we define an operation of time reflection on $D[0,t]$ by
\begin{equation}
\omega^{*}(s) = \omega(t-s-0), \quad 0\leq s<t
\end{equation}
and $\omega^{*}(t)=\omega(0)$. Time reflection is an involutive Borel transformation on $D[0,t]$. At the level of measures we define, for any probability measure $\mathbf{P}$ on $D[0,t]$, the time reflected probability measure $\mathbf{P}^*$ by $\mathbf{P}^{*}(E) = \mathbf{P}(E^{*})$. With this definition
\begin{equation}
(\mathbf{P}^{n}_{a_n,b_n,t})^* = \mathbf{P}^{n}_{b_n,a_n,t}
\end{equation}
This comes from the fact that if $s' <s$, then
\begin{equation*}
	\mathbf{P}^{n}_{a_n,b_n,t}(\omega : |Y_s-Y_{s'}| > \epsilon) = \mathbf{P}^{n}_{0,b_n-a_n,t}(\omega : |Y_{s-s'}| > \epsilon) \rightarrow 0
\end{equation*}
as $s'$ goes to $s$ from below. So $Y_{s'}$ converges to $Y_s$ in measure, but it also converges to $Y_{s^{-}}$ in measure, which shows that $Y_s=Y_{s^{-}}$ almost everywhere. Since this proves that a path is left continuous  with probability one at any given time point, the measures $(\mathbf{P}^{n}_{a_n,b_n,t})^*$ and $\mathbf{P}^{n}_{b_n,a_n,t}$ coincide on cylinder sets, hence on all Borel sets.\\
Now define $m_1^{*}$ as the same as $m_1$ except that $\omega(s)$ is replaced by $\omega(s-0)$. Then
\begin{equation*}
	A_2(\delta, \eta)^{*} = \{\omega : m_1^{*}(\omega:\delta) > \eta\}
\end{equation*}
and
\begin{equation*}
	\{\omega : m_1^{*}(\omega:\delta) > \eta\} \subset \{\omega : m_1(\omega:2\delta) > \eta\}.
\end{equation*}
This gives 
\begin{equation*}
	\mathbf{P}^{n}_{a_n,b_n,t}(A_2(\delta, \eta)) = \mathbf{P}^{n}_{b_n,a_n,t}(A_2(\delta, \eta)^{*}) \leq \mathbf{P}^{n}_{b_n,a_n,t}(A_1(2\delta, \eta)),
\end{equation*}
and hence
\begin{equation*}
	\lim_{\delta \rightarrow 0} \mathbf{P}^n_{a_n,b_n,t}(A_2(\delta,\eta)) = 0
\end{equation*}
for every $\eta > 0$, uniformly in $n$.
Thus for every $\eta > 0$,
\begin{equation*}
	\lim_{\delta \rightarrow 0} \mathbf{P}^n_{a_n,b_n,t}(A(\delta,\eta)) = 0
\end{equation*}
uniformly in $n$, and by Theorem \ref{TightnessCond}, we get that $\mathbf{P}^n_{a_n,b_n,t} \Rightarrow \mathbf{P}_{a,b,t}$, and we have proved the first part of Theorem~\ref{WeakConvergenceMeasures}.

For the second part, let $g$ be a bounded continuous function on $D[0,t]$, and consider the functions 
\begin{align*}h_n(x,y)&= \sum_{(u,v)\in X_n\times X_n}\int_{D[0,t]} g(\omega) d\mathbf{P}^n_{u,v,t}(\omega)\mathbf 1_{(u+H_n)\times(v+H_n)}(x,y)\\ h(x,y) &= \int_{D[0,t]} g(\omega) d\mathbf{P}_{x,y,t}(\omega)
\end{align*} for $(x,y)\in K\times K$. The first part of the theorem tells us that $h_n$ converges continuously to $h$. On a compact subset $A$ of $K\times K$, this implies uniform convergence on $A$.

This completes the proof of Theorem~\ref{WeakConvergenceMeasures}.

We end this section by a theorem on the support of the measures $\mathbf{P}_a^n$.
\begin{theorem}
	For each $a\in X_n$ the measure $\mathbf{P}_a^n$ gives full measure to the paths on the grid, that is,
	\begin{equation*}
		\mathbf{P}_a^n(\omega : \omega(s) \in X_n, \forall s \in [0,t]) = 1
	\end{equation*}
\end{theorem}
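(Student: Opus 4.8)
The plan is to reduce the statement ``$\omega(s)\in X_n$ for all $s\in[0,t]$'' to a statement about countably many time points, exploiting two facts: that $\mathbf{P}_a^n$ is supported on Skorokhod functions (so paths have one-sided continuity), and that the grid $X_n$, viewed as the canonical set of representatives inside $K$, is a \emph{finite}, hence closed, subset of $K$.

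First I would record the single-time-point version. By Proposition~\ref{Centsovprop} and the \v Centsov theorem preceding it, $\mathbf{P}_a^n$ is a bona fide probability measure on $D[0,t]$ whose finite-dimensional distributions are those prescribed by \eqref{cyl}; in particular $\mathbf{P}_a^n$-almost every path satisfies the Skorokhod properties (1)--(2), so that $\omega(s)=\omega(s+0)$ for $s\in[0,t)$ and $\omega(t)=\omega(t-0)$. Applying \eqref{cyl} with $N=1$, $t_1=s\in(0,t]$ and $J_1=K$, and using that $p_{s,n}$ is a probability density on $X_n$ (Lemma~\ref{positivity}), one gets $\mathbf{P}_a^n(\omega(s)\in X_n)=\sum_{b\in X_n}p_{s,n}(b-a)q^{-n}=1$.

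Next I would fix the countable set $T_0=(\mathbf{Q}\cap(0,t))\cup\{t\}\subset(0,t]$. Since a countable intersection of almost-sure events is almost sure, the event $\Gamma_n=\bigcap_{r\in T_0}\{\omega:\omega(r)\in X_n\}$ has $\mathbf{P}_a^n(\Gamma_n)=1$. The last, and only nonroutine, step is to check that inside $D[0,t]$ one has $\Gamma_n=\{\omega:\omega(s)\in X_n\text{ for all }s\in[0,t]\}$: the inclusion $\supseteq$ is trivial, and for $\subseteq$, given $\omega\in\Gamma_n$ and $s\in[0,t)$ one picks $r_k\in T_0$ with $r_k\downarrow s$ and uses right-continuity, $\omega(s)=\lim_k\omega(r_k)\in\overline{X_n}=X_n$, while the endpoint $s=t$ already lies in $T_0$ (or one invokes $\omega(t)=\omega(t-0)$). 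This simultaneously identifies the target event with the measurable set $\Gamma_n$ and shows it has full $\mathbf{P}_a^n$-measure.

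I expect this last passage from countably many times to the full interval to be the only genuine point of substance: it is not automatic, and it really uses both the one-sided continuity built into $D[0,t]$ and the closedness (indeed discreteness) of the finite grid $X_n$ in $K$ --- the analogous claim would fail if the set of admissible values were, say, a dense countable subgroup. Measurability of the target event causes no difficulty, since it coincides with $\Gamma_n$, a countable intersection of cylinder events.
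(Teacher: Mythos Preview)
Your proposal is correct and follows essentially the same route as the paper: prove full measure on a countable dense set of times (the rationals in $[0,t]$), then pass to all $s\in[0,t]$ via right-continuity of Skorokhod paths. Your version is in fact more careful than the paper's, which simply writes ``By right-continuity of the paths, the result follows'' --- you make explicit that this last step requires $X_n$ to be closed in $K$ (true since it is finite), and you note that this also settles measurability of the target event.
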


\begin{proof}
	By definition of $\mathbf{P}_a^n$, we have $\mathbf{P}_a^n(\omega(t_i) \in X_n, 1 \leq i \leq N) = 1$ for any finite set of time points $t_1,\dots,t_N$ in $[0,t]$. Now take an increasing sequence of sets of finitely many time points $F_i$ such that $\bigcup_{i=1}^{\infty} F_i = \mathbf{Q} \cap [0,t]$. Then
	\begin{equation*}
		\mathbf{P}_a^n(\omega : \omega(s) \in X_n, \forall s \in \mathbf{Q} \cap [0,t]) = \lim_{i \rightarrow \infty} \mathbf{P}_a^n(\omega : \omega(s) \in X_n, \forall s \in F_i) = 1
	\end{equation*}
	By right-continuity of the paths, the result follows.
\end{proof}

\section{\large \bf  Feynman-Kac at the Finite Level}
\subsection{The Feynman-Kac formula}
For the rest of this article we will often encounter the expression $e^{-tH}$. This is well-defined on account of the self-adjointness of $H$ (see \cite{BD15} for details).

The Feynman-Kac formula for the Hamiltonian $H$ over $K$ says
\begin{align}
	(e^{-tH}f)(x)=\int_K K_t(x,y)f(y)\,dy,\quad f\in L^2(K)\,,
\end{align}
where
\begin{align}
	K_t(x,y) &= \int_{D[0,t]} e^{-\int_0^tv(\omega(s))\, ds} \, d\mathbf{P}_{x,y,t}(\omega) \cdot p_t(y-x)\,.
\end{align}
For a proof in the real case, see, e.g., \cite[Theorem 1, Ch.\ 10]{Var94}. The same proof works over a local field.\\
We now prove that we have a Feynman-Kac formula also at the finite level.
\begin{theorem}[Feynman-Kac at the finite level]\label{f-k}
	\begin{align}\label{f-keq}\begin{split}
			(e^{-tH_n}f)(x)&=\int_{X_n} K^n_t(x,y)f(y)\,d\mu_n(y)\\
			&=q^{-n}\sum_{y\in X_n}K^n_t(x,y)f(y),\quad f\in L^2(X_n)
		\end{split}
	\end{align}
	where
	\begin{multline}\label{ktn}
		K_t^n(x,y)=\int_{D[0,t]} e^{-\int_0^tv_n(\omega(s))\, ds} \, d\mathbf{P}_{x,y,t}^n(\omega) \cdot p_{t,n}(y-x),\quad x,y\in X_n\,.
	\end{multline}
\end{theorem}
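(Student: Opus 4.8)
The plan is to derive \eqref{f-keq}--\eqref{ktn} from the Trotter product formula, using the explicit convolution form of the free propagator together with the cylinder-set description of the measures $\mathbf{P}^n_a$ and $\mathbf{P}^n_{a,b,t}$. Since $L^2(X_n)$ is finite-dimensional, $P_n^\alpha$ and $V_n$ are bounded self-adjoint operators, so the Trotter formula
\[
e^{-tH_n}=e^{-t(P_n^\alpha+V_n)}=\lim_{m\to\infty}\bigl(e^{-\frac{t}{m} P_n^\alpha}\,e^{-\frac{t}{m} V_n}\bigr)^m
\]
holds in operator norm, with no domain subtleties. By \eqref{etpanf} one has $(e^{-\frac{t}{m} P_n^\alpha}g)(x)=q^{-n}\sum_{z\in X_n}p_{t/m,n}(x-z)g(z)$, while $e^{-\frac{t}{m} V_n}$ is multiplication by $e^{-\frac{t}{m} v_n}$. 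Iterating $m$ times, I would obtain, for $f\in L^2(X_n)$ and $x\in X_n$,
\[
\bigl(e^{-\frac{t}{m} P_n^\alpha}e^{-\frac{t}{m} V_n}\bigr)^m f(x)=q^{-nm}\sum_{x_1,\dots,x_m\in X_n} p_{t/m,n}(x-x_1)\cdots p_{t/m,n}(x_{m-1}-x_m)\,e^{-\frac{t}{m}\sum_{j=1}^m v_n(x_j)}f(x_m).
\]

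Next I would relabel $y:=x_m$, pull the sum over $y$ to the front, and identify the remaining sum over $x_1,\dots,x_{m-1}$ with a conditioned expectation. Using that $p_{s,n}$ is an even function and comparing with the cylinder-set formula \eqref{cyl} for $\mathbf{P}^n_x$ at the time points $t/m,2t/m,\dots,(m-1)t/m,t$, together with the definition of $\mathbf{P}^n_{x,y,t}$ --- the conditioning being legitimate since $\mathbf{P}^n_x(\omega(t)=y)=p_{t,n}(y-x)q^{-n}>0$ by Lemma~\ref{positivity} --- the inner sum equals $p_{t,n}(y-x)\,E_{\mathbf{P}^n_{x,y,t}}\bigl[e^{-\frac{t}{m}\sum_{j=1}^{m-1}v_n(\omega(jt/m))}\bigr]$, so that
\[
\bigl(e^{-\frac{t}{m} P_n^\alpha}e^{-\frac{t}{m} V_n}\bigr)^m f(x)=q^{-n}\sum_{y\in X_n}f(y)\,e^{-\frac{t}{m} v_n(y)}\,p_{t,n}(y-x)\,E_{\mathbf{P}^n_{x,y,t}}\Bigl[e^{-\frac{t}{m}\sum_{j=1}^{m-1}v_n(\omega(jt/m))}\Bigr].
\]

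Finally I would let $m\to\infty$ term by term in this finite sum. One has $e^{-\frac{t}{m} v_n(y)}\to1$, so the only substantial point is that $E_{\mathbf{P}^n_{x,y,t}}\bigl[e^{-\frac{t}{m}\sum_{j=1}^{m-1}v_n(\omega(jt/m))}\bigr]\to\int_{D[0,t]}e^{-\int_0^t v_n(\omega(s))\,ds}\,d\mathbf{P}^n_{x,y,t}(\omega)$. Here I would invoke the support theorem at the end of Section~4: $\mathbf{P}^n_a$ gives full measure to paths valued in the finite set $X_n$, and conditioning on $\omega(t)=y$ only discards a null set, so $\mathbf{P}^n_{x,y,t}$ does too. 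A càdlàg path with values in a finite set is a step function with finitely many jumps, hence $s\mapsto v_n(\omega(s))$ is a bounded step function and the Riemann sums $\frac{t}{m}\sum_{j=1}^{m-1}v_n(\omega(jt/m))$ converge to $\int_0^t v_n(\omega(s))\,ds$ for $\mathbf{P}^n_{x,y,t}$-almost every $\omega$; since $v_n\ge 0$ all the exponentials are bounded by $1$, so the bounded convergence theorem gives the claim. Putting this together yields $\lim_m\bigl(e^{-\frac{t}{m} P_n^\alpha}e^{-\frac{t}{m} V_n}\bigr)^m f(x)=q^{-n}\sum_{y\in X_n}K^n_t(x,y)f(y)$, which by Trotter equals $(e^{-tH_n}f)(x)$.

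I expect the main obstacle to be precisely this last interchange of limit and integral: it rests on knowing that the bridge measure is carried by step-function paths, so that the Riemann sums converge pointwise, and this is the one place where genuine probabilistic input (the support theorem) enters rather than algebraic bookkeeping. The identification of the iterated Trotter sum with a conditioned expectation is routine, but some care is needed with the normalization factors $q^{-n}$ and with the evenness of $p_{s,n}$.
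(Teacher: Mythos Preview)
Your proposal is correct and follows essentially the same strategy as the paper: Trotter product formula, identification of the iterated Trotter sum with a path-space expectation via the cylinder-set description of $\mathbf{P}^n_x$, and passage to the limit using bounded convergence together with the Riemann integrability of $s\mapsto v_n(\omega(s))$ for paths in $X_n$. The only cosmetic difference is that the paper first expresses the Trotter iterate as an integral against the \emph{unconditioned} measure $\mathbf{P}^n_x$, passes to the limit there, and only afterwards decomposes according to the endpoint $\omega(t)=y$, whereas you condition on the endpoint first and then take the limit term by term; since the sum over $y\in X_n$ is finite, the two orders are equivalent.
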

\begin{proof}
	By \eqref{etpanf} we have
	\[
	e^{-t\pna} f=p_{t,n}*f\,,
	\]
	where $*$ is convolution on $X_n$:
	\begin{equation*}
		(f \ast g )(x) = \int_{X_n} f(y)g(x-y) \, d\mu_n(y) = q^{-n}\sum_{y \in X_n} f(y)g(x-y)\,.
	\end{equation*}
	This gives
	\begin{equation*}
		(e^{-t\pna/N}e^{-tV_n/N}f)(x) = \int_{X_n} p_{t/N,n}(y-x) e^{-t v_n(y)/N}f(y) \, d\mu_n(y)
	\end{equation*}
	and thus
	\begin{multline*}
		((e^{-t\pna/N}e^{-tV_n/N})^Nf)(x)\\
		= \int_{X_n^N} p_{t/N,n}(x_1-x)\cdots p_{t/N,n}(x_N-x_{N-1})\,\cdot\\\cdot e^{-t (v_n(x_1)+ \cdots + v_n(x_N))/N}f(x_N) \, d\mu_n(x_N)\cdots d\mu_n(x_1)\\
		= q^{-nN}\sum_{X_n^{N}} p_{t/N,n}(x_1-x)\cdots p_{t/N,n}(x_N-x_{N-1}) e^{-t (v_n(x_1)+ \cdots + v_n(x_N))/N}f(x_N)
	\end{multline*}
	Let $t_r = rt/N$ for $1 \leq r \leq N$. Defining $Y_{t_1,\dots,t_N}(\omega)=(\omega(t_1),\dots,\omega(t_N))$ we have
	\begin{align*}
		& \int_{D[0,t]} e^{-(t/N) \sum_{r=1}^N v_n(\omega(rt/N))} f(\omega(t)) \, d\mathbf{P}_{x}^n(\omega)\\ &= \int_{K^N} e^{-t( v_n(x_1) + \cdots + v_n(x_N))/N} f(x_N) \, d\mathbf{P}_{x}^n \circ Y_{t_1,...,t_N}^{-1}(x_1,\dots,x_N )\\
		&= q^{-nN}\sum_{X_n^{N}} p_{t/N,n}(x_1-x)\cdots p_{t/N,n}(x_N-x_{N-1}) e^{-t (v_n(x_1)+ \cdots + v_n(x_N))/N}f(x_N).
	\end{align*}
	Combining these equations we get
	\begin{equation*}
		((e^{-t\pna/N}e^{-tV_n/N})^Nf)(x) = \int_{D[0,t]} e^{-(t/N) \sum_{r=1}^N v_n(\omega(rt/N))} f(\omega(t)) \, d\mathbf{P}_{x}^n(\omega)\,.
	\end{equation*}
	Now let $N\to\infty$. By Trotter's product formula, the left hand side converges to $(e^{-tH_n}f)(x)$. The right hand side converges to $\int_{D[0,t]} e^{-\int_0^t v_n(\omega(s))\, ds} f(\omega(t)) \, d\mathbf{P}_{x}^n(\omega)$ by bounded convergence and by Riemann integrability of $v_n \circ \omega$ over $[0,t]$. This gives, for $f\in L^2(X_n)$,
	\begin{align*}
		(e^{-tH_n}f)(x) &= \int_{D[0,t]} e^{-\int_0^t v_n(\omega(s))\, ds} f(\omega(t)) \, d\mathbf{P}_{x}^n(\omega)\\
		&=\int_{X_n}\left(\int_{D[0,t]}e^{-\int_0^t v_n(\omega(s))\, ds} f(\omega(t)) \, d\mathbf{P}_{x,y,t}^n(\omega)\ptn(y-x)\right)d\mu_n(y)\\
		&=\int_{X_n}\left(\int_{D[0,t]}e^{-\int_0^t v_n(\omega(s))\, ds} \, d\mathbf{P}_{x,y,t}^n(\omega)\ptn(y-x)\right)f(y)\,d\mu_n(y)\\
		&=q^{-n}\sum_{y\in X_n}\left(\int_{D[0,t]}e^{-\int_0^t v_n(\omega(s))\, ds} \, d\mathbf{P}_{x,y,t}^n(\omega)\ptn(y-x)\right)f(y)\\
		&=q^{-n}\sum_{y\in X_n}K^n_t(x,y)f(y)\\
		&=\int_{X_n} K^n_t(x,y)f(y)\,d\mu_n(y)\,.
	\end{align*}
\end{proof}

\subsection*{Elementary properties of the propagator}
We collect here some elementary properties of the propagator $\knt$.

Let $\{e_x\}_{x\in X_n}$ be the canonical basis for $L^2(X_n)$, i.e., $e_x=q^{n/2}\one_x$. Then $\operatorname{Tr}(e^{-tH_n})=\sum_{x\in X_n}\langle e^{-tH_n}e_x,e_x\rangle$ by definition. From the Feynman-Kac formula:
\[\label{kntex}(e^{-tH_n}e_x)(y)=q^{-n}\sum_{z\in X_n}K^n_t(y,z)e_x(z)=q^{-n}K^n_t(y,x)q^{n/2}=q^{-n/2}K^n_t(y,x)\,,
\]
and so
\begin{align}\label{kntexey}
	\begin{split} 
		\langle e^{-tH_n}e_x,e_y\rangle&=\int_{X_n}(e^{-tH_n}e_x)(z)\overline{e_y(z)}\,d\mu_n(z)=\sum_{z\in X_n}(e^{-tH_n}e_x)(z)e_y(z)q^{-n}\\
		&=(e^{-tH_n}e_x)(y)q^{n/2}q^{-n}=q^{n/2}q^{-n}q^{-n/2}K^n_t(y,x)\\&=q^{-n}K^n_t(y,x)
	\end{split}
	\intertext{which gives}\label{knttr}
	\operatorname{Tr}(e^{-tH_n})&=\sum_{x\in X_n}\langle e^{-tH_n}e_x,e_x\rangle=q^{-n}\sum_{x\in X_n}K^n_t(x,x)\,.
\end{align}
Since $\langle \ethn e_x,e_y\rangle=\langle e_x,\ethn e_y\rangle=\overline{\langle \ethn e_y,e_x\rangle}$, we get
\beq\label{conjsymm}
\knt(x,y)=\overline{\knt(y,x)}.
\eeq 
Like any kernel of a semi-group, $\knt$ also satisfies
\beq\label{chainrule} K^n_{t_1+t_2}(x,y)=\int_{X_n}K^n_{t_1}(x,z)K^n_{t_2}(z,y)\,d\mu_n(z)=\sum_{z\in X_n}\qmn K^n_{t_1}(x,z)K^n_{t_2}(z,y)\,.
\eeq 
This follows from the following calculation (for any $f\in L^2(X_n)$):
\begin{multline*}
	(e^{-(t_1+t_2)H_n}f)(x)=(e^{-t_1H_n}e^{-t_2H_n}f)(x)=\sum_{z\in X_n}\qmn K^n_{t_1}(x,z)(e^{-t_2H_n}f)(z)\\
	=\sum_{z\in X_n}\qmn K^n_{t_1}(x,z)\left(\sum_{y\in X_n}\qmn K^n_{t_2}(z,y)f(y)\right)\\
	=\sum_{y\in X_n}\qmn \left(\sum_{z\in X_n}\qmn K^n_{t_1}(x,z)K^n_{t_2}(z,y)\right)f(y)=
	\sum_{y\in X_n}\qmn K^n_{t_1+t_2}(x,y)f(y)\,.
\end{multline*}
Since the last equality holds for all $f\in L^2(X_n)$, \eqref{chainrule} follows.

\section{\large \bf  Finite Approximations}

\subsection{Convergence of Traces}
In this subsection we will show -- under a mild growth condition on the potential -- that the operator $e^{-tH}$ is of trace class and that
\begin{equation}\label{convtraces}
\operatorname{Tr}(e^{-tH_n}) \rightarrow \operatorname{Tr}(e^{-tH})\text{ as $n\to\infty$}.
\end{equation}	
We have
\begin{align*}\operatorname{Tr}(e^{-tH_n})&=\sum_{x\in X_n}\langle e^{-tH_n}e_x,e_x\rangle=q^{-n}\sum_{x\in X_n}K^n_t(x,x)\text{ (from the previous section)}\\
	\label{Mercer}
	\operatorname{Tr}(e^{-tH})&=\int_{K}K_t(x,x)\,dx\text{ (Mercer's Theorem)}\,.
\end{align*} 
The convergence \eqref{convtraces} will be established by proving a suitable convergence at the level of propagators. The proof is patterned on the corresponding proof in \cite{DVV94}[Sec.\ 4].
\begin{lemma}
	\label{expUniformBound}
	There is a constant $B_t$, independent of $n$, such that
	\begin{equation*}
		\sup_{x\in X_n} K^n_t(x,x) \leq B_t\,.
	\end{equation*}
\end{lemma}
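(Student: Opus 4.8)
The key formula is $K^n_t(x,x)=\left(\int_{D[0,t]} e^{-\int_0^t v_n(\omega(s))\,ds}\,d\mathbf{P}^n_{x,x,t}(\omega)\right)p_{t,n}(0)$ from \eqref{ktn}. Since $v_n\ge 0$, the integrand $e^{-\int_0^t v_n(\omega(s))\,ds}$ is bounded by $1$, and $\mathbf{P}^n_{x,x,t}$ is a probability measure, so the bracketed factor is at most $1$. Hence it suffices to bound $p_{t,n}(0)$ uniformly in $n$ (and in $x$, but $p_{t,n}(0)$ does not depend on $x$). But that is exactly Lemma~\ref{ptunifbdd} applied at $s=t$: $\|p_{t,n}\|_\infty < B_t$ for all $n$, so in particular $p_{t,n}(0)\le B_t$. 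Therefore $\sup_{x\in X_n}K^n_t(x,x)\le B_t$, with $B_t$ independent of $n$.

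Concretely, I would write: from \eqref{ktn},
\[
K^n_t(x,x)=\left(\int_{D[0,t]} e^{-\int_0^t v_n(\omega(s))\,ds}\,d\mathbf{P}^n_{x,x,t}(\omega)\right)p_{t,n}(0)\le p_{t,n}(0)\le \|p_{t,n}\|_\infty < B_t,
\]
where the first inequality uses $v_n\ge 0$ together with $\mathbf{P}^n_{x,x,t}(D[0,t])=1$, and the last is Lemma~\ref{ptunifbdd}. Taking the supremum over $x\in X_n$ gives the claim.

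**Main obstacle.** There is essentially no obstacle here once Lemma~\ref{ptunifbdd} is in hand; the only thing to be slightly careful about is that $\mathbf{P}^n_{x,x,t}$ is indeed a probability measure (total mass one) for every $x\in X_n$ — this follows because $p_{t,n}>0$ (Lemma~\ref{positivity}) guarantees the normalizing denominator $p_{t,n}(0)q^{-n}$ in the definition of the conditioned measure is strictly positive, so the conditioned measure is well-defined and normalized. With that in place the estimate is immediate, and the constant $B_t$ is manifestly the same one produced by Lemma~\ref{ptunifbdd}, hence independent of $n$.
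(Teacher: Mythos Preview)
Your proof is correct and follows exactly the same approach as the paper, which simply states that the bound ``follows from the Feynman-Kac formula and Lemma~\ref{ptunifbdd}.'' You have merely spelled out the two obvious steps (the $v_n\ge 0$ bound on the path integral and the application of Lemma~\ref{ptunifbdd} to $p_{t,n}(0)$) that the paper leaves implicit.
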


\begin{proof}
	This follows from the Feynman-Kac formula and lemma \ref{ptunifbdd}.
\end{proof}

\begin{lemma}
	\label{uniformConvergenceDiag}$K^n_t$ converges continuously to $K_t$, i.e.,
	if $x_n\in X_n \rightarrow x\in K$ and $y_n\in X_n \rightarrow y\in K$ as $n \rightarrow \infty$, then
	\begin{equation*}
		K_t^n(x_n,y_n) \rightarrow K_t(x,y).
	\end{equation*}
	In particular, $K^n_t$ converges uniformly to $K_t$ on compact sets.
\end{lemma}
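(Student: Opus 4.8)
The plan is to factor the propagator as ``density factor'' times ``exponential functional'' and to handle each separately. Using the finite Feynman--Kac formula \eqref{ktn}, the relation $v_n=v|_{X_n}$, and the fact (support theorem at the end of Section~4) that $\mathbf{P}^n_{x_n,y_n,t}$ is carried by paths with values in the grid $X_n$, we may write
\[
K_t^n(x_n,y_n)=p_{t,n}(y_n-x_n)\,\Phi_n,\qquad
\Phi_n:=\int_{D[0,t]}e^{-\int_0^t v(\omega(s))\,ds}\,d\mathbf{P}^n_{x_n,y_n,t}(\omega),
\]
and likewise $K_t(x,y)=p_t(y-x)\,\Phi$ with $\Phi:=\int_{D[0,t]}e^{-\int_0^t v(\omega(s))\,ds}\,d\mathbf{P}_{x,y,t}(\omega)$. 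Since $y_n-x_n\to y-x$, these differences eventually lie in one fixed ball of $K$, so Lemma~\ref{uniformConvDensity} (uniform convergence $p_{t,n}\to p_t$ on compacta) together with the continuity of $p_t$ (the inverse Fourier transform of the integrable function $e^{-t|\cdot|^\alpha}$) gives $p_{t,n}(y_n-x_n)\to p_t(y-x)$. It therefore remains to prove $\Phi_n\to\Phi$.

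For that I would show that $F(\omega):=e^{-\int_0^t v(\omega(s))\,ds}$ is a \emph{bounded, continuous} functional on $D[0,t]$ and then invoke the weak convergence $\mathbf{P}^n_{x_n,y_n,t}\Rightarrow\mathbf{P}_{x,y,t}$ from Theorem~\ref{WeakConvergenceMeasures}. Boundedness, $0\le F\le1$, is immediate from $v\ge0$ (and every Skorokhod path is bounded on $[0,t]$, so the integral is finite). For continuity, let $\omega_m\to\omega$ in the Skorokhod topology. Then $\omega_m(s)\to\omega(s)$ at every continuity point of the limit path $\omega$, hence for Lebesgue-a.e.\ $s\in[0,t]$ since $\omega$ has at most countably many discontinuities; moreover a $J_1$-convergent sequence is uniformly bounded in sup norm, so there is a single compact ball $B\subset K$ containing all the values $\omega_m(s)$, on which $v$ is bounded. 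Dominated convergence then gives $\int_0^t v(\omega_m(s))\,ds\to\int_0^t v(\omega(s))\,ds$, whence $F(\omega_m)\to F(\omega)$. Consequently $\Phi_n=\int_{D[0,t]}F\,d\mathbf{P}^n_{x_n,y_n,t}\to\int_{D[0,t]}F\,d\mathbf{P}_{x,y,t}=\Phi$, and multiplying by the density factor yields $K_t^n(x_n,y_n)\to K_t(x,y)$.

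The ``in particular'' clause is then the general principle that if continuous functions $f_n$ converge continuously to $f$ on a metric space, then $f$ is continuous and $f_n\to f$ uniformly on every compact set (the usual subsequence argument); here we apply it to the $K_t^n$ regarded, via $E_n\otimes E_n$, as locally constant (hence continuous) functions on $K\times K$. The point I expect to require the most care is precisely the continuity of $F$ on the Skorokhod space, i.e.\ that the unboundedness of $v$ is harmless because a $J_1$-convergent sequence of paths stays inside a fixed compact subset of $K$ on which $v$ is bounded, so dominated convergence applies. If one prefers to sidestep that observation, one can instead truncate, $v_M:=v\wedge M$, and bound $|\Phi_n-\Phi_n^M|$ uniformly in $n$ using tightness of $\{\mathbf{P}^n_{x_n,y_n,t}\}$ (the probability that a path leaves the ball $\{v\le M\}$ is uniformly small), reducing to the bounded-continuous case already treated.
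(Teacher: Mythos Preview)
Your proposal is correct and follows essentially the same line as the paper's proof: factor $K_t^n$ as the density $p_{t,n}(y_n-x_n)$ times the path-integral $\Phi_n$, use Lemma~\ref{uniformConvDensity} for the first factor, and use the weak convergence of conditioned measures (Theorem~\ref{WeakConvergenceMeasures}) for the second. The paper's argument is terser---it simply cites Theorem~\ref{WeakConvergenceMeasures} for the convergence of $\Phi_n$---whereas you supply the detail the paper leaves implicit, namely that $F(\omega)=e^{-\int_0^t v(\omega(s))\,ds}$ is bounded and $J_1$-continuous on $D[0,t]$ (so that the definition of weak convergence applies), and you explicitly use the support theorem to pass from $v_n$ to $v$.
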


\begin{proof}
	From Lemma~\ref{uniformConvDensity}	we have that $p_{t,n}(y-x)$ converges uniformly on compacta to $p_t(y-x)$. Uniform convergence on a compact set implies continuous convergence on that set, hence $p_{t,n}(y_n-x_n)\to p_t(y-x)$. From Theorem~\ref{uniformConvergenceMeasures} we have that
	\begin{equation*}
		\int_{D[0,t]} e^{-\int_0^tv_n(\omega(s))\, ds} \, d\mathbf{P}_{x_n,y_n,t}^n(\omega) \rightarrow  \int_{D[0,t]} e^{-\int_0^tv(\omega(s))\, ds} \, d\mathbf{P}_{x,y,t}(\omega)\,.
	\end{equation*}
	The lemma now follows from the Feynman-Kac formula (Thm~\ \ref{f-k}).
\end{proof}

\begin{corollary}
	For any ball $B_m$ we have
	\begin{equation*}
		\sum_{x \in B_m\cap X_n}q^{-n} K^n_t(x,x) \rightarrow \int_{B_m} K_t(x,x) \, dx
	\end{equation*}
	when $n\to\infty.$
\end{corollary}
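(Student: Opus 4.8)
The plan is to recognize the left-hand side as the integral over $B_m$ of a step function built from the diagonal values $K^n_t(x,x)$, and then to pass to the limit by dominated convergence, using Lemma~\ref{uniformConvergenceDiag} for pointwise convergence and Lemma~\ref{expUniformBound} for an integrable majorant. First I would fix $m$ and restrict to $n$ large enough that $H_n\subset B_m$ (i.e.\ $n\ge -m$), so that $B_m$ is the disjoint union of the cosets $x+H_n$, $x\in B_m\cap X_n$, each of $\mu$-measure $q^{-n}$. Define the step function
\[
f_n \;=\; \sum_{x\in B_m\cap X_n} K^n_t(x,x)\,\one_{x+H_n}\,,
\]
which is nothing but the restriction to $B_m$ of $E_n$ applied to the function $x\mapsto K^n_t(x,x)$ on $X_n$ (cf.\ \eqref{en}). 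Since $\mu(x+H_n)=q^{-n}$ one has $\int_{B_m} f_n\,d\mu = \sum_{x\in B_m\cap X_n} q^{-n} K^n_t(x,x)$, so it suffices to prove $\int_{B_m} f_n\,d\mu\to\int_{B_m}K_t(x,x)\,dx$.

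Next I would establish pointwise convergence of $f_n$ on $B_m$. For $x\in B_m$ let $x_n$ be the unique point of $X_n$ with $x\in x_n+H_n$; then $|x-x_n|\le q^{-n}\to 0$, so $x_n\to x$, and $f_n(x)=K^n_t(x_n,x_n)$. Applying Lemma~\ref{uniformConvergenceDiag} (continuous convergence of $K^n_t$ to $K_t$) with the sequences $x_n\to x$ and $y_n=x_n\to x$ gives $f_n(x)=K^n_t(x_n,x_n)\to K_t(x,x)$ for every $x\in B_m$; in particular $x\mapsto K_t(x,x)$ is measurable on $B_m$ as a pointwise limit of step functions. Then for the majorant: by Lemma~\ref{expUniformBound} there is a constant $B_t$, independent of $n$, with $|f_n(x)|=|K^n_t(x_n,x_n)|\le B_t$ for all large $n$ and all $x\in B_m$, and $B_t\,\one_{B_m}$ is integrable since $\mu(B_m)=q^{m}<\infty$. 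The dominated convergence theorem now yields
\[
\sum_{x\in B_m\cap X_n} q^{-n} K^n_t(x,x) \;=\;\int_{B_m} f_n\,d\mu \;\longrightarrow\; \int_{B_m} K_t(x,x)\,dx\qquad(n\to\infty),
\]
which is the assertion.

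I do not expect a genuine obstacle here: the statement is essentially a Riemann-sum convergence dressed up via the coset structure. The only points needing care are the bookkeeping that $B_m$ splits into $H_n$-cosets once $n\ge -m$ (so that the sum really is the integral of the step function $f_n$ to which the two lemmas apply), and noting that the diagonal values are real by \eqref{conjsymm}, so no issue of complex integrands arises. One could also bypass dominated convergence entirely: the uniform convergence of $K^n_t$ to $K_t$ on the compact set $B_m\times B_m$ (Lemma~\ref{uniformConvergenceDiag}), together with continuity of $K_t$ there, gives $\sup_{x\in B_m}|f_n(x)-K_t(x,x)|\to 0$, whence the integrals converge because $\mu(B_m)<\infty$; I would include the DCT version as the primary argument since it uses only the continuous-convergence form of the lemma.
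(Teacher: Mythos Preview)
Your proof is correct and follows essentially the same approach as the paper: rewrite the sum as $\int_{B_m}f_n\,d\mu$ with $f_n$ the step function obtained by viewing $x\mapsto K^n_t(x,x)$ as an element of $\mathcal D_n$, and then pass to the limit. The paper's argument is a one-liner that implicitly invokes the uniform-convergence-on-compacta part of Lemma~\ref{uniformConvergenceDiag} (your alternative at the end), whereas your primary route via dominated convergence using continuous convergence together with Lemma~\ref{expUniformBound} is a perfectly good variant; either way the substance is the same.
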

\begin{proof}
	\begin{align*}
		\sum_{x \in B_m\cap X_n}q^{-n} K^n_t(x,x)=\int_{B_m} K^n_t(x,x)\,d\mu(x)\to \int_{B_m} K_t(x,x)\,d\mu(x)\,,
	\end{align*}	
	where in the second integral the function $x\to K^n_t(x,x)$ is regarded as an element of $\mathcal D_n$.	
\end{proof}	

To prove the convergence of traces \eqref{convtraces} we need to extend the previous result to the whole space $K$. For this we need some results related to L\'{e}vy's inequality. The following lemmas, which are adapted from \cite{Var80} (with $\rr$ replaced by a local field $K$), will do the job. 
\begin{lemma}[L\'{e}vy's Inequality]
	\label{LevyInequ}
	Let $Y_1,...,Y_n$ be independent random variables and let $\epsilon, \delta >0$ and $S_j = Y_1 + \cdots + Y_j$. If
	\begin{equation*}
		{\mathbf P}(|Y_r + \cdots + Y_l| \geq \delta) \leq \epsilon,
	\end{equation*}
	for all $1\leq r\leq l \leq n$, then
	\begin{equation*}
		{\mathbf P}(\sup_{1\leq j \leq n} |S_j| >2 \delta) \leq 2 \epsilon.
	\end{equation*}
\end{lemma}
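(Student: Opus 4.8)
The plan is to use the standard first-entrance-time (stopping time) decomposition together with a symmetrization argument. Set $S_j = Y_1 + \cdots + Y_j$ and let $\tau$ be the first index $j$ with $|S_j| > 2\delta$, so that $\{\sup_{1\le j\le n}|S_j| > 2\delta\} = \bigsqcup_{j=1}^n \{\tau = j\}$ is a disjoint union. The event $\{\tau = j\}$ depends only on $Y_1,\dots,Y_j$, hence is independent of $Y_{j+1} + \cdots + Y_n = S_n - S_j$. The key observation is the inclusion of events: on $\{\tau = j\}$ we have $|S_j| > 2\delta$, so by the ultrametric (or just the triangle) inequality at least one of $|S_n| \ge \delta$ or $|S_n - S_j| \ge \delta$ must hold — because if both were $< \delta$ then $|S_j| = |S_n - (S_n - S_j)| < 2\delta$ (in the non-Archimedean case one even gets $|S_j| < \delta$, but the Archimedean bound suffices). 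Here $|S_n - S_j| = |Y_{j+1} + \cdots + Y_n|$ is exactly the type of tail sum controlled by the hypothesis.

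Concretely, I would write
\[
{\mathbf P}\Bigl(\sup_{1\le j\le n}|S_j| > 2\delta\Bigr) = \sum_{j=1}^n {\mathbf P}(\tau = j) \le \sum_{j=1}^n {\mathbf P}\bigl(\{\tau = j\} \cap \{|S_n| \ge \delta\}\bigr) + \sum_{j=1}^n {\mathbf P}\bigl(\{\tau = j\} \cap \{|S_n - S_j| \ge \delta\}\bigr).
\]
The first sum telescopes to ${\mathbf P}\bigl(\{\sup_j |S_j| > 2\delta\} \cap \{|S_n|\ge\delta\}\bigr) \le {\mathbf P}(|S_n|\ge\delta) \le \epsilon$ by the hypothesis with $r=1$, $l=n$. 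For the second sum, independence of $\{\tau=j\}$ and $\{|S_n-S_j|\ge\delta\}$ gives ${\mathbf P}(\{\tau=j\}\cap\{|S_n-S_j|\ge\delta\}) = {\mathbf P}(\tau=j)\,{\mathbf P}(|Y_{j+1}+\cdots+Y_n|\ge\delta) \le {\mathbf P}(\tau=j)\cdot\epsilon$, using the hypothesis with $r = j+1$, $l = n$. Summing over $j$ (and noting $j=n$ contributes nothing there), this is at most $\epsilon \sum_j {\mathbf P}(\tau=j) \le \epsilon$. Adding the two pieces yields the bound $2\epsilon$.

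The main point requiring care — and the only genuine obstacle — is the event-inclusion step: verifying that $\{\tau = j\} \subseteq \{|S_n|\ge\delta\} \cup \{|S_n-S_j|\ge\delta\}$, which rests on the elementary deterministic fact that $|a| > 2\delta$ forces $|b|\ge\delta$ or $|a-b|\ge\delta$ for any $a,b$ in the field. Everything else is bookkeeping: the disjointness of the $\{\tau=j\}$, the independence of $\{\tau=j\}$ from the future increment sum (which holds because the stopping event is measurable with respect to $Y_1,\dots,Y_j$ and the $Y_i$ are independent), and the telescoping of the first sum. One should also handle the degenerate case where the supremum never exceeds $2\delta$ (then the left-hand side is $0$ and there is nothing to prove), so the argument is only needed on the event $\{\tau \le n\}$.
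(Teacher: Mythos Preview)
Your proof is correct and uses essentially the same first-entrance decomposition as the paper: the events $\{\tau=j\}$ are exactly the paper's $E_j$, and both arguments exploit the independence of $\{\tau=j\}$ from the tail sum $S_n-S_j$ together with the triangle-inequality inclusion $\{\tau=j\}\subseteq\{|S_n|\ge\delta\}\cup\{|S_n-S_j|\ge\delta\}$.

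There is one minor organizational difference worth noting. The paper splits $E=\{\sup_j|S_j|\ge 2\delta\}$ according to whether $|S_n|\le\delta$ or $|S_n|>\delta$, bounds the first piece by $\epsilon\,{\mathbf P}(E)$ via independence, and obtains the self-referential inequality ${\mathbf P}(E)\le \epsilon + \epsilon\,{\mathbf P}(E)$, i.e.\ ${\mathbf P}(E)\le \epsilon/(1-\epsilon)$; it then needs a short case analysis on $\epsilon\lessgtr 1/2$ to reach $2\epsilon$. Your version applies the union bound at the level of each $\{\tau=j\}$ and sums, arriving at $2\epsilon$ directly without the intermediate $\epsilon/(1-\epsilon)$ bound or case split. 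The paper's route yields the slightly sharper constant $\epsilon/(1-\epsilon)$ along the way, but for the purposes of this lemma your direct argument is cleaner and entirely sufficient.
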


\begin{proof}
	Define $E = \{\sup_{1 \leq j\leq n} |S_j| \geq 2\delta\}$, $E_1 = \{|S_1| \geq 2\delta \}$, and $E_k = \{|S_1| < 2\delta,...,|S_{k-1}| < 2\delta, |S_k| \geq 2\delta \}$, $k\geq2$. Then
	\begin{equation*}
		{\mathbf P}(E \cap (|S_n| \leq \delta)) = {\mathbf P}(\bigcup_{j=1}^n (E_j \cap (|S_n| \leq \delta))) \leq {\mathbf P}(\bigcup_{j=1}^n (E_j \cap (|S_n-S_j| \geq \delta)))
	\end{equation*}
	By independence
	\begin{equation*}
		{\mathbf P}(\bigcup_{j=1}^n (E_j \cap (|S_n-S_j| \geq \delta)))= \sum_{j=1}^n{\mathbf P}(E_j){\mathbf P}(|S_n-S_j| \geq \delta) \leq \epsilon {\mathbf P}(E)
	\end{equation*}
	Also ${\mathbf P}(E \cap (|S_n| > \delta)) \leq {\mathbf P}(|S_n| > \delta) \leq \epsilon$. Combining the two inequalities gives that
	\begin{equation*}
		{\mathbf P}(E) \leq \frac{\epsilon}{1-\epsilon}.
	\end{equation*}
	If $\epsilon < 1/2$, then ${\mathbf P}(E) \leq 1 < 2\epsilon$. If $\epsilon \geq 1/2,$ then ${\mathbf P}(E) \leq \frac{\epsilon}{1-\epsilon} \leq 2\epsilon$. Thus
	\begin{equation*}
		{\mathbf P}(\sup_{1\leq j \leq n} |S_j| >2 \delta) \leq {\mathbf P}(E) \leq 2\epsilon.
	\end{equation*}
\end{proof}

\begin{lemma}
	Let $Y_t$ be a stochastic process with independent increments. Let $I$ be a finite interval in $[0,\infty)$ and $F$ a finite set of points in $I$. Then for every $k>0$,
	\begin{equation*}
		{\mathbf P}(\sup_{s,t \in F} |Y_t-Y_s| > 4\delta) \leq \frac{2}{\delta^k} \sup_{s,t \in F} E_{\mathbf P}(|Y_t-Y_s|^k).
	\end{equation*}
\end{lemma}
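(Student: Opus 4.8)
The plan is to deduce this directly from L\'evy's inequality (the preceding lemma) together with Markov's inequality, exactly as in the real case. First I would enumerate the finite set as $F=\{t_0<t_1<\dots<t_n\}$ and pass to the increments $Y_i:=Y_{t_i}-Y_{t_{i-1}}$, $i=1,\dots,n$. Since $Y_t$ has independent increments, the random variables $Y_1,\dots,Y_n$ are independent, and the partial sums satisfy $S_j=Y_1+\dots+Y_j=Y_{t_j}-Y_{t_0}$. For $1\le r\le l\le n$ one has $Y_r+\dots+Y_l=Y_{t_l}-Y_{t_{r-1}}$, which is again a difference of process values at points of $F$, so Markov's inequality applied to $|Y_{t_l}-Y_{t_{r-1}}|^k$ gives
\[
{\mathbf P}(|Y_r+\dots+Y_l|\ge\delta)\le\frac{1}{\delta^k}E_{\mathbf P}\big(|Y_{t_l}-Y_{t_{r-1}}|^k\big)\le\frac{1}{\delta^k}\sup_{s,t\in F}E_{\mathbf P}\big(|Y_t-Y_s|^k\big)=:\epsilon.
\]

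With this choice of $\epsilon$ the hypothesis of L\'evy's inequality (Lemma~\ref{LevyInequ}) is satisfied, so that ${\mathbf P}(\sup_{1\le j\le n}|S_j|>2\delta)\le 2\epsilon$, i.e. ${\mathbf P}(\sup_{1\le j\le n}|Y_{t_j}-Y_{t_0}|>2\delta)\le 2\epsilon$.

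Finally I would reduce the two-sided supremum over $F$ to a one-sided one anchored at $t_0$: for any $s,t\in F$ the triangle inequality gives $|Y_t-Y_s|\le|Y_t-Y_{t_0}|+|Y_s-Y_{t_0}|$, hence on the event $\{\sup_{1\le j\le n}|Y_{t_j}-Y_{t_0}|\le 2\delta\}$ one has $\sup_{s,t\in F}|Y_t-Y_s|\le 4\delta$. Consequently $\{\sup_{s,t\in F}|Y_t-Y_s|>4\delta\}\subset\{\sup_{1\le j\le n}|Y_{t_j}-Y_{t_0}|>2\delta\}$, and combining with the bound from the previous paragraph gives ${\mathbf P}(\sup_{s,t\in F}|Y_t-Y_s|>4\delta)\le 2\epsilon$, which is the asserted inequality. (When $F$ consists of a single point both sides are trivial.) I do not expect a genuine obstacle here; the only points needing care are the observation that an increment over a contiguous block of the $t_i$'s is itself a difference $Y_{t_l}-Y_{t_{r-1}}$ of process values at points of $F$, and hence is controlled by the supremum on the right, together with the bookkeeping of constants — the factor $4\delta$ (rather than the sharper $2\delta$ available from the ultrametric inequality on $K$) arising solely from using the ordinary triangle inequality in the last step.
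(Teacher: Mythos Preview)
Your proposal is correct and follows essentially the same approach as the paper: enumerate $F$, form the independent increments, bound each block-sum via Markov's (Chebyshev's) inequality to get the $\epsilon$ needed for L\'evy's inequality, and then use the triangle inequality to pass from $\sup_j|Y_{t_j}-Y_{t_0}|>2\delta$ to $\sup_{s,t\in F}|Y_t-Y_s|>4\delta$. Your write-up is in fact slightly more explicit than the paper's about the last containment, and your side remark that the ultrametric on $K$ would yield $2\delta$ rather than $4\delta$ is a nice observation.
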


\begin{proof}
	Let $F$ be the $m$ time points $0 \leq t_1 < ... < t_m $. The random variables $Y_1 = Y_{t_2} - Y_{t_1}, ..., Y_{m-1} = Y_{t_m} - Y_{t_{m-1}}$ are independent. Also,
	\begin{equation*}
		|Y_r +...+Y_l| = |Y_{t'} - Y_{t''}|
	\end{equation*}
	for some $t',t'' \in F$.
	Define $\epsilon$ by
	\begin{equation*}
		\epsilon = \sup_{1\leq r \leq l \leq m-1} {\mathbf P}(|Y_r +...+Y_l| \geq \delta ).
	\end{equation*}
	By Chebyshev's inequality
	\begin{equation*}
		{\mathbf P}(|Y_r +...+Y_l| \geq \delta) = {\mathbf P}(|Y_{t'} - Y_{t''}| \geq \delta) \leq \frac{1}{\delta^k} E_{\mathbf P}(|Y_{t'}-Y_{t''}|^k)\,,
	\end{equation*}
	and so
	\begin{equation*}
		\epsilon \leq \frac{1}{\delta^k} E_{\mathbf P}(|Y_{t'}-Y_{t''}|^k)\,.
	\end{equation*}
	By Lemma \ref{LevyInequ},
	\begin{align*}
		&{\mathbf P}(\sup_{s,t \in F} |Y_t-Y_s| > 4\delta) \\&\leq {\mathbf P}(\sup_{1 \leq i\leq m} |Y_{t_i}-Y_{t_1}| > 2\delta) = {\mathbf P}(\sup_{1 \leq i\leq m-1} |Y_1 + ... + Y_i| > 2\delta) \\
		& \leq 2\epsilon \leq \sup_{s,t \in F} \frac{2}{\delta^k} E_{\mathbf P}(|Y_t-Y_s|^k)
	\end{align*}
	
\end{proof}
Returning to our measures $\mathbf P^n_x$, we have
\begin{lemma}
	\label{LevyInequ2}
	Let $I$ be a finite interval in $[0,\infty)$. Then for every $k>0$,
	\begin{equation*}
		{\mathbf P^n_x}(\sup_{s,t \in I} |Y_t-Y_s| > 4\delta) \leq \frac{2}{\delta^k} \sup_{s,t \in I} E_{\mathbf P^n_x}(|Y_t-Y_s|^k).
	\end{equation*}
\end{lemma}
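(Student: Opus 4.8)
The plan is to derive Lemma~\ref{LevyInequ2} from the previous lemma by a straightforward approximation argument, replacing the supremum over the finite set $F\subset I$ with the supremum over the whole interval $I$. First I would take an increasing sequence $(F_j)_{j\geq1}$ of finite subsets of $I$ whose union is dense in $I$; concretely one can take $F_j$ to be a set of dyadic rationals in $I$, arranged so that $F_1\subset F_2\subset\cdots$ and $\bigcup_j F_j$ is dense. For each fixed $j$, the previous lemma applied to the process $Y_t$ under $\mathbf{P}^n_x$ (which has independent increments) gives
\begin{equation*}
	\mathbf{P}^n_x(\sup_{s,t\in F_j}|Y_t-Y_s|>4\delta)\leq\frac{2}{\delta^k}\sup_{s,t\in F_j}E_{\mathbf{P}^n_x}(|Y_t-Y_s|^k)\leq\frac{2}{\delta^k}\sup_{s,t\in I}E_{\mathbf{P}^n_x}(|Y_t-Y_s|^k),
\end{equation*}
the bound on the right being uniform in $j$.

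Next I would let $j\to\infty$. The events $\{\sup_{s,t\in F_j}|Y_t-Y_s|>4\delta\}$ are increasing in $j$, so their union has $\mathbf{P}^n_x$-measure equal to $\lim_j\mathbf{P}^n_x(\sup_{s,t\in F_j}|Y_t-Y_s|>4\delta)$, which is therefore also bounded by $\frac{2}{\delta^k}\sup_{s,t\in I}E_{\mathbf{P}^n_x}(|Y_t-Y_s|^k)$. It remains to pass from $\sup_{s,t\in\bigcup_jF_j}$ to $\sup_{s,t\in I}$. Here I would invoke the right-continuity of the paths, which holds $\mathbf{P}^n_x$-almost surely since $\mathbf{P}^n_x$ is supported on the Skorokhod space $D[0,t]$ (and more precisely, as shown earlier, a path is $\mathbf{P}^n_x$-a.s.\ left-continuous at every fixed time as well): for an $\omega$ with these regularity properties, $\sup_{s,t\in I}|\omega(t)-\omega(s)|$ can be approximated from below by $\sup_{s,t\in F_j}|\omega(t)-\omega(s)|$, because any value $|\omega(t)-\omega(s)|$ with $s,t\in I$ is a limit of $|\omega(t')-\omega(s')|$ along $s',t'$ in the dense set $\bigcup_jF_j$ approaching $s,t$ from an appropriate side. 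Hence, up to a $\mathbf{P}^n_x$-null set, $\{\sup_{s,t\in I}|Y_t-Y_s|>4\delta\}\subset\bigcup_j\{\sup_{s,t\in F_j}|Y_t-Y_s|>4\delta\}$ (one should be slightly careful with the strict inequality; using $4\delta$ on the left and noting the inclusion $\{\sup_I>4\delta\}\subset\{\sup_{\cup F_j}\geq4\delta\}=\bigcup_j\{\sup_{F_j}\geq4\delta\}$, and the latter is still controlled by the same Chebyshev bound, handles this). Combining gives the claimed inequality.

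The main obstacle I anticipate is the measure-theoretic bookkeeping in the last step: one must ensure that the supremum over the uncountable index set $I$ is genuinely handled by the countable dense set, which is exactly where the $D[0,t]$ path-regularity (established for these measures earlier in the paper) is essential, and one must be mildly attentive to the distinction between $>$ and $\geq$ so that no measure is lost in the limit. Everything else is routine: monotone convergence for the increasing sequence of events, and the uniform-in-$j$ bound inherited verbatim from the preceding lemma. I would also remark that the bound is uniform in $n$ whenever the right-hand side is, which is the feature that makes this lemma useful for the tightness-type arguments needed to push the trace convergence from balls $B_m$ to all of $K$.
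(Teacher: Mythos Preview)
Your proposal is correct and follows essentially the same route as the paper: take an increasing sequence of finite subsets whose union is dense in $I$ (the paper uses $I\cap\mathbf{Q}$), apply the preceding finite-set lemma, pass to the limit via monotone convergence of the increasing events, and then use right-continuity of the Skorokhod paths to replace the supremum over the countable dense set by the supremum over all of $I$. Your write-up is in fact a bit more careful than the paper's about the $>$ versus $\geq$ bookkeeping in the last step.
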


\begin{proof}
	Let $F_n$ be an increasing sequence of finite subsets of $I$ such that $I \cap \mathbf{Q} = \bigcup F_n$. Then
	\begin{multline*}
		{\mathbf P^n_x}(\sup_{s,t \in I \cap \mathbf{Q}} |Y_t-Y_s| > 4\delta)\\ = \lim_{n \rightarrow \infty} {\mathbf P^n_x}(\sup_{s,t \in F_n} |Y_t-Y_s| > 4\delta) \leq \frac{2}{\delta^k} \sup_{s,t \in I \cap \mathbf{Q}} E_{\mathbf P^n_x}(|Y_t-Y_s|^k).
	\end{multline*}
	By right-continuity of the process,
	\begin{equation*}
		{\mathbf P^n_x}(\sup_{s,t \in I} |Y_t-Y_s| > 4\delta) \leq \sup_{s,t \in I} \frac{2}{\delta^k} E_{\mathbf P^n_x}(|Y_t-Y_s|^k)\,.
	\end{equation*}
\end{proof}

Recall that $\alpha$ is the exponent appearing in the definition of the Hamiltonian: $H=P^\alpha + V$.
\begin{proposition}
	Pick a real number $k$ with $0<k<\alpha$, and let $x\in X_n$. There exists a constant $A_k>0$, independent of $n$ and $x$, such that
	\begin{equation*}
		K^n_t(x,x) \leq A_k\cdot \left(e^{-\frac{t}{2} v^{*}(x)} + \frac{1}{|x|^{k}}\right),
	\end{equation*}
	where $v^{*}(x) = \inf_{|y| = |x|} v(y)$.
\end{proposition}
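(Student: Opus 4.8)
\emph{Proof sketch.} The plan is to start from the Feynman--Kac formula \eqref{ktn} on the diagonal,
\[
K^n_t(x,x)=p_{t,n}(0)\int_{D[0,t]}e^{-\int_0^t v_n(\omega(s))\,ds}\,d\mathbf P^n_{x,x,t}(\omega),
\]
and to split the integral according to whether or not the path leaves the sphere $\{\,|y|=|x|\,\}$ on $[0,t]$. We may assume $x\neq 0$, since for $x=0$ the asserted bound is infinite. Put $A=\{\omega:\sup_{s\in[0,t]}|\omega(s)-x|\ge|x|\}$. On the complement of $A$ the ultrametric inequality forces $|\omega(s)|=|x|$ for every $s$, so for $\mathbf P^n_{x,x,t}$-a.e.\ path (which lives on $X_n$, because $\mathbf P^n_x$ does, and the bridge is obtained from $\mathbf P^n_x$ by restriction and normalisation) one has $v_n(\omega(s))=v(\omega(s))\ge v^{*}(x)$ for all $s$, whence $e^{-\int_0^t v_n(\omega(s))\,ds}\le e^{-tv^{*}(x)}$ there; on $A$ we bound the exponential by $1$. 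Using $p_{t,n}(0)\le B_t$ from Lemma~\ref{ptunifbdd} and $e^{-tv^{*}(x)}\le e^{-\frac t2 v^{*}(x)}$, this gives
\[
K^n_t(x,x)\le B_t\,e^{-\frac t2 v^{*}(x)}+B_t\,\mathbf P^n_{x,x,t}(A),
\]
so everything reduces to the bound $\mathbf P^n_{x,x,t}(A)\le c\,|x|^{-k}$ with $c$ independent of $n$.

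To estimate the bridge probability of $A$, write $A\subseteq A_1\cup A_2$ with $A_1=\{\sup_{s\in[0,t/2]}|\omega(s)-x|\ge|x|\}$ and $A_2=\{\sup_{s\in[t/2,t]}|\omega(s)-x|\ge|x|\}$, and handle $A_1$ first. Since $A_1$ is measurable with respect to the path on $[0,t/2]$, I would decompose $\mathbf P^n_x\bigl(A_1\cap\{\omega(t)=x\}\bigr)$ at time $t/2$ and use that the increment of $Y$ over $[t/2,t]$ is independent of the past and distributed as $Y_{t/2}$ under $\mathbf P^n_0$, whose one-point masses are $p_{t/2,n}(\cdot)\,q^{-n}\le B_{t/2}\,q^{-n}$; this yields $\mathbf P^n_x\bigl(A_1\cap\{\omega(t)=x\}\bigr)\le B_{t/2}\,q^{-n}\,\mathbf P^n_x(A_1)$. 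Dividing by the normalising factor $\mathbf P^n_x(\omega(t)=x)=p_{t,n}(0)\,q^{-n}$, and using that $p_{t,n}(0)=\int_{B_n}e^{-t|\xi|^\alpha}\,d\xi$ increases with $n$ so that $p_{t,n}(0)\ge p_{t,1}(0)>0$, gives $\mathbf P^n_{x,x,t}(A_1)\le\bigl(B_{t/2}/p_{t,1}(0)\bigr)\,\mathbf P^n_x(A_1)$. Finally, by translation invariance (cf.\ \eqref{cyl}) $\mathbf P^n_x(A_1)=\mathbf P^n_0\bigl(\sup_{s\in[0,t/2]}|Y_s|\ge|x|\bigr)$, and since $Y_0=0$ a.s.\ this is at most $\mathbf P^n_0\bigl(\sup_{s,u\in[0,t/2]}|Y_s-Y_u|\ge|x|\bigr)$; L\'evy's inequality (Lemma~\ref{LevyInequ2}) applied with $4\delta<|x|$, say $\delta=|x|/(4q)$, combined with the moment bound \eqref{momentineq} in the form $E_{\mathbf P^n_0}(|Y_s-Y_u|^k)=E_{\mathbf P^n_0}(|Y_{|s-u|}|^k)\le C_k\,t^{k/\alpha}$ (valid since $0<k<\alpha$), bounds this by a constant multiple of $|x|^{-k}$, uniformly in $n$.

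For $A_2$ I would use time reversal: by the identity $(\mathbf P^n_{a,b,t})^{*}=\mathbf P^n_{b,a,t}$ proved earlier, the bridge measure $\mathbf P^n_{x,x,t}$ is invariant under $\omega\mapsto\omega^{*}$, and under this map $A_2$ is carried into $\{\omega:\sup_{s\in[0,t/2]}|\omega(s-0)-x|\ge|x|\}$, which by the same left-limit reasoning used above for the sets $A_2(\delta,\eta)$ is contained in $A_1$ up to a null set. Hence $\mathbf P^n_{x,x,t}(A_2)\le\mathbf P^n_{x,x,t}(A_1)$, so $\mathbf P^n_{x,x,t}(A)\le 2\,\mathbf P^n_{x,x,t}(A_1)\le c\,|x|^{-k}$; substituting this into the displayed inequality of the first paragraph and collecting constants into a single $A_k$ finishes the proof.

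The main obstacle is the passage from the conditioned (bridge) escape probability to the unconditioned one with all constants uniform in $n$ and $x$: this forces one to control $p_{t,n}(0)$ from \emph{both} sides -- above by Lemma~\ref{ptunifbdd} and below by its monotonicity in $n$ -- and, more importantly, to dispose of the ``late escape'' event $A_2$ cleanly, for which the time-reversal symmetry of the bridge is the right tool rather than a direct Markov decomposition (which fails because the conditioning point is not at the end of the time window of $A_2$).
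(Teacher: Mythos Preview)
Your argument is correct, but the paper's route is noticeably simpler and avoids the two extra ingredients you flag as the ``main obstacle''. The paper does \emph{not} work with the bridge measure at all: it starts from the unconditioned form
\[
q^{-n}K^n_t(x,x)=\int_{D[0,t]} e^{-\int_0^t v_n(\omega(s))\,ds}\,\mathbf 1_x(\omega(t))\,d\mathbf P^n_x(\omega)
\]
and splits according to whether $|\omega(s)|=|x|$ for all $s$ in the \emph{first half} $[0,t/2]$ only. On the ``good'' set this already yields the factor $e^{-\frac t2 v^*(x)}$; and for \emph{both} pieces the second half $[t/2,t]$ is used (via independence of increments, exactly as in \eqref{pt4nineq}) to bound the indicator $\mathbf 1_x(\omega(t))$ by $q^{-n}\sup_z p_{t/2,n}(z)$, which cancels the $q^{-n}$ in front of $K^n_t(x,x)$. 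The ``bad'' set is then just $\mathbf P^n_x(|\omega(s)-\omega(0)|\ge|x|$ for some $s\in[0,t/2])$, handled by Lemma~\ref{LevyInequ2} and \eqref{momentineq} as you do.

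The difference in economy is that the paper never needs a lower bound on $p_{t,n}(0)$ (your monotonicity argument is fine, but unnecessary), and never needs time reversal to handle $A_2$: by only constraining the path on $[0,t/2]$, there simply is no $A_2$. Your choice to constrain on all of $[0,t]$ buys you $e^{-tv^*(x)}$, which you then discard anyway; the paper's weaker constraint on $[0,t/2]$ is precisely what makes the appearance of $e^{-\frac t2 v^*(x)}$ natural rather than an artifact.
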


\begin{proof}
	Define $R_1$ and $R_2$ by
	\begin{equation*}
		R_1 = \{\omega | \omega(0)=x, |\omega(s)| = |x|, \, \forall s \in [0,t/2] \}
	\end{equation*}
	\begin{equation*}
		R_2 = \{\omega | \omega(0)=x, |\omega(s)| \neq |x|\text{ for some }s \in [0,t/2] \}
	\end{equation*}
	and
	\begin{equation*}
		I_i = \int_{R_i} e^{-\int_0^t v_n(\omega(s)) \, ds} \mathbf{1}_x(\omega(t)) \, d\mathbf{P}_{x}^n(\omega).
	\end{equation*}
	for $i=1,2$.
	Then by Feynman-Kac,
	\begin{equation*}
		q^{-n}K^n_t(x,x) = I_1 + I_2
	\end{equation*}
	For $\omega \in R_1$
	\begin{equation*}
		\int_0^t v_n(\omega(s)) \, ds \geq \int_0^{t/2} v_n(\omega(s)) \, ds \geq \frac{t}{2} v_n^{*}(x) \geq \frac{t}{2} v^{*}(x),
	\end{equation*}
	so we get 
	\begin{align*}
		I_1 &= \int_{R_1} e^{-\int_0^t v_n(\omega(s)) \, ds} \mathbf{1}_x(\omega(t)) \, d\mathbf{P}_{x}^n(\omega) \leq e^{-\frac{t}{2} v^{*}(x)} \mathbf{P}_{x}^n(R_1 \cap (\omega(t) = x)) \\
		&\leq e^{-\frac{t}{2} v^{*}(x)} \mathbf{P}_{x}^n(R_1) q^{-n} \sup_{y\in X_n} p_{t/2,n}(y) \leq A'_k q^{-n} e^{-\frac{t}{2} v^{*}(x)}.
	\end{align*}
	where  $A'_k$ is independent of $x$ and $n$, and where the next to last inequality follows from a calculation similar to that of \eqref{pt4nineq}.
	Also, by Lemma~\ref{LevyInequ2},
	\begin{align}\begin{split}
			I_2 &\leq \mathbf{P}_{x}^n(R_2 \cap (\omega(t)=x)) \leq \mathbf{P}_{x}^n(R_2) q^{-n} \sup_{y\in X_n} p_{t/2,n}(y) \\
			&\leq A''_k q^{-n} \mathbf{P}_{x}^n(|\omega(s)-\omega(0)| \geq |x| \text{ for some }s \in [0,t/2]) \\
			&\stackrel{\text{Lemma \ref{LevyInequ2}}}{\leq}  q^{-n}A'''_k \frac{2}{|x|^{k}} \sup_{u,s \in [0,t/2]} E_{\mathbf P_x^n}(|Y_u-Y_s|^k) \\
			& \leq q^{-n}A''''_k \frac{1}{|x|^{k}},
		\end{split}
	\end{align}
	where the last inequality follows from the computations in the proof of Proposition~\ref{Centsovprop}.
	Setting $A_k=\max(A'_k,A''''_k)$, this gives
	\begin{equation*}
		K^n_t(x,x) \leq A_k\cdot \left(e^{-\frac{t}{2} v^{*}(x)} + \frac{1}{|x|^{k}} \right)\,.
	\end{equation*}
\end{proof}

\begin{theorem}
	\label{ConvergenceTraces}
	Assume $\alpha>1$. If $\frac{|v(x)|}{\ln(|x|)} \rightarrow \infty$  as $|x| \rightarrow \infty$, then $e^{-tH}$ is of trace class and 
	\begin{equation*}
		\operatorname{Tr}(e^{-tH_n}) \rightarrow \operatorname{Tr}(e^{-tH})
	\end{equation*}
	as $n \rightarrow \infty$.
\end{theorem}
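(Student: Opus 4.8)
The plan is to reduce the whole statement to the two trace identities recorded at the start of Section~6, namely $\operatorname{Tr}(e^{-tH_n})=q^{-n}\sum_{x\in X_n}\knt(x,x)$ and $\operatorname{Tr}(e^{-tH})=\int_K K_t(x,x)\,dx$, and then to control everything by the diagonal bound of the Proposition preceding this theorem, the Corollary on convergence of the sums $\sum_{x\in B_m\cap X_n}q^{-n}\knt(x,x)$, and the continuous convergence $\knt\to K_t$ of Lemma~\ref{uniformConvergenceDiag}. Since $\alpha>1$, I would fix once and for all a real number $k$ with $1<k<\alpha$, so that the Proposition applies with this $k$ while simultaneously $\sum_{j\ge1}q^{j(1-k)}<\infty$. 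The first step is to transfer the diagonal bound to the infinite level: for $x\in K$, $x\ne0$, pick $x_n\in X_n$ (e.g.\ the canonical representative of $x+H_n$, which is defined for $n$ large) so that $x_n\to x$ and $|x_n|=|x|$, hence $v^{*}(x_n)=v^{*}(x)$, for all large $n$; passing to the limit in $\knt(x_n,x_n)\le A_k\bigl(e^{-\frac t2 v^{*}(x_n)}+|x_n|^{-k}\bigr)$ using Lemma~\ref{uniformConvergenceDiag} yields $K_t(x,x)\le A_k\bigl(e^{-\frac t2 v^{*}(x)}+|x|^{-k}\bigr)$ for every $x\ne0$.

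Next I would establish $\int_K K_t(x,x)\,dx<\infty$ and read off the trace-class assertion. Split $K=B_0\sqcup(K\setminus B_0)$. On $B_0$ the continuous function $x\mapsto K_t(x,x)$ is bounded and $\mu(B_0)=1$, so that part is finite. On $K\setminus B_0$ the bound from the first step reduces matters to finiteness of $\int_{|x|>1}|x|^{-k}\,dx$ and $\int_{|x|>1}e^{-\frac t2 v^{*}(x)}\,dx$. The first equals $(1-q^{-1})\sum_{j\ge1}q^{j(1-k)}$, which is finite precisely because $k>1$. For the second, the hypothesis $v(x)/\ln|x|\to\infty$ forces $v^{*}(x)/\ln|x|\to\infty$ (every point of the sphere $|y|=|x|$ obeys the growth estimate once $|x|$ is large), so choosing $M>2/t$ there is an $R$ with $v^{*}(x)\ge M\ln|x|$ for $|x|>R$; then $e^{-\frac t2 v^{*}(x)}\le|x|^{-tM/2}$ there with $tM/2>1$, and the region $1<|x|\le R$ has finite measure and bounded integrand. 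Hence $\int_K K_t(x,x)\,dx<\infty$, and since $K_t$ is a jointly continuous, positive-definite kernel (continuity of $K_t$ following from that of $p_t$ and of $(x,y)\mapsto\mathbf P_{x,y,t}$, cf.\ Theorem~\ref{WeakConvergenceMeasures}), Mercer's theorem as invoked above gives that $e^{-tH}$ is of trace class with $\operatorname{Tr}(e^{-tH})=\int_K K_t(x,x)\,dx$.

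The last step is an $\varepsilon/3$ argument. For $n\ge m\ge0$ write $\operatorname{Tr}(e^{-tH_n})=\sum_{x\in B_m\cap X_n}q^{-n}\knt(x,x)+\sum_{x\in X_n\setminus B_m}q^{-n}\knt(x,x)$. By the Corollary the first sum converges to $\int_{B_m}K_t(x,x)\,dx$ as $n\to\infty$. For the tail, the Proposition applies (no point of $X_n\setminus B_m$ equals $0$), and since the sphere $\{x\in X_n:|x|=q^j\}$ has $q^{n+j}(1-q^{-1})$ points, one gets $\sum_{x\in X_n\setminus B_m}q^{-n}\knt(x,x)\le A_k(1-q^{-1})\sum_{j>m}\bigl(q^{j}e^{-\frac t2 v^{*}_j}+q^{j(1-k)}\bigr)$ with $v^{*}_j=\inf_{|y|=q^j}v(y)$; by the very estimates of the previous step the right-hand series converges and its $m$-tail tends to $0$ as $m\to\infty$, and the bound is uniform in $n$. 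Given $\varepsilon>0$, first choose $m$ so large that this tail is $<\varepsilon/3$ for every $n$ and at the same time $\int_{K\setminus B_m}K_t(x,x)\,dx<\varepsilon/3$, then choose $N\ge m$ so that $\bigl|\sum_{x\in B_m\cap X_n}q^{-n}\knt(x,x)-\int_{B_m}K_t(x,x)\,dx\bigr|<\varepsilon/3$ for $n\ge N$; the triangle inequality then gives $|\operatorname{Tr}(e^{-tH_n})-\operatorname{Tr}(e^{-tH})|<\varepsilon$ for $n\ge N$.

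I expect the main obstacle to be exactly the uniform-in-$n$ tail estimate in the last step, since it is there that both hypotheses are consumed: $\alpha>1$ so that $k$ may be taken strictly larger than $1$, making $\sum_j q^{j(1-k)}$ summable against the geometric growth $q^j$ of the sphere masses of $X_n$, and $v(x)/\ln|x|\to\infty$ so that the factor $e^{-\frac t2 v^{*}_j}$ defeats that same growth. Everything else — transferring the bound, integrability over $K$, and the diagonal-restricted convergence — is bookkeeping built on the preceding Proposition, the Corollary, and Lemma~\ref{uniformConvergenceDiag}.
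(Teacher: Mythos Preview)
Your proof is correct and follows essentially the same route as the paper: fix $k$ with $1<k<\alpha$, use the diagonal bound of the preceding Proposition to get a tail estimate on $\sum_{|x|\ge q^m}q^{-n}\knt(x,x)$ that is uniform in $n$, combine with the Corollary on $B_m$, and invoke Mercer's theorem. You are more explicit than the paper in transferring the bound to $K_t(x,x)$ via Lemma~\ref{uniformConvergenceDiag} and in spelling out why the growth hypothesis $v(x)/\ln|x|\to\infty$ makes $\sum_j q^j e^{-\frac t2 v^*_j}$ convergent, but the argument is the same.
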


\begin{proof}
	Choose a $k$ with $1<k<\alpha$. 
	By the previous lemma,
	\begin{align}
		\sum_{|x|=q^m} K^n_t(x,x) &\leq (q-1)q^{m-1+n}\cdot A_k\cdot\left(e^{-\frac{t}{2} v^{*}(\beta^{-m})} + \frac{1}{q^{mk}} \right)\\
		&= q^n\cdot A_k\cdot(1-1/q)\left(q^m e^{-\frac{t}{2} v^{*}(\beta^{-m})} + \frac{q^m}{q^{mk}} \right).
	\end{align}
	So we have
	\begin{equation*}
		\sum_{|x|\geq q^m}q^{-n}\cdot K^n_t(x,x) \leq A_k\cdot(1-1/q) \sum_{i\geq m} \left(q^i e^{-\frac{t}{2} v^{*}(\beta^{-i})} + \frac{q^i}{q^{ik}} \right)
	\end{equation*}
	which for $k>1$ goes to $0$ as $m\rightarrow \infty$, uniformly  in $n$.
	Thus
	\begin{equation*}
		\sum_{|x|\geq q^m} q^{-n}\cdot K^n_t(x,x) \rightarrow 0
	\end{equation*}
	as $m \rightarrow \infty$, uniformly in $n$. Since
	\begin{equation*}
		\int_{|x| \geq q^m} K_t(x,x) \, dx = \int_{|x| \geq q^m} \int_{D[0,t]} e^{-\int_0^tv(\omega(s))\, ds} \, d\mathbf{P}_{x,x,t}(\omega) \cdot p_t(0) \, dx\,,
	\end{equation*}
	the same calculations as above show that the integral converges. All of this now shows that
	\begin{equation*}
		\operatorname{Tr}(e^{-tH_n}) = \sum_{x\in X_n} q^{-n}K^n_t(x,x)\to \int_{K} K_t(x,x) \, dx\,.
	\end{equation*}
	By Mercer's Theorem we have $\int_{K} K_t(x,x) \, dx=\operatorname{Tr}(e^{-tH})$, and so
	\begin{equation*}
		\operatorname{Tr}(e^{-tH_n}) \rightarrow \operatorname{Tr}(e^{-tH})\,.
	\end{equation*}
	
\end{proof}

\subsection{Convergence of Eigenvalues and Eigenfunctions}
We first wish to use the fact that
\begin{equation*}
	\operatorname{Tr}(e^{-tH_n}) \rightarrow \operatorname{Tr}(e^{-tH})
\end{equation*}
to prove that $e^{-tH_n}$ converges to $e^{-tH}$ in the trace norm.

From \cite{BD15} we know that $e^{-tH_n}\to e^{-tH}$ strongly. This immediately implies 
\begin{lemma}
	\label{ConvergenceTracesFinite}
	For any operator $L$ of finite rank we have
	\begin{equation*}
		\operatorname{Tr}(e^{-tH_n}L) \rightarrow \operatorname{Tr}(e^{-tH}L)
	\end{equation*}
	as $n \rightarrow \infty$.
\end{lemma}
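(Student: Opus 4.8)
The plan is to reduce the statement to the elementary description of the trace of a product of a bounded operator with a finite-rank operator, and then feed in the strong convergence $e^{-tH_n}\to e^{-tH}$. Since $L$ has finite rank, write it as $Lx=\sum_{j=1}^m\langle x,\phi_j\rangle\psi_j$ with $\phi_j,\psi_j\in L^2(K)$; here we regard $e^{-tH_n}$ as a bounded operator on the common space $L^2(K)$ via the identification of $L^2(X_n)$ with $\mathcal D_n$, extended by $0$ on $\mathcal D_n^\perp$. For any bounded operator $A$ on $L^2(K)$ the operator $AL=\sum_{j=1}^m\langle\,\cdot\,,\phi_j\rangle A\psi_j$ is again of finite rank, and a one-line computation against an orthonormal basis (using Parseval) gives
\begin{equation*}
\operatorname{Tr}(AL)=\sum_{j=1}^m\langle A\psi_j,\phi_j\rangle.
\end{equation*}

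Applying this identity with $A=e^{-tH_n}$ and with $A=e^{-tH}$, the assertion becomes
\begin{equation*}
\sum_{j=1}^m\langle e^{-tH_n}\psi_j,\phi_j\rangle\;\longrightarrow\;\sum_{j=1}^m\langle e^{-tH}\psi_j,\phi_j\rangle\qquad(n\to\infty).
\end{equation*}
By the strong convergence $e^{-tH_n}\to e^{-tH}$ on $L^2(K)$ (quoted from \cite{BD15}) we have $\|e^{-tH_n}\psi_j-e^{-tH}\psi_j\|\to 0$ for each of the finitely many indices $j$, hence $\langle e^{-tH_n}\psi_j,\phi_j\rangle\to\langle e^{-tH}\psi_j,\phi_j\rangle$ by the estimate $|\langle(e^{-tH_n}-e^{-tH})\psi_j,\phi_j\rangle|\le\|\phi_j\|\,\|(e^{-tH_n}-e^{-tH})\psi_j\|$; summing the finitely many terms finishes the argument. (Equivalently, one may first decompose $L$ into a finite sum of rank-one operators and use linearity of the trace together with the same estimate.)

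\textbf{Where the work is.} There is essentially no genuine obstacle: the lemma is a soft consequence of strong convergence, and the only points deserving a word of care are (i) that $e^{-tH_n}$ and $e^{-tH}$ are honest bounded operators on the single space $L^2(K)$ — indeed contractions, as $H_n,H\ge 0$ — so that $\operatorname{Tr}(AL)$ is defined and the displayed trace formula is valid; and (ii) that $L$ is \emph{fixed} of finite rank, so the limit passes through a finite sum with no uniformity or summability issue. This is precisely why the statement is restricted to finite-rank $L$ and recorded merely as an immediate consequence of $e^{-tH_n}\to e^{-tH}$ strongly; upgrading to trace-norm convergence of $e^{-tH_n}$ itself is the substantive step, and is handled separately using the trace convergence of Theorem~\ref{ConvergenceTraces}.
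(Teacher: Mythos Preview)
Your argument is correct and is precisely the unpacking of what the paper means by ``this immediately implies'': the paper records the lemma as a direct consequence of the strong convergence $e^{-tH_n}\to e^{-tH}$ from \cite{BD15} without further proof, and your rank-one decomposition plus the trace identity $\operatorname{Tr}(AL)=\sum_j\langle A\psi_j,\phi_j\rangle$ is the standard one-line justification of that implication.
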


Let $\mathcal{H}_2$ denote the Hilbert-Schmidt operators with inner product $\langle S,T \rangle = \operatorname{Tr}(T^{*}S)$ and corresponding norm $||\cdot||_2$. Also let $||T||_1 = \operatorname{Tr}(|T|)$ denote the trace norm. 

The proofs of the remaining results of this section follow the same pattern as in \cite{DVV94}, but we include them here for completeness.
\begin{theorem}
	For any $t>0$,
	\begin{equation*}
		||e^{-tH_n}-e^{-tH}||_1 \rightarrow 0
	\end{equation*}
	as $n \rightarrow \infty$.
\end{theorem}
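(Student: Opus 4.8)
The plan is to reduce trace-norm convergence to Hilbert--Schmidt convergence of the ``half-time'' semigroups $e^{-tH_n/2}$, and then to recover the trace norm by a factorization. Throughout I would regard each $e^{-sH_n}$ as an operator on $L^2(K)$ via the identification $L^2(X_n)\simeq\mathcal D_n$, so that all operators act on the same Hilbert space; I would also use that $H,H_n\ge 0$, so that every $e^{-sH}$ and $e^{-sH_n}$ is positive and self-adjoint, whence $(e^{-sH_n/2})^{*}e^{-sH_n/2}=e^{-sH_n}$.

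\emph{First step: $\|e^{-tH_n/2}-e^{-tH/2}\|_2\to 0$.} From the identity $\|e^{-tH_n/2}\|_2^2=\operatorname{Tr}\big((e^{-tH_n/2})^{*}e^{-tH_n/2}\big)=\operatorname{Tr}(e^{-tH_n})$ and likewise for $H$, Theorem~\ref{ConvergenceTraces} gives $\|e^{-tH_n/2}\|_2\to\|e^{-tH/2}\|_2$, so in particular the sequence $(e^{-tH_n/2})$ is bounded in $\mathcal H_2$. Next, using the strong convergence $e^{-tH_n/2}\to e^{-tH/2}$ (available from \cite{BD15}, the argument of Lemma~\ref{ConvergenceTracesFinite} applying verbatim with $t/2$ in place of $t$), I would obtain $\langle e^{-tH_n/2},L\rangle_2=\operatorname{Tr}(L^{*}e^{-tH_n/2})\to\operatorname{Tr}(L^{*}e^{-tH/2})=\langle e^{-tH/2},L\rangle_2$ for every finite-rank $L$, and then extend this to all $L\in\mathcal H_2$ by density of the finite-rank operators together with the uniform bound just obtained. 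Thus $e^{-tH_n/2}\rightharpoonup e^{-tH/2}$ weakly in $\mathcal H_2$; combined with convergence of the norms, this upgrades to norm convergence, since in a Hilbert space $x_n\rightharpoonup x$ and $\|x_n\|\to\|x\|$ force $\|x_n-x\|\to 0$.

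\emph{Second step: from $\mathcal H_2$ to trace norm.} Set $A_n=e^{-tH_n/2}$ and $A=e^{-tH/2}$. Then $e^{-tH_n}-e^{-tH}=A_n^2-A^2=(A_n-A)A_n+A(A_n-A)$, so the Schatten--H\"older inequality $\|ST\|_1\le\|S\|_2\|T\|_2$ yields
\[
\|e^{-tH_n}-e^{-tH}\|_1\le\|A_n-A\|_2\,\|A_n\|_2+\|A\|_2\,\|A_n-A\|_2\longrightarrow 0,
\]
because $\|A_n-A\|_2\to 0$ by the first step while $\|A_n\|_2$ stays bounded.

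The only genuinely delicate point is the first step: one must check that the $e^{-tH_n/2}$ are honestly Hilbert--Schmidt operators on $L^2(K)$ with a bound uniform in $n$ (this is exactly where Theorem~\ref{ConvergenceTraces}, hence the hypotheses $\alpha>1$ and $|v(x)|/\ln|x|\to\infty$, is used), and that the finite-rank test operators are literally the same on both sides so that Lemma~\ref{ConvergenceTracesFinite} applies. Everything after that is formal. I note that one cannot pass directly from ``$\operatorname{Tr}(e^{-tH_n})\to\operatorname{Tr}(e^{-tH})$ plus strong convergence'' to trace-norm convergence without routing through $\mathcal H_2$ (or invoking an abstract Gr\"umm-type theorem), which is why this two-stage argument is the natural one.
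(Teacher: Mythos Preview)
Your proof is correct and follows essentially the same two-stage route as the paper: first establish Hilbert--Schmidt convergence of the half-time semigroups via norm convergence (from Theorem~\ref{ConvergenceTraces}) plus weak convergence (from Lemma~\ref{ConvergenceTracesFinite} and density of finite-rank operators), then factor $e^{-tH_n}-e^{-tH}$ through the half-time operators and apply the Schatten--H\"older inequality. The only cosmetic difference is your factorization $A_n^2-A^2=(A_n-A)A_n+A(A_n-A)$, which is a bit cleaner than the paper's three-term identity $A^2-B^2=(A+B)(A-B)+(A-B)B-B(A-B)$ but serves the identical purpose.
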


\begin{proof}
	We will first prove that
	\begin{equation*}
		||e^{-tH_n}-e^{-tH}||_2 \rightarrow 0
	\end{equation*}
	as $n \rightarrow \infty$.
	This follows if
	\begin{equation*}
		||e^{-tH_n}||_2 \rightarrow ||e^{-tH}||_2
	\end{equation*}
	as $n \rightarrow \infty$, and
	\begin{equation*}
		\langle e^{-tH_n}, L \rangle \rightarrow \langle e^{-tH}, L \rangle
	\end{equation*}
	for all $L \in \mathcal{H}_2$ as $n \rightarrow \infty$.
	From Proposition \ref{ConvergenceTraces}, we get that
	\begin{equation*}
		||e^{-tH_n}||_2^2 = \operatorname{Tr}(e^{-2tH_n}) \rightarrow \operatorname{Tr}(e^{-2tH}) = ||e^{-tH}||_2^2
	\end{equation*}
	as $n \rightarrow \infty$.
	
	By Lemma \ref{ConvergenceTracesFinite},
	\begin{equation*}
		\langle e^{-tH_n}, L \rangle = \operatorname{Tr}(L^{*}e^{-tH_n}) \rightarrow \operatorname{Tr}(L^{*}e^{-tH}) = \langle e^{-tH}, L \rangle
	\end{equation*}
	for all operators $L$ of finite rank. By density of finite rank operators in $\mathcal{H}_2$, the result follows.
	
	To go from convergence in Hilbert-Schmidt norm to convergence in trace norm, we use the inequality
	\begin{equation*}
		||A^2-B^2||_1 \leq ||(A+B)||_2\cdot||(A-B)||_2 + 2||B||_2\cdot||(A-B)||_2
	\end{equation*}
	which follows from
	\begin{equation*}
		A^2-B^2 = (A+B)(A-B) + (A-B)B - B(A-B).
	\end{equation*}
	With $A=e^{-\frac{t}{2}H_n}$ and $B=e^{-\frac{t}{2}H}$, we get that
	\begin{multline*}
		||e^{-tH_n}-e^{-tH}||_1 \leq ||(e^{-\frac{t}{2}H_n}+e^{-\frac{t}{2}H})||_2
		\cdot||(e^{-\frac{t}{2}H_n}-e^{-\frac{t}{2}H})||_2\\ + 2||e^{-\frac{t}{2}H}||_2\cdot||(e^{-\frac{t}{2}H_n}-e^{-\frac{t}{2}H})||_2
	\end{multline*}
	which goes to $0$ as $n \rightarrow \infty$. This proves the theorem.
\end{proof}
Convergence in trace norm implies convergence in operator norm which gives convergence of eigenvalues and eigenfunctions (see pp. 289-290 in \cite{RS80}). Thus we have proved by stochastic methods the following result, which was the main theorem both in \cite{DVV94} and \cite{BD15} ($\sigma(\cdot)$ denotes spectrum, $\operatorname{r}(\cdot)$ denotes range projection, and $P^A$ denotes spectral measure of an operator $A$):

\begin{theorem}[Main Theorem]\label{mainthm}
	\begin{enumerate}
		\item If $J$ is a compact subset of $[0,\infty)$ with $J\cap\sigma(H)=\emptyset$, then $J\cap\sigma(H_n)=\emptyset$ for large $n$.
		\item If $\gl\in\sigma(H)$, there exists a sequence $(\gl_n)$ with $\gl_n\in\sigma(H_n)$ such that $\gl_n\to\gl$. Further, if $J$ is a compact neighborhood of an eigenvalue $\gl\in\sigma(H)$, not containing any other eigenvalues of $H$, then any sequence $\gl_n$ with $\gl_n\in\sigma(H_n)\cap J$ converges to $\gl$.
		\item Let $\gl$ and $J$ be as in (2). Then $\dim P^{H_n}(J) = \dim P^H(J)$ for large $n$, 
		and for each orthonormal basis $\{e_1,\dots,e_m\}$ for $\operatorname{r}\left(P^{H}(J)\right)$ there is, for each $n$, an orthonormal basis $\{e_1^n,\dots,e^n_m\}$ for $\operatorname{r}\left(P^{H_n}(J)\right)$ such that $\lim_{n\to\infty}e^n_i=e_i$, $i=1,\dots,m$.  
	\end{enumerate}
\end{theorem}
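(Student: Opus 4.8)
The plan is to deduce all three assertions from the trace-norm convergence $\|e^{-tH_n}-e^{-tH}\|_1\to0$ just established, following the standard perturbation-theoretic route (cf.\ \cite[pp.~289--290]{RS80}). First I would observe that trace-norm convergence forces operator-norm convergence $e^{-tH_n}\to e^{-tH}$ on $L^2(K)$, where $e^{-tH_n}$ is understood to act as $0$ on $\mathcal D_n^\perp$. Since $H_n=P_n^\alpha+V_n\ge0$ and $H=P^\alpha+V\ge0$, both spectra lie in $[0,\infty)$, and $\phi:s\mapsto e^{-ts}$ is a homeomorphism of $[0,\infty)$ onto $(0,1]$ with, by the spectral mapping theorem, $\sigma(e^{-tH_n})\cap(0,1]=\phi(\sigma(H_n))$ and similarly for $H$ (here $\sigma(H_n)$ is computed on $L^2(X_n)$; the extension by $0$ only adjoins the eigenvalue $0$ to the $L^2(K)$-spectrum). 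Because $\phi$ maps compact subsets of $[0,\infty)$ to compact subsets of $(0,1]$ bounded away from $0$, that extra eigenvalue $0$ never interferes, so it suffices to prove the analogues of (1)--(3) for the bounded self-adjoint operators $T_n:=e^{-tH_n}$ and $T:=e^{-tH}$ near points of $(0,1]$ and transport them back through $\phi^{-1}$.

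For (1) and the existence half of (2) I would use the elementary facts that, for norm-convergent self-adjoint operators, $\operatorname{dist}(\mu,\sigma(T_n))\le\|T_n-T\|$ whenever $\mu\in\sigma(T)$, and conversely that any compact set at positive distance from $\sigma(T)$ is eventually disjoint from $\sigma(T_n)$ --- both consequences of the resolvent bound $\|(z-T_n)^{-1}\|\le(\operatorname{dist}(z,\sigma(T))-\|T_n-T\|)^{-1}$. Pulled back through $\phi^{-1}$, the second fact gives (1) and the first gives $\lambda_n\in\sigma(H_n)$ with $\lambda_n\to\lambda$. For the refinement in (2): if $J$ is a compact neighbourhood of $\lambda$ with $\sigma(H)\cap J=\{\lambda\}$, then $\phi(J)$ is a neighbourhood of $e^{-t\lambda}$ meeting $\sigma(T)$ only at $e^{-t\lambda}$, and any sequence $\lambda_n\in\sigma(H_n)\cap J$ yields $\mu_n=e^{-t\lambda_n}\in\sigma(T_n)\cap\phi(J)$; every subsequential limit of $(\mu_n)$ lies in the closed set $\phi(J)$ and, by the first fact, in $\sigma(T)$, hence equals $e^{-t\lambda}$, so by compactness of $\phi(J)$ we get $\mu_n\to e^{-t\lambda}$ and $\lambda_n\to\lambda$.

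For (3) I would fix a small circle $\Gamma\subset\mathbf C$ about $e^{-t\lambda}$ enclosing $e^{-t\lambda}$ and no other point of $\sigma(T)$, with $\operatorname{dist}(\Gamma,\sigma(T))>0$; for large $n$ the resolvents $(z-T_n)^{-1}$ converge to $(z-T)^{-1}$ uniformly on $\Gamma$, so the Riesz projections $E_n=(2\pi i)^{-1}\oint_\Gamma(z-T_n)^{-1}\,dz$ converge in operator norm to $E=(2\pi i)^{-1}\oint_\Gamma(z-T)^{-1}\,dz=P^H(J)$. The step I expect to be the main --- indeed essentially the only --- obstacle is the bookkeeping required to identify $E_n$ with $P^{H_n}(J)$: one must rule out stray eigenvalues of $H_n$ entering $J$, and this is exactly where (1) feeds back in, applied now to $J$ with a small open neighbourhood of $\lambda$ removed, which shows that for large $n$ the part of $\sigma(H_n)$ inside $J$ is confined to that neighbourhood, so $\phi(\sigma(H_n)\cap J)$ sits inside $\Gamma$ and equals the part of $\sigma(T_n)$ enclosed by $\Gamma$. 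Once $E_n=P^{H_n}(J)$ is known, $\|E_n-E\|\to0$ gives $\dim P^{H_n}(J)=\dim P^H(J)$ for large $n$ (projections at distance $<1$ have equal rank), and for the eigenfunctions I would take an orthonormal basis $\{e_1,\dots,e_m\}$ of $\operatorname{ran}P^H(J)$, note that $E_ne_i\in\operatorname{ran}P^{H_n}(J)$ with $E_ne_i\to e_i$, so these vectors are linearly independent and span $\operatorname{ran}P^{H_n}(J)$ for large $n$, and apply Gram--Schmidt (continuous in linearly independent inputs) to produce orthonormal bases $\{e_1^n,\dots,e_m^n\}$ of $\operatorname{ran}P^{H_n}(J)$ with $e_i^n\to e_i$.
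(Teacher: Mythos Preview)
Your proposal is correct and follows exactly the route the paper indicates: the paper's entire proof is the sentence ``Convergence in trace norm implies convergence in operator norm which gives convergence of eigenvalues and eigenfunctions (see pp.\ 289--290 in \cite{RS80})'', and what you have written is a careful unpacking of precisely that perturbation-theoretic argument---passing to the bounded operators $e^{-tH_n}\to e^{-tH}$, using resolvent/Riesz-projection stability under norm convergence, and transporting conclusions back via $s\mapsto e^{-ts}$.
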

Finally, we are now ready to reap the benefits from using stochastic methods by showing that the eigenfunctions can be chosen to be continuous, and that they converge uniformly on compact sets.
\begin{lemma}
	\label{BoundExp}
	For each $t > 0$, there exists a constant $C=C(t)$ such that for any $h \in L^2(K)$ and any $n$,
	\begin{equation*}
		||e^{-tH_n}h||_{\infty} \leq C||h||_{L^2(K)}, \qquad ||e^{-tH}h||_{\infty} \leq C||h||_{L^2(K)}.
	\end{equation*}
\end{lemma}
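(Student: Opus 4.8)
The plan is to read the bound off the Feynman--Kac representation of the propagators, using that their diagonals are uniformly bounded. At the finite level, Theorem~\ref{f-k} gives, for $h\in L^2(X_n)$ and $x\in X_n$,
\[
(e^{-tH_n}h)(x)=\int_{X_n}K^n_t(x,y)h(y)\,d\mu_n(y),
\]
so the Cauchy--Schwarz inequality yields
\[
|(e^{-tH_n}h)(x)|\le\left(\int_{X_n}|K^n_t(x,y)|^2\,d\mu_n(y)\right)^{1/2}\|h\|_{L^2(X_n)}.
\]
The point is that the inner integral is a value of the kernel on the diagonal: by the conjugate symmetry \eqref{conjsymm} one has $K^n_t(y,x)=\overline{K^n_t(x,y)}$, and then the chain rule \eqref{chainrule} gives
\[
\int_{X_n}|K^n_t(x,y)|^2\,d\mu_n(y)=\int_{X_n}K^n_t(x,y)K^n_t(y,x)\,d\mu_n(y)=K^n_{2t}(x,x).
\]

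By Lemma~\ref{expUniformBound}, applied with $t$ replaced by $2t$, there is a constant $B_{2t}$, independent of $n$, with $\sup_{x\in X_n}K^n_{2t}(x,x)\le B_{2t}$; hence $|(e^{-tH_n}h)(x)|\le B_{2t}^{1/2}\|h\|_{L^2(X_n)}$ for every $x\in X_n$. For a general $h\in L^2(K)$, recall that $e^{-tH_n}$ is extended to $L^2(K)$ by $0$ on the orthogonal complement of $\mathcal D_n\simeq L^2(X_n)$, so $e^{-tH_n}h$ depends only on $D_nh$, and $\|D_nh\|_{L^2(X_n)}=\|D_nh\|_{L^2(K)}\le\|h\|_{L^2(K)}$; moreover $e^{-tH_n}h\in\mathcal D_n$ is locally constant, so $\|e^{-tH_n}h\|_\infty=\sup_{x\in X_n}|(e^{-tH_n}h)(x)|$. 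Combining these gives $\|e^{-tH_n}h\|_\infty\le B_{2t}^{1/2}\|h\|_{L^2(K)}$.

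The infinite-level estimate is obtained in exactly the same way from the Feynman--Kac formula for $H$: self-adjointness of $e^{-tH}$ gives $K_t(y,x)=\overline{K_t(x,y)}$, the semi-group property of the kernel gives $\int_K|K_t(x,y)|^2\,dy=K_{2t}(x,x)$, and since $v\ge0$ and $\mathbf{P}_{x,x,2t}$ is a probability measure, Feynman--Kac gives $K_{2t}(x,x)\le p_{2t}(0)=\int_K e^{-2t|\xi|^\alpha}\,d\xi<\infty$. Hence, interpreting $e^{-tH}h$ as the everywhere-defined function $x\mapsto\int_K K_t(x,y)h(y)\,dy$ (which is a.e.\ equal to $e^{-tH}h$), $\|e^{-tH}h\|_\infty\le p_{2t}(0)^{1/2}\|h\|_{L^2(K)}$, and taking $C=C(t)=\left(\max\{B_{2t},p_{2t}(0)\}\right)^{1/2}$ establishes both inequalities. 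I do not expect a genuine obstacle here: the only delicate points are the bookkeeping with the isometric embedding $L^2(X_n)\hookrightarrow L^2(K)$, which ensures that $\|\cdot\|_\infty$ and $\|\cdot\|_{L^2(K)}$ refer to the correct objects, and the use of \eqref{conjsymm} to rewrite $\int|K^n_t(x,y)|^2$ as a diagonal value of the semi-group kernel so that Lemma~\ref{expUniformBound} can be applied.
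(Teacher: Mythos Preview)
Your proof is correct and, at the finite level, is exactly the paper's argument: Cauchy--Schwarz followed by \eqref{conjsymm} and \eqref{chainrule} to identify $\int_{X_n}|K^n_t(x,y)|^2\,d\mu_n(y)=K^n_{2t}(x,x)$, and then Lemma~\ref{expUniformBound}. At the infinite level you run the same diagonal trick, bounding $K_{2t}(x,x)\le p_{2t}(0)$, whereas the paper instead uses the pointwise inequality $0\le K_t(x,y)\le p_t(y-x)$ from Feynman--Kac and then $\|p_t\|_{L^2(K)}$; since $\|p_t\|_{L^2(K)}^2=\int_K e^{-2t|\xi|^\alpha}\,d\xi=p_{2t}(0)$ by Plancherel, the two routes produce the same constant and are essentially equivalent.
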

\begin{proof}
	First note that $e^{-tH_n}f$ and $e^{-tH}f$ are continuous functions.\\
	By Feynman-Kac,
	\begin{equation*}
		0 \leq K_t(x,y) \leq p_t(y-x).
	\end{equation*}
	By \cite[Lemma~2, Sec.~4]{Var97} we know that $p_t$ is in $L^2(K)$. Thus for every $x \in K$,
	\begin{equation*}
		|e^{-tH}h(x)| = |\int_{K}K_t(x,y)h(y) \, dy| \leq \int_{K}p_t(y-x)|h(y)| \, dy \leq ||p_t||_{L^2(K)}\cdot ||h||_{L^2(K)}
	\end{equation*}
	Interpreting $h$ as a function on $X_n$, we have by the finite Feynman-Kac formula \eqref{f-keq}:
	\beqstar
	(e^{-tH_n}h)(x) =\sum_{y\in X_n}q^{-n}K^n_t(x,y) \cdot h(y),
	\eeqstar
	so ($B_t=\text{the constant from Lemma~\ref{expUniformBound}}$)
	\begin{multline*}
		|(e^{-tH_n}h)(x)|^2=|\sum_{y\in X_n}\qmn\knt(x,y)h(y)|^2\\
		\leq (\sum_{y\in X_n}q^{-n}|\knt(x,y)|^2)\cdot(\sum_{y\in X_n}\qmn|h(y)|^2)\\
		\stackrel{\eqref{conjsymm}}{=}(\sum_{y\in X_n}q^{-n}\knt(x,y)\knt(y,x))\cdot||h||^2_{L^2(X_n)}\stackrel{\eqref{chainrule}}{=}K^n_{2t}(x,x)\cdot||h||^2_{L^2(X_n)}\\
		\stackrel{\text{Lemma } \ref{expUniformBound}}{\leq} B_{2t}\cdot||h||^2_{L^2(K)}\,.
	\end{multline*}
	So with $C=\max(||p_t||_{L^2(K)},B_{2t})$, the lemma follows.
\end{proof}

\begin{lemma}
	\label{UniformCompactaExp}
	Fix $t>0$. Then for each $h \in L^2(K)$,
	\begin{equation*}
		e^{-tH_n}h \rightarrow e^{-tH}h
	\end{equation*}
	uniformly on compacta.
\end{lemma}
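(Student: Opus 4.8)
The plan is to combine the two facts already in hand: first, that $e^{-tH_n} \to e^{-tH}$ strongly on $L^2(K)$ (imported from \cite{BD15}, and used in Lemma \ref{ConvergenceTracesFinite}); and second, the uniform $L^\infty$-bound of Lemma \ref{BoundExp}, which gives a single constant $C=C(t)$ with $\|e^{-tH_n}h\|_\infty \le C\|h\|_{L^2(K)}$ and $\|e^{-tH}h\|_\infty \le C\|h\|_{L^2(K)}$ for all $n$. The strategy is a standard $3\varepsilon$-argument, using $L^2$-density of a convenient dense subspace on which convergence can be checked directly, then transferring to all of $L^2(K)$ via the uniform bound.

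First I would reduce to a dense subspace. A natural choice is $\mathcal{D} = \bigcup_m \mathcal{D}_m$ (equivalently, the locally constant functions of compact support), which is dense in $L^2(K)$. For $h \in \mathcal{D}$, say $h \in \mathcal{D}_m$, one observes that for $n \ge m$ the finite Feynman–Kac formula \eqref{f-keq} expresses $(e^{-tH_n}h)(x)$ as a finite Riemann-type sum over $X_n$, while $(e^{-tH}h)(x)$ is the corresponding integral against $K_t(x,y)$; Lemma \ref{uniformConvergenceDiag} gives that $K^n_t$ converges continuously (hence uniformly on compacta) to $K_t$, and $p_{t,n} \to p_t$ uniformly on compacta by Lemma \ref{uniformConvDensity}. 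Combining these, $(e^{-tH_n}h)(x) \to (e^{-tH}h)(x)$ uniformly for $x$ in any fixed ball $B_\ell$ — the integrand converges uniformly on the relevant compact set and $h$ is supported in a fixed ball, so the approximation of the integral by the grid sum is controlled uniformly in $x \in B_\ell$.

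Next I would bootstrap to arbitrary $h \in L^2(K)$. Given $h$ and $\varepsilon > 0$, pick $h_0 \in \mathcal{D}$ with $\|h - h_0\|_{L^2(K)} < \varepsilon/(3C)$. Then for $x$ in a fixed ball,
\begin{multline*}
|(e^{-tH_n}h)(x) - (e^{-tH}h)(x)| \le |(e^{-tH_n}(h-h_0))(x)| \\ + |(e^{-tH_n}h_0)(x) - (e^{-tH}h_0)(x)| + |(e^{-tH}(h_0-h))(x)|.
\end{multline*}
The first and third terms are each $\le C\|h - h_0\|_{L^2(K)} < \varepsilon/3$ by Lemma \ref{BoundExp}, uniformly in $x$ and $n$; the middle term is $< \varepsilon/3$ uniformly on the ball for $n$ large by the previous paragraph. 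Hence $\|e^{-tH_n}h - e^{-tH}h\|_{L^\infty(B_\ell)} \to 0$ for every $\ell$, which is precisely uniform convergence on compacta (every compact subset of $K$ lies in some $B_\ell$).

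The one point that needs a little care — and which I expect to be the main obstacle — is the uniform-in-$x$ control of the grid-sum approximation to the integral in the middle term, i.e.\ justifying that for $h_0 \in \mathcal{D}_m$ the difference between $\sum_{y \in X_n} q^{-n} K^n_t(x,y)h_0(y)$ and $\int_K K_t(x,y)h_0(y)\,dy$ is small uniformly for $x \in B_\ell$. This follows because $h_0$ has fixed compact support and is locally constant at level $q^{-m}$, so for $n \ge \max(m,\ell)$ the sum is exactly $\int_{B_m} (E_n \otimes E_n)K_t(x,y)\, h_0(y)\,dy$ up to the discrepancy between $K_t$ and its sampling $E_n K_t$ on the compact set $B_\ell \times B_m$, and the latter tends to $0$ uniformly there by Lemma \ref{uniformConvergenceDiag} together with the uniform continuity of $K_t$ on compacta; no blow-up can occur because the domain of integration is a fixed ball.
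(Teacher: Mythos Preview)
Your proposal is correct and follows essentially the same route as the paper: first establish the result for Schwartz--Bruhat functions (your $\mathcal{D}=\bigcup_m\mathcal{D}_m$) using the uniform-on-compacta convergence of the propagators $K_t^n\to K_t$ from Lemma~\ref{uniformConvergenceDiag}, then extend to all of $L^2(K)$ via the uniform $L^\infty$-bound of Lemma~\ref{BoundExp}. The paper's version is terser (it simply says ``the general result follows from Lemma~\ref{BoundExp}'' rather than writing out the $3\varepsilon$ argument), but the content is the same; note that the strong convergence from \cite{BD15} you mention at the outset is not actually needed or used in your argument.
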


\begin{proof}
	We will prove it for a Schwartz-Bruhat function $h$, and then the general result follows from Lemma \ref{BoundExp}. Let $J$ be the union of a finite set of balls which cover the support of $h$. We have $K_t^n \rightarrow K_t$ uniformly on compacta in $K \times K$ (Lemma~\ref{uniformConvergenceDiag}). As $h$ is Schwartz-Bruhat, $D_nh = h$ for $n$ sufficiently large. Thus $K_t^nh \rightarrow K_th$ uniformly on compacta. Let $L$ be a compact set. Then for $x \in L$
	\begin{equation*}
		|e^{-tH_n}h(x)- e^{-tH}h(x)| \leq \int_J |K_t^n(x,y)h(y) - K_t(x,y)h(y)| \, dy  \rightarrow 0 
	\end{equation*}
	as $n\to \infty$, uniformly in $x$.
\end{proof}

\begin{theorem}[Uniform Convergence on Compacta for Eigenfunctions]
	Let $f_{n,j}$ and $f_j$ be eigenfunctions of $H_n$ and $H$ corresponding to the eigenvalues $\lambda_{n,j}$ and $\lambda_j$ respectively. Assume that $\lambda_{n,j}$ converges to $\lambda_j$ and that $f_{n,j}$ converges to $f_j$ in $L^2(K)$. Then
	\begin{equation*}
		f_{n,j} \rightarrow f_j\quad\text{as $n\to\infty$}
	\end{equation*}
	uniformly on compacta.
\end{theorem}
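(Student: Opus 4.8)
The plan is to exploit the eigenvalue equations to turn $L^2$-convergence into uniform convergence on compacta, using the smoothing property of the semigroup operators established in Lemma~\ref{BoundExp} and Lemma~\ref{UniformCompactaExp}. The starting observation is that if $H_nf_{n,j}=\lambda_{n,j}f_{n,j}$, then $e^{-tH_n}f_{n,j}=e^{-t\lambda_{n,j}}f_{n,j}$, and similarly $e^{-tH}f_j=e^{-t\lambda_j}f_j$. Hence
\begin{equation*}
f_{n,j}=e^{t\lambda_{n,j}}\,e^{-tH_n}f_{n,j},\qquad f_j=e^{t\lambda_j}\,e^{-tH}f_j,
\end{equation*}
so it suffices to show that $e^{-tH_n}f_{n,j}\to e^{-tH}f_j$ uniformly on compacta (the scalar factors $e^{t\lambda_{n,j}}\to e^{t\lambda_j}$ are harmless).

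Next I would split the difference into two pieces:
\begin{equation*}
e^{-tH_n}f_{n,j}-e^{-tH}f_j=e^{-tH_n}(f_{n,j}-f_j)+\bigl(e^{-tH_n}f_j-e^{-tH}f_j\bigr).
\end{equation*}
For the first term, apply Lemma~\ref{BoundExp}: $\|e^{-tH_n}(f_{n,j}-f_j)\|_\infty\le C\|f_{n,j}-f_j\|_{L^2(K)}\to 0$ by hypothesis, and this bound is uniform in $n$, so this term goes to zero uniformly on all of $K$ (in particular on any compact set). For the second term, apply Lemma~\ref{UniformCompactaExp} with the fixed function $h=f_j\in L^2(K)$: $e^{-tH_n}f_j\to e^{-tH}f_j$ uniformly on compacta. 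Combining, $e^{-tH_n}f_{n,j}\to e^{-tH}f_j$ uniformly on compacta, and multiplying by the convergent scalars $e^{t\lambda_{n,j}}$ (which are uniformly bounded for large $n$, since $\lambda_{n,j}\to\lambda_j$) yields $f_{n,j}\to f_j$ uniformly on compacta.

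I do not expect a serious obstacle here: the essential analytic work has already been done in Lemma~\ref{BoundExp} (the $L^2$-to-$L^\infty$ smoothing bound, with constant uniform in $n$) and Lemma~\ref{UniformCompactaExp} (uniform-on-compacta convergence of $e^{-tH_n}h$ to $e^{-tH}h$). The only point requiring a word of care is the legitimacy of writing $f_{n,j}=e^{t\lambda_{n,j}}e^{-tH_n}f_{n,j}$, which is immediate from the spectral theorem since $f_{n,j}$ is an eigenvector; and the observation that $e^{-tH_n}f_{n,j}$ (hence $f_{n,j}$) is automatically continuous, as noted in the proof of Lemma~\ref{BoundExp}, so the limiting statement about uniform convergence on compacta is meaningful.
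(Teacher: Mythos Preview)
Your proof is correct and follows essentially the same route as the paper: the same two-term splitting of $e^{-tH_n}f_{n,j}-e^{-tH}f_j$, handled by Lemma~\ref{BoundExp} and Lemma~\ref{UniformCompactaExp} respectively, followed by the observation that the scalar factors $e^{-t\lambda_{n,j}}$ converge to $e^{-t\lambda_j}$.
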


\begin{proof}
	We will first prove that
	\begin{equation*}
		e^{-tH_n}f_{n,j} \rightarrow e^{-tH}f_j\quad\text{as $n\to\infty$}
	\end{equation*}
	uniformly on compacta.
	Let $M$ be a compact set. We have
	\begin{multline*}
		||e^{-tH_n}f_{n,j} -e^{-tH}f_j||_{L^{\infty}(M)}\\ \leq ||e^{-tH_n}f_{n,j} -e^{-tH_n}f_j||_{L^{\infty}(M)} + ||e^{-tH_n}f_j -e^{-tH}f_j||_{L^{\infty}(M)}
	\end{multline*}
	This goes to $0$ by Lemma \ref{BoundExp} and  \ref{UniformCompactaExp}.
	
	Now we know that, as $n\to\infty$,
	\begin{equation*}
		e^{-t\lambda_{n,j}}f_{n,j} = e^{-tH_n}f_{n,j} \rightarrow e^{-tH}f_j = e^{-t\lambda_j}f_j
	\end{equation*}
	uniformly on compacta. Since $e^{-t\lambda_{n,j}} \rightarrow e^{-t\lambda_j}$, the result follows.
\end{proof}

\section*{Acknowledgment}
The two first named authors would like to thank the UCLA Math Department -- and Professor Varadarajan in particular -- for the hospitality afforded to them during their stay in Fall 2014/Winter 2015.

\providecommand{\bysame}{\leavevmode\hbox to3em{\hrulefill}\thinspace}
\providecommand{\MR}{\relax\ifhmode\unskip\space\fi MR }
\providecommand{\MRhref}[2]{%
	\href{http://www.ams.org/mathscinet-getitem?mr=#1}{#2}
}
\providecommand{\href}[2]{#2}

\end{document}